\DeclareMathOperator{\supp}{supp}
\theoremstyle{definition}
\theoremstyle{plain}
\newtheorem{lem}{Lemma}
\newtheorem{prop}{Proposition}
\newtheorem{thm}{Theorem}
\newtheorem{assumption}{Assumption}
\newcommand{\abs}[1]{\left\lvert#1\right\rvert}
\newcommand{\norm}[1]{\left\lVert#1\right\rVert}
\newcommand{\lonenorm}[1]{\lVert#1\rVert_1}
\newcommand{\ltwonorm}[1]{\lVert#1\rVert_2}
\newcommand{\linfinity}[1]{\lVert#1\rVert_{\infty}}
\def \RR {\mathbb{R}}
\begin{document}

\title{Efficient Inference of Spatially-varying Gaussian Markov Random Fields with Applications in Gene Regulatory Networks\thanks{This research is supported, in part, by NSF Award DMS-2152776, ONR Award N00014-22-1-2127, NIH Award R37, MICDE Catalyst Grant, MIDAS PODS grant and “Startup funding from the University of Michigan”.
\textit{The first two authors contributed equally.}}}
\author{
  Visweswaran Ravikumar\thanks{Computational Medicine and Bioinformatics, University of Michigan} \and
  Tong Xu\thanks{Quantitative Finance and Risk Management, University of Michigan} \and
  Wajd N. Al-Holou \thanks{ Neurosurgery, University of Michigan}
  \and 
  Salar Fattahi\thanks{Industrial and Operations Engineering, University of Michigan, (co-corresponding author, \texttt{fattahi@umich.edu})} \and 
  Arvind Rao\thanks{Computational Medicine and Bioinformatics, University of Michigan, (co-corresponding author, \texttt{ukarvind@med.umich.edu})}
}
	
	\maketitle

\begin{abstract}
In this paper, we study the problem of inferring spatially-varying Gaussian Markov random fields (SV-GMRF) where the goal is to learn a network of sparse, context-specific GMRFs representing network relationships between genes. An important application of SV-GMRFs is in inference of gene regulatory networks from spatially-resolved transcriptomics datasets. The current work on inference of SV-GMRFs are based on the regularized maximum likelihood estimation (MLE) and suffer from overwhelmingly high computational cost due to their highly nonlinear nature. To alleviate this challenge, we propose a simple and efficient optimization problem in lieu of MLE that comes equipped with strong statistical and computational guarantees. Our proposed optimization problem is extremely efficient in practice: we can solve instances of SV-GMRFs with more than 2 million variables in less than 2 minutes. We apply the developed framework to study how gene regulatory networks in Glioblastoma are spatially rewired within tissue, and identify prominent activity of the transcription factor HES4 and ribosomal proteins as characterizing the gene expression network in the tumor peri-vascular niche that is known to harbor treatment resistant stem cells.
\end{abstract}

\section{Introduction}

The advent of high throughput sequencing technologies has transformed our understanding of biological systems, and catalyzed the adoption of a systems-level approach to studying biological processes . Networks have emerged as the intuitive framework for reasoning about complex biological systems \cite{barabasi2004network, zhang2014network}. Nodes in the network represent individual components, and edges represent direct interactions between them. For example, gene regualtory networks (GRNs) represent the wiring diagram of the cell's information processing system, with network edges identifying regulatory interactions between different genes. It has become clear that complex diseases like cancer must be understood at the level of this interactome, rather than the classical reductionist approach of studying individual components \cite{cheng2012understanding, du2015cancer}. As another example, with billions of neurons and hundreds of thousands of voxels, the human brain is considered as one of the most complex physiological networks, whose structure remains as a long-standing mystery~\cite{rubinov2010complex, huang2010learning, liu2013time, narayan2015two, kim2015highly}. The accurate inference of the brain connectivity network will have a far-reaching impact on understanding different neurological disorders~\cite{du2018classification, bassett2009human, menon2011large}. 
According to the NIH's \textit{BRAIN Initiative}, the development of ``{faster}, {less expensive}, and {scalable}'' technologies is the cornerstone for anatomic reconstruction of neural circuits at realistic scales~\cite{bargmann2014brain}.

Spatially resolved transcriptomics have emerged as a transformative technology in the recent past with immense potential to bolster our understanding of biology ant  a tissue architecture level\cite{marx2021method, lein2017promise, rao2021exploring}. Depending on the technology used, we can measure gene expression profile at near single cell resolution at the transcriptome-scale in situ \cite{asp2020spatially, moses2022museum}. In studying complex processes such as tumor growth, viewing cancer as a case of evolution within the tissue  has provided the groundwork for building a comprehensive theoretical framework to understand tumor diversity \cite{merlo2006cancer}. Evolutionary trade-offs between proliferation and survival strategies amongst cancer cells are driven by spatial gradients in exposure to nutrients, oxygen, immune cells and environmental toxicity between the tumor core versus periphery \cite{carmona2013emergence, carmona2017metabolic, hausser2020tumour}. The need to optimize growth of the tumor through evolution of the hallmark traits \cite{hanahan2011hallmarks} leads individual cancer cells to adopt a continuum of transcriptional states, that maximize their performance given spatially-imposed metabolic and survival constraints \cite{barkley2021cancer, groves2021cancer, becker2021single}. There are therefore strong spatial trends in the gene expression profiles and the underlying regulatory networks even amongst differentiated cells of the same type in both homeostasis and diseased states. Being able to infer these dynamic regulatory networks would provide us with a new lens for understanding complex biological processes, and can lead to new hypotheses regarding molecular mechanisms that would inspire further experimental and theoretical investigations into the nature of regulatory interactions underlying disease states.

One popular approach to model these problems is based on \textit{spatially-varying Markov Random Fields} (SV-MRFs). SV-MRFs are associated with a network of undirected \textit{Markov graphs} {$\mathcal{G}_{k}(V_{k},E_{k})$, where $V_{k}$ and $E_{k}$ are the set of nodes and edges in the graph at location $k$}. The node set $V_{k}$ represents the random variables (e.g. genes) in the model, while the edge set $E_{k}$ captures the conditional dependency between these variables at location $k$. In the special case of Gaussian Markov Random Fields (GMRFs), the edge set of the Markov graphs can be fully characterized based on the inverse covariance matrix (also known as the precision matrix). In particular, if the entry $(i,j)$ of the precision matrix $\Theta_k$ is zero, then the variables $i$ and $j$ at location $k$ are independent conditioned on the remaining variables.

A widely-used method for the inference of SV-MRFs is based on the so-called regularized maximum-likelihood estimation (MLE). Intuitively, MLE seeks to find a graphical model based on which the observed data is most likely to occur. However, MLE-based methods suffer from major computational challenges that undermine their applicability in large-scale settings. For example, in the Gaussian setting, the MLE requires optimizing over the so-called log-determinant of the precision matrix, which are  known to be intractable in large scales \cite{boyd2004convex, hallac2017network, fattahi2021scalable}. This drawback is further compounded in the spatially-varying regime, where the precision matrix must be estimated at each spatial location, leading to a dramatic increase in the size of the problem. \vspace{2mm}

\subsection{Our Contributions}
To address the aforementioned challenges, we propose a \textit{simple} estimation methods for the inference of spatially-varying GMRFs. Unlike MLE-based methods, our proposed approach is based on a class of simple and computationally efficient optimization methods that come equipped with strong statistical guarantees and are implementable in realistic scales. Our contributions are summarized as follows:\vspace{2mm}

\noindent\textbf{Computational guarantee:} Our proposed method reduces to a series of decomposable convex quadratic optimization problems that can be solved efficiently using any off-the-shelf solvers. In addition, the decomposable nature of the proposed optimization problem makes it amenable to parallel and distributed implementation.\vspace{2mm}

\noindent \textbf{Statistical guarantees: } In addition to the desirable computational guarantees, we show the statistical consistency of our proposed method---both theoretically and in practice. In particular, we characterize the non-asymptotic consistency of our proposed method, proving that it accurately recovers the underlying graphical model, even in the high-dimensional settings where number of available samples is significantly smaller than the number of unknown parameters. Moreover, it can efficiently reveal the correct sparsity information in the parameters and their differences.\vspace{2mm}  

	\noindent\textbf{Application in inferring gene regulatory networks:} Glioblastoma (GBM) is an incurable malignancy of the brain, with a median survival time of only 12-18 months despite therapy with surgical resection, chemotherapy and radiation \cite{hanif2017glioblastoma}. Despite aggressive treatment, these tumors inevitably recur and this recurrence is likely due to significant heterogeneity, which has been highlighted by single cell sequencing studies  \cite{yabo2021cancer}. Heterogenous populations of treatment-resistant tumor cells with stem cell properties have been identified in GBM that have been shown to drive treatment recurrence. Furthermore, these resistant cells often reside within unique microenvironmental niches \cite{al2021subclonal, hambardzumyan2015glioblastoma, osuka2017overcoming, lathia2011deadly}. The consequence of spatial context in regulating the tumor cell state, stemness properties, and treatment resistance in these tumors is increasingly appreciated \cite{kumar2019intra, jung2021tumor}. It is thus imperative that we understand how the gene networks of GBM cells are rewired as a function of their spatial environment, to identify context-specific upstream regulators of heterogenous tumor cell states. We thus employ our developed statistical framework to study how gene regulatory networks are spatially rewired in GBM. 
	
	 We partition the tumor section into distinct micro-environmental niches and estimate networks involving genes showing significant spatial trends in their activity. We identify Transcription factors and hub genes that control tumor behavior in distinct local environments. We find that the perivascular tumor niche is characterized by high levels of activity of ribosomal genes and that HES4 is a prominant upstream regulator in this environment. Our findings have been previously reported to be particularly important aspects of the stemness features of Glioblastomas in recent literature \cite{bazzoni2019role, shirakawa2020ribosomal}. We feel that our ability to define context specific upstream regulators of tumor states is an important step in fighting tumor recurrence and developing targeted therapies for this disease. 
	
\subsection{Notations} The $i$-th element of a vector $v$ or $v_t$ is denoted as $v_i$ or $v_{t;i}$. For a matrix $M$, the notations $M_{i:}$ and $M_{:j}$ denote the $i$-th row and $j$-th column, respectively. Moreover, for an index set $S$ and a matrix $M$, the notations $M_{S:}$ and $M_{:S}$ refer to a submatrix of $M$ with rows and columns indexed by $M$, respectively. For a matrix $M$ or a vector $v$, the notations $\|M\|_{\ell_q}$ and $\|v\|_{q}$ correspond to the element-wise $\ell_q$-norm of $M$ and $\ell_q$-norm of $v$, respectively. Moreover, $\|M\|_q$ and $\|M\|_{\max}$ are the induced $q$-norm and the element with the largest absolute value of the matrix $M$, respectively. Moreover, $\|M\|_0$ denotes the total number of nonzero elements in $M$. We use $M\succ0$ to show that $M$ is positive definite. For a vector $v$ and matrix $M$, the notations $\operatorname{supp}(v)$ and $\operatorname{supp}(M)$ are defined as the index sets of their nonzero elements. Given two sequences $f(n)$ and $g(n)$ indexed by $n$, the notation $f(n) \lesssim g(n)$ implies $f(n) \leq C g(n)$ for some constant $C<\infty$. Moreover, the notation $f(n) \asymp g(n)$ implies that $f(n) \lesssim g(n)$ and $g(n) \lesssim f(n)$. The sign function $\mathrm{sign}(\cdot)$ is defined as $\mathrm{sign}(x) = x/|x|$ if $x\not=0$ and $\mathrm{sign}(0) = 0$. Accordingly, when $x$ is a vector, the function $\mathrm{sign}(x)$ is defined as $\begin{bmatrix}\mathrm{sign}(x_1) & \mathrm{sign}(x_2) & \dots & \mathrm{sign}(x_n)\end{bmatrix}^\top$.
\vspace{2mm}

\noindent {\bf Organization.} The rest of the paper is organized as follows. In Section~\ref{sec:problem}, we formulate the inference of spatially-varying GMRFs and discuss the shortcomings of the existing techniques. Motivated by these shortcomings, we present a new formulation of the problem in Section~\ref{sec:proposed}. The related work is presented in Section~\ref{sec_relatedwork}. In Section~\ref{sec:main}, we delineate the statistical guarantees of our proposed formulation, and how to solve it efficiently. Finally, we showcase the performance of our proposed method on synthetically generated as well as the Glioblastoma spatial transcriptomics dataset in Section~\ref{sec:numerical}.

\section{Problem Formulation}\label{sec:problem}

Consider data samples from $K$ different Gaussian distributions with covariance matrices $\Sigma_{k}^\star \in \mathbb{S}_{+}^d,  k=1,...,K$ and sparse precision matrices $\Theta^\star_{k} = {\Sigma^\star_k}^{-1},  k=1,...,K$. Let $\{x^{k}_i\}_{i=1}^{n_k}$ be $n_k$ independent samples drawn from the $k$-th distribution, i.e., $x_i^{k} \sim \mathcal{N}(0, \Sigma^\star_k)$, for every $i = 1,\dots, n_k$ and $k=1,\dots,K$. Our goal is to estimate the precision matrices $\{\Theta^\star_{k}\}_{k=1}^K$ given the samples. The most commonly-used method to perform this task is via \textit{maximum likelihood estimation} (MLE) with an $\ell_1$ regularizer (also known as \textit{Graphical Lasso}~\cite{friedman2008sparse}):
\begin{align*}
    \widehat{\Theta}_k = \arg\min_{\Theta_k\succ 0} \mathrm{Tr}(\Theta_k\widehat{\Sigma}_k)-\log\det(\Theta_k)+\lambda\|\Theta_k\|_{\ell_1}
\end{align*}
where $\mathrm{Tr}(\cdot)$ is the trace operator and $\widehat{\Sigma}_{k}\coloneqq \frac{1}{n_{k}} \sum_{i=1}^{n_{k}} x_{i}^{k} x_{i}^{k^{\top}}$ is the sample covariance matrix for distribution $k$. 
A major drawback of the above estimation method is that it ignores any common structure among different distributions. To address this issue, a common approach is to consider a joint estimation method (also known as \textit{joint Graphical Lasso}~\cite{danaher2012joint}):
\begin{align}\label{eq_joint_GL}
    \{\widehat{\Theta}_k\} \!=\! \arg\min_{\Theta_k\succ 0}& \sum_{k=1}^K\left(\!\mathrm{Tr}(\Theta_k\widehat{\Sigma}_k)\!-\!\log\det(\Theta_k)\!+\!\lambda\|\Theta_k\|_{\ell_1}\!\right)+ P\left(\{\Theta_k\}_{k=1}^K\right)
\end{align}
where the term $P\left(\{\Theta_k\}_{k=1}^K\right)$ is a penalty function that encourages similarity across different precision matrices.
A major difficulty in solving joint Graphical Lasso is its computational complexity: in order to obtain an $\epsilon$-accurate solution, typical numerical solvers for~\eqref{eq_joint_GL} have complexity ranging from $\mathcal{O}(Kd^6\log(1/\epsilon))$ (via general interior-point methods)~\cite{mohan2014node, potra2000interior} to $\mathcal{O}(Kd^3/\epsilon)$ (via tailored first-order methods, such as ADMM)~\cite{hallac2017network, danaher2012joint, Jing16}. Solvers with such computational complexity fall short of any practical use in the large-scale settings. Indeed, the prohibitive worst-case complexity of methods based on Graphical Lasso is also exemplified in their practical performance~\cite{fattahi2019graphical, zhang2018large, fattahi2021scalable, fattahi2018sparse, fattahi2019linear, fattahi2018closed}.

\section{Proposed Method}\label{sec:proposed}

To address the aforementioned issues, we propose the following surrogate optimization problem for estimating sparse precision matrices:

\begin{equation}\tag{Elem-$q$}
	\begin{aligned} 
	\{\widehat{\Theta}_k\} = \arg\min_{\Theta_k} \ &  \underbrace{\sum_{k=1}^K\|\Theta_k - \tilde{F}^*(\widehat{\Sigma}_{k})\|_{\ell_{2}}^2}_{\text{backward mapping deviation}} + \underbrace{\mu\sum_{k=1}^K \|{\Theta_k}\|_{\ell_1}}_{\text{absolute regularization}}+\underbrace{\gamma \sum_{l>k}W_{kl}\norm{\Theta_{k}-\Theta_{l}}_{\ell_q}^q}_{\text{spatial regularization}}\\
    \end{aligned}
	\label{problem:lq}
\end{equation}
In the above optimization, the \textit{backward mapping deviation} captures the distance between the estimated precision matrix and the so-called \textit{approximate backward mapping} which will be described in Section~\ref{subset_ABM}. Moreover, the \textit{absolute regularization} promotes sparsity in the estimated parameters, whereas \textit{spatial regularization} encourages common spatial similarities among different parameters. For any given pair $(k,l)$, the weight $W^{-1}_{kl}$ can be interpreted as the ``distance'' between the $k$-th and $l$-th MRFs. Accordingly, a large value for $W_{kl}$ encourages similarity between $\Theta_k$ and $\Theta_l$. Two common choices of spatial similarities are sparsity and smoothness:

\begin{itemize}
\item {\it Smoothly-changing GMRF:} In smoothly-changing GMRFs, the adjacent precision matrices vary gradually. In this setting, $q=2$ can be used as the spatial regularizer in~\ref{problem:lq} to promote the smoothness in the parameter differences.
    \item {\it Sparsely-changing GMRF:} In sparsely-changing GMRFs, the adjacent precision matrices differ only in a few entries. In this setting, $q=1$ is a natural choice for the spatial regularizer in~\ref{problem:lq} since it promotes sparsity in the parameter differences.
\end{itemize}

\subsection{Approximate Backward Mapping}\label{subset_ABM}

Our proposed optimization problem is contingent upon the availability of an approximate backward mapping. For a GMRF, the backward mapping is defined as the inverse of the true covariance matrix, i.e., $F^*(\Sigma^\star_k) = {\Sigma^\star_k}^{-1} = \Theta^\star_k$~\cite{wainwright2008graphical}. Based on this definition, a natural surrogate for the backward mapping is ${{F}}^*({{\widehat{\Sigma}}_k}) = {{\widehat{\Sigma}}_k}^{-1}$, where ${\widehat{\Sigma}}_k$ is the sample covariance matrix for distribution $k$. However, in the high-dimensional settings, the number of available samples is significantly smaller than the dimension, and as a result the sample covariance matrix ${\widehat{\Sigma}}_k$ is singular and non-invertible. To alleviate this issue,~\citet{Yang14} introduce an approximation of the backward mapping based on soft-thresholding. Consider the operator $\texttt{ST}_{\nu}(M):\mathbb{R}^{d\times d}\to \mathbb{R}^{d\times d}$, where $\texttt{ST}_{\nu}(M)_{ij} = M_{ij}-\mathrm{sign}(M_{ij})\min\{|M_{ij}|,\nu\}$ if $i\not=j$, and $\texttt{ST}_{\nu}(M)_{ij} = M_{ij}$ if $i=j$. Given this operator, the approximate backward mapping is defined as $\widetilde F^*(\widehat \Sigma_k) = \texttt{ST}_{\nu}(\widehat \Sigma_k)^{-1}$, for every $k=1,\dots,K$. An important property of this approximate backward mapping is that it is well-defined even in the high-dimensional setting $n_k\ll d$ with an appropriate choice of the threshold $\nu$~\cite{Yang14}. Given this approximate backward mapping, we will show that the estimated precision matrices from~\ref{problem:lq} are close to their true counterparts with an appropriate choice of parameters.

\subsection{Decomposability}

An important property of~\ref{problem:lq} is that it naturally decomposes over different coordinates of the precision matrices: for every $(i,j)$ with $1\leq i\leq j\leq d$, the $ij$-th element of $\{\Theta_{k}\}_{k=1}^K$ can be obtained by solving the following subproblem:
\begin{equation}\tag{Elem-$(i,j,q)$}
	\begin{aligned} 
	\{\widehat\Theta_{k;ij}\}_{k=1}^K =& \arg\min_{\{\Theta_{k;ij}\}_{k=1}^K}  \sum_{k=1}^K\left({\Theta_{k;ij} - [\tilde{F}^*(\widehat{\Sigma}_{k})]_{ij}}\right)^2\!+\!\mu\sum_{k=1}^K |{\Theta_{k;ij}}|+\gamma \!\!\sum_{l>k}\!\! W_{kl}\left|{\Theta_{k;ij}-\Theta_{l;ij}}\right|^q,
    \end{aligned}
	\label{problem:lq_ij}
\end{equation}
Recall that the original problem~\ref{problem:lq} has $Kd(d+1)/2$ variables. The above decomposition implies that~\ref{problem:lq} can be decomposed into $d(d+1)/2$ smaller subproblems, each with only $K$ variables that can be solved independently in parallel. This is in stark contrast with the joint Graphical Lasso, which requires a dense coupling among the elements of the precision matrices through the non-decomposable $\mathrm{logdet}$ function. Later, we will show how each subproblem can be solved efficiently for different choices of $q$.

\section{Related Work}\label{sec_relatedwork}

Recently, many approaches have been proposed for sparse precision matrix estimation in high dimensions. This line of work begins by the inference of a single precision matrix, which can be achieved by $\ell_1$-regularized MLE, also known as \emph{Graphical Lasso (GL)} \cite{friedman2007sparse, banerjee2007model, Yuan07}.

Extending beyond single precision matrix inference, a recent line of research has focused on estimating time-varying MRFs, where the relation among variables may change over time \cite{zhou2008time}. A common approach for estimating time-varying MRFs is based on kernel methods, where the sample covariance matrix at any given time is a weighted average of the samples over time, where the weights are collected from a predefined kernel~\cite{zhou2008time, greenewald2017timedependent, Fattahi21}. 

In the context of spatially-varying graphical models, the main focus has been devoted to different variants of MLE-based techniques, such as \textit{Fused Graphical Lasso} (FGL) and \textit{Group Graphical Lasso} (GGL)~\cite{danaher2012joint}. FGL penalizes the pairwise difference of the precision matrices in $\ell_1$-norm, while GGL regularizes the $\ell_{2}$-norm of the $(i, j)$-th element across all $K$ precision matrices. \citet{Guo11}  reparameterized each off-diagonal element as the product of a common factor and difference, then applied separate $\ell_{1}$ regularization to these two parts. \citet{Takumi16} proposed to regularize the MLE with a Laplacian-type penalty to exploit the information among different distributions. However, all these techniques are based on MLE, and consequently suffer from a notoriously high computational cost.

To alleviate the computational cost of MLE-based technique, \citet{lee15a} proposed to estimate the joint precision matrices based on a \textit{constrained $\ell_{1}$ minimization for inverse matrix estimation} (CLIME) technique \cite{cai2011constrained}. Unlike GL, CLIME does not optimize over the complex $\mathrm{logdet}$ function and has shown more favorable theoretical properties than GL. Finally, our method is built upon the \textit{Elementary Estimator} introduced by~\citet{Yang14}, where the proposed estimator admits a closed-form solution based on soft-thresholding. This method was later extended by~\citet{Fattahi21} to time-varying setting, showing that it can be solved in near-linear time and memory.

\section{Statistical Guarantees}\label{sec:main}
In this section, we elucidate the statistical properties of~\ref{problem:lq} for SV-GMRFs with two widely-used spatial structures, namely \textit{smoothly-changing} and \textit{sparsely-changing} GMRFs. 
To this goal, we first need to make two important assumptions on the true precision matrices.
\begin{assumption}[Bounded norm]\label{asp_bound}
	There exist constant numbers $\kappa_{1}<\infty, \kappa_{2}>0$, and $\kappa_{3}<\infty$ such that
	\begin{equation*}
		\left\|\Theta_{k}^{\star}\right\|_{\infty} \leq \kappa_{1}, \inf _{w:\|w\|_{\infty}=1}\left\|\Sigma_{k}^{\star} w\right\|_{\infty} \geq \kappa_{2},\left\|\Sigma_{k}^{\star}\right\|_{\max} \leq \kappa_{3}
		\label{Assump1}
	\end{equation*}
	for every $k=1, \ldots, K.$
\end{assumption}
Assumption~\ref{asp_bound} is fairly mild and implies that the true covariance matrices and their inverses have bounded norms.
\begin{assumption}[Weak sparsity]
	Each covariance matrix $\Sigma_{k}^{\star}$ satisfies $\max _{i} \sum_{j=1}^{d}\left|\left[\Sigma_{k}^{\star}\right]_{i j}\right|^{p} \leq s(p)$, for some function $s: [1,\infty) \rightarrow \mathbb{R}$ and scalar $0\leq p< 1$.
	\label{Assump2}
\end{assumption}
Informally, we say ``the true covariance matrices are weakly sparse'' if $\{\Sigma^\star_t\}_{t=0}^T$ are $s(p)$-weakly sparse with $s(p)\ll d$ for some $0\leq p< 1$. 
The notion of weak sparsity extends the classical notion of sparsity to dense matrices. 
Indeed, except for a few special cases, a sparse matrix does not have a sparse inverse. Consequently, a sparse precision matrix may {not} lead to a sparse covariance matrix. However, a large class of sparse precision matrices have weakly sparse inverses. For instance, if $\Theta^\star_{k}$ has a banded structure with small bandwidth, then it is known that the elements of $\Sigma^\star_{k}={\Theta^\star_{k}}^{-1}$ enjoy exponential decay away from the main diagonal elements \cite{Demko84, Kershaw70}. Under such circumstances, simple calculation implies that $s(p) \leq \frac{C}{1-\rho^{p}}$ for some constants $C>0$ and $\rho<1$. More generally, a similar statement holds for a class of inverse covariance matrices whose support graphs have large average path length \cite{Benzi07, Benzi15}; a large class of inverse covariance matrices with row- and column-sparse structures satisfy this condition. 

Next, we introduce some notations that simplify our subsequent analysis. Let $\pi:\{1,2,\dots,K\}^2\to \{1,2,\dots,K(K+1)/2\}$ be a fixed, predefined labeling function that assigns a label to each pair $(k,l)$ with $l\geq k$. Let $G$ be a diagonal matrix whose $k$-th diagonal entry is defined as $W^{1/q}_{\pi^{-1}(k)}$. Moreover, let $A\in\mathbb{R}^{K(K-1)/2\times K}$ be the adjacency matrix defined as $A(\pi(k,l), k) = 1$ and $A(\pi(k,l), l) = -1$, for every $l>k$. Finally, define $\Theta_{ij} = [\Theta_{1;ij}\ \Theta_{2;ij}\ \dots\ \Theta_{K;ij}]^\top$ and $\tilde{F}^*_{ij} = \left[[\tilde{F}^*(\widehat{\Sigma}_{1})]_{ij}\ [\tilde{F}^*(\widehat{\Sigma}_{2})]_{ij}\ \dots\ [\tilde{F}^*(\widehat{\Sigma}_{K})]_{ij}\right]$, for every $j\geq i$. It is easy to see that $\|GA\Theta_{ij}\|_q^q = \sum_{l>k}W_{kl}|\Theta_{k;ij}-\Theta_{l;ij}|^q$ for every $j\geq i$, and accordingly,~\ref{problem:lq_ij} can be written concisely as

\begin{equation}
	\begin{aligned} 
	\widehat\Theta_{ij} \!=\! \arg\min_{\Theta_{ij}}  \left\|{\Theta_{ij} \!-\! \tilde{F}^*_{ij}}\right\|^2_{2}\!+\!\mu \|{\Theta_{ij}}\|_{1}+\gamma \left\|GA\Theta_{ij}\right\|_{q}^q.
    \end{aligned}
	\label{problem:lq_ij2}
\end{equation}
Next, we provide sharp statistical guarantees for our proposed method when the precision matrices $\{\Theta^\star_k\}_{k=1}^K$ change smoothly or sparsely across different distributions.

\subsection{Smoothly-changing GMRF}

We start with our main assumption on the smoothness of the precision matrices.

\begin{assumption}[Smoothly-changing SV-GMRFs] There exists a constant $D\geq 0$ such that $\sum_{l\geq k}(\Theta^\star_{k;ij}-\Theta^\star_{l;ij})^2\leq D^2$ for every $(i,j)$.
\label{asmp_smooth}
\end{assumption}
Informally, we say ``SV-GMRF is smoothly-changing'' if Assumption~\ref{asmp_smooth} is satisfied with a small $D$. For a smoothly-changing SV-GMRF, it is natural to choose $q=2$ in~\ref{problem:lq} to promote smoothness in the spatial difference of the precision matrices. Our next theorem characterizes the sample complexity of~\ref{problem:lq} with $q=2$ for smoothly-changing SV-GMRF. Let $\Theta_{\min} = \min\{|\Theta^\star_{k;ij}|: \Theta^\star_{k;ij}\not=0\}$.

\begin{thm}[Smoothly-changing SV-GMRF]
	Consider a smoothly-changing SV-GMRF with parameter $D$, and weakly-sparse covariance matrices with parameter $s(p)$ for some $0\leq p<1$. Suppose that the number of samples satisfies 
	$$
	n_k \gtrsim L\frac{\log d}{\Theta^2_{\min}},\quad \text{ where }\quad  L \!=\! \max\left\{\!\left(\frac{s(p)}{\kappa_2}\right)^{\frac{2}{1-p}}\!\!\!\kappa_3^2,\! \left(\!\frac{\kappa_1\kappa_3}{\kappa_2}\!+\!D\!\right)^2\!\right\}.
	$$
	Define $n_{\min} = \min_{k}\{n_k\}$. Moreover, suppose that $\widetilde{F}^*(\widehat{\Sigma}_k) = [\texttt{ST}_{\nu_k}(\widehat{\Sigma}_{k})]^{-1}$ with $\nu_{k} \asymp \kappa_3\sqrt{{\log d}/{n_{k}}}$. Then, the solution obtained from~\ref{problem:lq} with $q=2$ and parameters  
	$$\gamma \asymp \frac{1}{K\norm{W}_{\max}}\sqrt{\frac{\log d}{{n_{\min}}}},\qquad \mu \asymp D\sqrt{\frac{\log d}{n_{\min}}},
	$$
	satisfies the following statements with probability of $1-Kd^{-10}$:
	\begin{itemize}
	    \item {\bf Sparsistency:} The solution is unique and satisfies $\supp(\widehat{\Theta}_k)=\supp({\Theta}^\star_k)$ for every $k$.
	    \item {\bf Estimation error:} The solution satisfies
	    \begin{align*}
        &\|{\widehat{\Theta}_{k} -\Theta_{k}^\star}\|_{\max} \lesssim \left(\frac{\kappa_1\kappa_3}{\kappa_2}+D\right)\sqrt{\frac{\log d}{n_{\min}}},\quad \text{for every $k$.}
    \end{align*}
	\end{itemize}
	\label{thm_smooth}
\end{thm}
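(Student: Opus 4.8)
The plan is to decompose the problem element-wise as in~\eqref{problem:lq_ij2} and analyze each $\widehat\Theta_{ij}$ independently, since with $q=2$ the spatial regularizer is smooth and the subproblem becomes an unconstrained, strongly convex ridge-type problem whose optimality conditions I can solve in closed form. First I would establish the key deterministic ingredient: a high-probability bound on the deviation of the approximate backward mapping from the truth. Specifically, I expect to show that $\|\widetilde F^*(\widehat\Sigma_k) - \Theta^\star_k\|_{\max} \lesssim \frac{\kappa_1\kappa_3}{\kappa_2}\sqrt{\log d / n_k}$ with probability $1-d^{-10}$, using the weak-sparsity Assumption~2, the bounded-norm Assumption~1, and a standard sub-Gaussian concentration bound on the entries of $\widehat\Sigma_k - \Sigma^\star_k$ together with the stated choice $\nu_k \asymp \kappa_3\sqrt{\log d/n_k}$ for the soft-thresholding level. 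This is essentially the guarantee for the elementary estimator from~\citet{Yang14}, adapted to each distribution $k$, and a union bound over all $K$ distributions and $d^2$ entries yields the overall failure probability $Kd^{-10}$.

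The next step is the core optimization argument. Writing the first-order optimality (subdifferential) condition for~\eqref{problem:lq_ij2} with $q=2$, the term $\gamma\|GA\Theta_{ij}\|_2^2$ contributes the linear gradient $2\gamma A^\top G^2 A\,\Theta_{ij}$, so the stationarity condition reads $2(\Theta_{ij} - \widetilde F^*_{ij}) + 2\gamma A^\top G^2 A\,\Theta_{ij} + \mu\, z_{ij} = 0$, where $z_{ij}$ is a subgradient of $\|\cdot\|_1$. The plan is to use a primal-dual witness (PDW) construction: I posit a candidate solution supported exactly on $\supp(\Theta^\star_{ij})$, verify that the stationarity equation can be satisfied with a strictly feasible dual certificate $\|z_{ij}\|_\infty<1$ on the off-support entries, and thereby conclude both sparsistency and the estimation bound. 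The matrix $I + \gamma A^\top G^2 A$ is symmetric positive definite, and I would control its effect by noting that its deviation from identity scales with $\gamma K\|W\|_{\max}$, which the chosen $\gamma \asymp \frac{1}{K\|W\|_{\max}}\sqrt{\log d/n_{\min}}$ keeps small; combined with the smoothness Assumption~3 bounding $\|A\Theta^\star_{ij}\|_2^2 \leq D^2$, this contributes the additive $D$ term inside the error bound.

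For the estimation error, I would propagate the backward-mapping deviation and the regularization-induced bias through the (well-conditioned) linear system: the max-norm error decomposes into the statistical error $\frac{\kappa_1\kappa_3}{\kappa_2}\sqrt{\log d/n_{\min}}$ from the backward mapping, plus the bias from the $\ell_1$ penalty scaling as $\mu\asymp D\sqrt{\log d/n_{\min}}$, plus the smoothness-coupling term scaling as $\gamma\|A^\top G^2 A\Theta^\star_{ij}\|$, each of which I would verify is $\lesssim\left(\frac{\kappa_1\kappa_3}{\kappa_2}+D\right)\sqrt{\log d/n_{\min}}$. Finally, sparsistency follows because the sample-complexity lower bound $n_k \gtrsim L\log d/\Theta_{\min}^2$ guarantees the estimation error is strictly smaller than $\Theta_{\min}/2$, so every true nonzero entry remains nonzero and the strict dual feasibility forces every true zero to be estimated as exactly zero.

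I expect the main obstacle to be the dual-certificate verification in the PDW step: I must show $\|z_{ij}\|_\infty<1$ strictly on the off-support coordinates uniformly over all $d(d+1)/2$ subproblems, which requires carefully controlling how the coupling matrix $\gamma A^\top G^2 A$ redistributes the backward-mapping error across the $K$ coordinates without amplifying the off-support residual beyond $\mu$. The interplay between $\gamma$, $\|W\|_{\max}$, and the smoothness parameter $D$ in certifying this incoherence-type condition — rather than the concentration or the conditioning arguments, which are comparatively routine — is where the delicate balancing of the stated parameter choices will be essential.
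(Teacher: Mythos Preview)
Your proposal is correct and follows essentially the same route as the paper: element-wise decomposition, a backward-mapping concentration bound of the form $\|\widetilde F^*(\widehat\Sigma_k)-\Theta_k^\star\|_{\max}\lesssim\frac{\kappa_1\kappa_3}{\kappa_2}\sqrt{\log d/n_k}$, and a PDW argument on~\eqref{problem:lq_ij2} exploiting that $I+\gamma A^\top G^\top G A$ is close to the identity when $\gamma K\|W\|_{\max}<1/2$. The only cosmetic difference is that the paper first performs a Cholesky factorization $C^\top C = I+\gamma A^\top G^\top G A$ to rewrite the subproblem as a standard Lasso $\min\|y-C\Theta_{ij}\|_2^2+\mu\|\Theta_{ij}\|_1$ and then verifies mutual incoherence for $C$ before invoking the textbook PDW lemma, whereas you run PDW directly on the KKT system; the two are equivalent, and your identification of the dual-certificate step as the delicate part matches the paper's Lemma~\ref{lem:multual_incoherence}.
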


For smoothly-changing SV-GMRF, the above theorem provides a non-asymptotic guarantee on the estimation error and sparsistency of the estimated precision matrices via~\ref{problem:lq} with $q=2$, proving that the required number of samples must scale only \textit{logarithmically} with the dimension $d$. Moreover, both the estimation error and the required number of samples decrease with a smaller smoothness parameter $D$; this is expected since a small value of $D$ implies that the adjacent distributions share more information, and hence, the SV-GMRF is easier to estimate.

\subsection{Sparsely-changing GMRF}
In sparsely-changing SV-GMRFs, the precision matrices are assumed to change sparsely across different distributions; this is formalized in our next assumption.

\begin{assumption}[Sparsely-changing SV-GMRFs] There exists a constant $D_0\geq 0$ such that $\sum_{l\geq k}\|(\Theta^\star_{k;ij}-\Theta^\star_{l;ij})\|_0\leq D_0$ for every $(i,j)$.
\label{asmp_sparse}
\end{assumption}

Similar to the smoothly-changing SV-GMRFs, we say ``SV-GMRFs is sparsely-changing'' if it satisfies Assumption~\ref{asmp_sparse} with a small $D_0$. 
For a sparsely-changing SV-GMRFs, it is natural to choose $q=1$ in~\ref{problem:lq} to promote sparsity in the spatial difference of the precision matrices. To analyze the statistical property of this problem, we first consider~\eqref{problem:lq_ij2} with $q=1$ and rewrite it as:
\begin{align}
	\min\ & \ltwonorm{\widetilde{F}^*_{ij} \!-\! \Theta_{ij}}^2 \!+\! \mu \lonenorm{B\Theta_{ij}}, \text{ where } B\!=\!\left[\begin{array}{c}
			\frac{\gamma}{\mu} GA \\
			I
		\end{array}\right].
	\label{lasso_problem_1}
\end{align}

The above reformulation is a special case of the \textit{generalized Lasso problem} introduced by~\citet{lee_2013}. To show the model selection consistency of the above formulation, we next introduce the notion of \textit{irrepresentability}.

For any fixed $(i,j)$, let $\mathcal{S}_B\subset \{1,2,...,K(K+1)/2\}$ be the support of $B\Theta_{ij}^\star$, i.e., $[B\Theta_{ij}^\star]_k \not = 0$ for every $k\in \mathcal{S}_B$. Moreover, let $\mathcal{S}^c_B = \{1,2,...,K(K+1)/2\}\backslash \mathcal{S}_B$. Evidently, we have $|\mathcal{S}_B| \leq D_0+S_0$, where $D_0$ is introduced in Assumption~\ref{asmp_sparse} and $S_0$ is defined as the maximum number of nonzero elements in $\Theta_{ij}^\star$, i.e., $S_0 = \max_{i,j}\{\|\Theta^\star_{ij}\|_{0}\}$.

\begin{assumption}[Irrepresentability condition (IC),~\citet{lee_2013}]\label{asp_IC}
We have
\begin{equation}
        \left\|B_{\mathcal{S}^c_B:}B_{\mathcal{S}_B:}^{\dagger} \operatorname{sign}\left((B\Theta_{ij}^{\star})_{\mathcal{S}_B:}\right)\right\|_{\infty} \leq 1-\alpha
        \label{eq:irrepresentability}
\end{equation}
for some $0 < \alpha \leq 1$, where $B_{\mathcal{S}_B:}^{\dagger}$ is the Moore-Penrose pseudo-inverse of a matrix $B_{\mathcal{S}_B:}$.
\end{assumption}
The irrepresentability condition (IC) entails that the rows of $B$ corresponding to the zero elements of $B\Theta^\star_{ij}$ must be nearly orthogonal to the other rows. Despite the seemingly complicated nature of IC, classical results on Lasso have shown that it is a necessary condition for the exact sparsity recovery, and hence, cannot be relaxed~\cite{zhao2006model, wainwright2009sharp}. Later, we show that this condition is satisfied for our problem under a mild condition on the weight matrix $W$ and parameters $\mu$ and $\gamma$.

Another quantity that plays a central role in our derived bounds is the so-called \textit{compatibility constant} defined as 
\begin{align*}
    \kappa_{\mathrm{IC}} := \left\|B_{\mathcal{S}^c_B:}B_{\mathcal{S}_B:}^{\dagger}\right\|_{\infty}+1.
\end{align*}
The compatibility constant $\kappa_{\mathrm{IC}}$ is closely related to IC. In particular, if $\left\|B_{\mathcal{S}^c_B:}B_{\mathcal{S}_B:}^{\dagger}\right\|_{\infty}\leq 1-\alpha$ (which is a slightly stronger version of IC), then $\kappa_{\mathrm{IC}}\leq 2-\alpha$. Similar to IC, we will later show that $\kappa_{\mathrm{IC}}$ remains bounded under a mild condition on the weight matrix $W$. Finally, we define $\Delta{\Theta}_{\min} = \min_{k,i,j}\{|\Theta^\star_{k;ij}-\Theta^\star_{l;ij}|: \Theta^\star_{k;ij} - \Theta^\star_{l;ij}\not=0\}$. 

\begin{thm}[Sparsely-changing SV-GMRFs.]\label{thm_sparse}
Consider a sparsely-changing SV-GMRFs with parameter $D_0$, and weakly-sparse covariance matrices with parameter $s(p)$ for some $0\leq p<1$. Suppose that the number of samples satisfies 
\begin{align*}
    n_{\min} &\gtrsim L \frac{\log d}{\min\{\Theta_{\min}^2,\Delta\Theta_{\min}^2\}},\quad
    \text{where}\quad  L &\!=\! \left\{\left(\frac{s(p)}{\kappa_2}\right)^{\frac{2}{1-p}}\!\!\!\kappa_3^2,\!\left(\frac{\kappa_{\mathrm{IC}}\kappa_1\kappa_3}{\kappa_2\alpha}\right)\left({\norm{W}_{\max}\!D_0\!+\!S_0}\right)\right\}.
\end{align*}
Define $n_{\min} = \min_{k}\{n_k\}$. Moreover, suppose that $\widetilde{F}^*(\widehat{\Sigma}_k) = [\texttt{ST}_{\nu_k}(\widehat{\Sigma}_{k})]^{-1}$ with $\nu_{k} \asymp \kappa_3\sqrt{{\log d}/{n_{k}}}$. Moreover, suppose that the weight matrix $W$ and parameters $\mu$ and $\gamma$ are chosen such that IC (Assumption~\ref{asp_IC}) is satisfied. Then, the solution obtained from~\ref{problem:lq} with $q=1$ and parameter  
	$$\mu\asymp\frac{\kappa_{\mathrm{IC}}\kappa_1\kappa_3}{\kappa_2\alpha}\sqrt{\frac{\log d}{n_{\min}}},
	$$ 
	satisfies the following statements with probability $1-Kd^{-10}$:
	\begin{itemize}
	    \item {\bf Sparsistency.} The solution is unique and satisfies $\supp(\widehat{\Theta}_k)=\supp({\Theta}^\star_k)$ for every $k$ and $\supp(\widehat{\Theta}_k-\widehat{\Theta}_l)=\supp({\Theta}^\star_k-{\Theta}^\star_l)$ for every $k>l$.
	    \item {\bf Estimation error.} For every $(i,j)$, the solution satisfies
	    \begin{align*}
	\left\|\widehat\Theta_{ij}\!-\!\Theta_{ij}^{\star}\right\|_{2}\!\!\lesssim\!\left(\!\sqrt{\norm{W}_{\max}\!D_0}\!+\!\sqrt{S_0}\!\right) \frac{\kappa_{\mathrm{IC}}\kappa_1\kappa_3}{\kappa_2\alpha}\sqrt{\frac{\log d}{n_{\min}}}.
\end{align*}
	\end{itemize}
\end{thm}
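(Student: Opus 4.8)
The plan is to exploit the element-wise decomposition already established in~\eqref{problem:lq_ij2} and, after the reformulation in~\eqref{lasso_problem_1}, to treat each coordinate subproblem as a generalized Lasso instance in which $\widetilde{F}^*_{ij}$ plays the role of a noisy observation of the ground truth $\Theta^\star_{ij}$. With this viewpoint the entire argument reduces to two ingredients: (i) quantifying the ``observation noise'' $w_{ij} := \widetilde{F}^*_{ij} - \Theta^\star_{ij}$, and (ii) invoking the model-selection consistency theory for the generalized Lasso of~\citet{lee_2013}, whose two hypotheses are exactly the irrepresentability condition (Assumption~\ref{asp_IC}) and a choice of $\mu$ that dominates the noise. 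Both sparsistency and the $\ell_2$ estimation bound then follow by specializing that theory to the structured matrix $B$.

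First I would control the noise. The key deterministic fact, inherited from the elementary-estimator analysis of~\citet{Yang14}, is that the soft-thresholded backward mapping satisfies $\supnorm{\widetilde{F}^*(\widehat{\Sigma}_k) - \Theta^\star_k}\lesssim \frac{\kappa_1\kappa_3}{\kappa_2}\sqrt{\log d / n_k}$ on the event that every entry of $\widehat{\Sigma}_k$ concentrates around its mean at scale $\nu_k\asymp\kappa_3\sqrt{\log d / n_k}$. Under Assumptions~\ref{asp_bound}--\ref{Assump2} this event holds with probability at least $1 - d^{-10}$ for each $k$ by standard sub-Gaussian concentration of sample covariances (the union bound over the $d^2$ entries being absorbed into the exponent), and a further union bound over the $K$ distributions yields the global probability $1-Kd^{-10}$. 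Translating this entrywise bound into the stacked notation gives $\linfinity{w_{ij}} \lesssim \frac{\kappa_1\kappa_3}{\kappa_2}\sqrt{\log d / n_{\min}}$ simultaneously for all $(i,j)$; this is the only randomness entering the argument, and everything downstream is deterministic on this event.

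Next I would run the primal--dual witness construction for~\eqref{lasso_problem_1}. I would build the oracle solution $\widehat{\Theta}_{ij}$ by solving the subproblem restricted to the support $\mathcal{S}_B = \supp(B\Theta^\star_{ij})$, then certify it globally by producing a dual vector $z$ with $z_{\mathcal{S}_B} = \mathrm{sign}((B\Theta^\star_{ij})_{\mathcal{S}_B})$ and verifying $\linfinity{z_{\mathcal{S}^c_B}} < 1$. The latter strict feasibility is precisely where Assumption~\ref{asp_IC} enters: it bounds $B_{\mathcal{S}^c_B:}B_{\mathcal{S}_B:}^{\dagger}\,\mathrm{sign}((B\Theta^\star_{ij})_{\mathcal{S}_B})$ by $1-\alpha$, leaving an $\alpha$-margin to absorb the noise provided $\mu \gtrsim \frac{\kappa_{\mathrm{IC}}}{\alpha}\linfinity{w_{ij}}$, which explains the prescribed scaling $\mu\asymp\frac{\kappa_{\mathrm{IC}}\kappa_1\kappa_3}{\kappa_2\alpha}\sqrt{\log d/n_{\min}}$; strict feasibility also forces uniqueness. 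Because the block structure of $B$ separates the identity rows (encoding $\supp(\Theta^\star_{ij})$) from the weighted incidence rows $\frac{\gamma}{\mu}GA$ (encoding $\supp(\Theta^\star_{k;ij}-\Theta^\star_{l;ij})$), exact recovery of $\supp(B\widehat{\Theta}_{ij}) = \mathcal{S}_B$ immediately yields both sparsistency claims. The accompanying ``no false negatives'' direction then requires the recovered magnitudes to stay bounded away from zero, which holds once the estimation error falls below $\min\{\Theta_{\min},\Delta\Theta_{\min}\}$ --- exactly the signal-strength requirement producing the sample-complexity threshold $n_{\min}\gtrsim L\log d/\min\{\Theta_{\min}^2,\Delta\Theta_{\min}^2\}$.

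Finally I would derive the $\ell_2$ estimation error. Restricting to $\mathcal{S}_B$ and solving the strictly convex oracle problem, the error obeys $\ltwonorm{\widehat{\Theta}_{ij}-\Theta^\star_{ij}}\lesssim \sqrt{|\mathcal{S}_B|}\,\mu$ up to factors governed by the compatibility constant $\kappa_{\mathrm{IC}}$ and $\opnorm{B_{\mathcal{S}_B:}^{\dagger}}$. Since $|\mathcal{S}_B|\leq D_0+S_0$ and the incidence rows carry the weights through $G$, the count splits into a weighted difference-support part scaling like $\norm{W}_{\max}D_0$ and a parameter-support part scaling like $S_0$, giving the factor $\sqrt{\norm{W}_{\max}D_0}+\sqrt{S_0}$ in the final bound. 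I expect the main obstacle to be the bookkeeping around the structured pseudo-inverse $B_{\mathcal{S}_B:}^{\dagger}$ and the weight matrix $G$: one must show that $\kappa_{\mathrm{IC}}$ and $\opnorm{B_{\mathcal{S}_B:}^{\dagger}}$ remain bounded, that the weighting contributes exactly $\norm{W}_{\max}$ (rather than a larger power) to the difference-support term, and that the IC margin $\alpha$ stays uniform across the $d(d+1)/2$ subproblems so that the union bound over coordinates does not degrade the logarithmic sample complexity.
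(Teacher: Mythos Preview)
Your proposal is correct and follows essentially the same route as the paper: decompose into coordinate subproblems~\eqref{lasso_problem_1}, control the backward-mapping noise $\|\widetilde{F}^*_{ij}-\Theta^\star_{ij}\|_\infty$ via the elementary-estimator concentration bound, and then invoke the generalized-Lasso theory of~\citet{lee_2013} under Assumption~\ref{asp_IC} to obtain both sparsistency and the $\ell_2$ error with the $\sqrt{\|W\|_{\max}D_0}+\sqrt{S_0}$ prefactor. The paper packages the PDW step as a direct citation to Corollary~4.2 of~\citet{lee_2013} (stated as an intermediate deterministic proposition with explicit constants $\kappa_\rho,\kappa_\varrho,\kappa_{\mathrm{IC}}$) rather than rerunning the witness argument, but the content is identical; your only overcaution is the last paragraph---for Theorem~\ref{thm_sparse} itself, boundedness of $\kappa_{\mathrm{IC}}$ and uniformity of $\alpha$ are not proved but simply carried as explicit parameters in the conclusion (their control is the separate job of Proposition~\ref{prop_IC}).
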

The above theorem characterizes the sample complexity of inferring sparsely-changing SV-GMRFs, showing that the sparsity pattern of the precision matrices and their differences can be recovered \textit{exactly}, given that the number of samples scale logarithmically with the dimension and the problem satisfies IC.
Evidently, our result crucially relies on the satisfaction of IC and $\kappa_{\mathrm{IC}}$ being small. This leads to a follow-up question: how restrictive are these conditions in practice? Our next proposition shows that both conditions hold if $\gamma$ and $\mu$ are selected such that $\mu\leq \gamma\leq 2\mu$ and $W_{kl}$ is the same for every $k>l$.

\begin{prop}\label{prop_IC}
    Suppose that $0<\mu\leq \gamma\leq 2\mu$ and $W_{kl}$ is the same for every $k>l$. Then, $1\leq \kappa_{\mathrm{IC}}\leq 5$ and IC holds with $\alpha = \mu/\gamma$.
\end{prop}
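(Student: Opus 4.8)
The plan is to reduce Proposition~\ref{prop_IC} to a closed-form linear-algebra computation that a strong permutation symmetry makes tractable. Since all $W_{kl}$ share a common value, I normalize it to $1$, so that $G=I$ and $B=\begin{bmatrix}cA\\ I\end{bmatrix}$ with $c:=\gamma/\mu\in[1,2]$. Fix a pair $(i,j)$, write $\theta:=\Theta^\star_{ij}\in\RR^K$, and dispose of the degenerate case $\theta=0$ (there $\mathcal S_B=\emptyset$, the matrix $M:=B_{\mathcal S^c_B:}B_{\mathcal S_B:}^\dagger$ is empty, and both claims are immediate). The rows of $B$ are the difference rows $c(e_k-e_l)$ and the identity rows $e_k$; a difference row lies in $\mathcal S_B$ iff $\theta_k\neq\theta_l$ and an identity row iff $\theta_k\neq0$. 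Partitioning $\{1,\dots,K\}$ into \emph{value classes} (maximal index sets on which $\theta$ is constant) shows that the difference edges carried by the support form the complete multipartite graph between classes, its complement consists of the equal-valued pairs, and $T:=\supp(\theta)$ is the union of the nonzero-valued classes.

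First I would establish that $B_{\mathcal S_B:}$ has full column rank, so that $B_{\mathcal S_B:}^\dagger=Q^{-1}B_{\mathcal S_B:}^\top$ for the Gram matrix $Q:=B_{\mathcal S_B:}^\top B_{\mathcal S_B:}=c^2L_{\neq}+D_T$, where $L_{\neq}$ is the Laplacian of the between-class graph and $D_T$ is the diagonal indicator of $T$. Positive definiteness holds because $\ker L_{\neq}=\mathrm{span}(\mathbf 1)$ whenever there are at least two classes while $\mathbf 1\notin\ker D_T$, and the one-class case is covered by $D_T=I$. Consequently each row of $M$ equals $b^\top Q^{-1}B_{\mathcal S_B:}^\top=(B_{\mathcal S_B:}z)^\top$ with $z:=Q^{-1}b$, so $\linfinity{M}$ is the maximum over $b\in\mathcal S^c_B$ of $R_b:=\sum_{(k,l)\in E_{\neq}}c\abs{z_k-z_l}+\sum_{k\in T}\abs{z_k}$, and the IC quantity is the same sum weighted by $\mathrm{sign}(B\theta)$.

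The heart of the argument is to solve $Qz=b$ for the two kinds of rows $b\in\mathcal S^c_B$ using the fact that $Q$ is invariant under permuting indices inside any value class; hence $z=Q^{-1}b$ is constant on each class away from the one or two coordinates that $b$ distinguishes. For an identity row $b=e_m$ with $m$ in the zero-valued class this collapses to $z_m=\frac{c^2+1}{c^2P}$ and $z_k=\frac1P$ otherwise, where $P=\abs{T}$; for a difference row $b=c(e_p-e_q)$ with $p,q$ in a common class of size $n_a$ it gives the rank-one solution $z=t(e_p-e_q)$ with $t=c/\big(c^2(K-n_a)+1\big)$. Substituting back, the $O(K^2)$ between-class edge terms collapse because $z_k-z_l$ vanishes on all but the $O(K)$ edges touching a distinguished coordinate, and these survivors are damped by exactly the factor that $Q^{-1}$ supplies: one obtains $R_{e_m}=1+\frac1c$ and $R_{c(e_p-e_q)}=2+\frac{2(c-1)}{c^2(K-n_a)+1}$ (the remaining rows give even smaller sums), whose maximum over all configurations is $2c$, attained in the single-class case $K-n_a=0$. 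I expect this damping-versus-multiplicity balance to be the main obstacle, since the dense edge sums look unbounded before the symmetry is exploited.

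It then remains to assemble the two conclusions. The bound $\linfinity{M}=\max_b R_b\le2c\le4$, together with $\linfinity{M}\ge0$, gives $1\le\kappa_{\mathrm{IC}}=\linfinity{M}+1\le5$. For IC I would insert $s=\mathrm{sign}\big((B\theta)_{\mathcal S_B}\big)$: every same-class difference row contributes $0$, because $z=t(e_p-e_q)$ is antisymmetric while $\theta_p=\theta_q$ forces the sign-weighted half-edges through each external node to cancel in pairs; and each identity row $e_m$ contributes $\Sigma\cdot\frac{c-1}{cP}$ with $\Sigma:=\sum_{k\in T}\mathrm{sign}(\theta_k)$, so $\abs{\Sigma}\le P$ yields a bound of $\frac{c-1}{c}=1-\mu/\gamma$. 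Hence the irrepresentability condition holds with $\alpha=\mu/\gamma$, which together with the bound on $\kappa_{\mathrm{IC}}$ proves the proposition.
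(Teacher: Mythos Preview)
Your proposal is correct. Both your argument and the paper's exploit the same underlying symmetry---permutation invariance within the level sets of $\theta=\Theta^\star_{ij}$---to collapse the computation to a handful of explicit cases. The paper implements this via the adjugate $F$ of $Q=B_{\mathcal S_B:}^\top B_{\mathcal S_B:}$, proving an intermediate lemma that $F_{uu}=F_{vv}$ and $F_{uk}=F_{vk}$ whenever $\theta_u=\theta_v$, and then working through cofactor and determinantal identities case by case. You instead recognize $Q=c^2L_{\neq}+D_T$ as Laplacian-plus-diagonal and solve $Qz=b$ directly with a symmetric ansatz ($z=\alpha\mathbf 1+\beta e_m$ for the identity rows, $z=t(e_p-e_q)$ for the difference rows), which bypasses the adjugate machinery entirely. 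The two routes yield the same row sums (at most $1+1/c$ for identity rows in $\mathcal S_B^c$, and at most $2c$ for difference rows, matching the paper's three cases) and the identical IC value $\Sigma(c-1)/(cP)$; your execution is the more direct of the two and makes it transparent why the $O(K^2)$ between-class edge contributions collapse.
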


Proposition~\ref{prop_IC} can be easily extended to general choices of $W$. In particular, suppose that $W = \tau \mathbf{1}\mathbf{1}^\top+E$ for some $\tau>0$, where $\mathbf{1}$ is the vector of ones. Then, Proposition~\ref{prop_IC} combined with a simple matrix perturbation bound reveals that
$$
\alpha\geq 1/2-\mathcal{O}(\norm{E}_{\max}),\ \text{and}\ 1\leq \kappa_{\mathrm{IC}}\leq 5+\mathcal{O}(\norm{E}_{\max}).
$$
In other words, IC holds and $\kappa_{\mathrm{IC}}$ remains bounded, provided that $\norm{E}_{\max} = \mathcal{O}(1)$, that is, the elements of the weight matrix $W$ do not vary too much. Later in our numerical experiments, we will show that such choices of $W$ provide the best statistical results on both synthetically generated as well as gene expression datasets. 

\section{Parameter Tuning and Implementation}\label{sec_alg}

In this section, we explain different implementation aspects of our proposed method.
\vspace{1mm}

\noindent{\bf Tuning $\mathbf{W}$:} To obtain a solution for~\ref{problem:lq}, we first need to fine-tune the parameters $\mu,\gamma,\nu_k, W$ based on the available data samples. 
Recall that, for every pair $(k,l)$, the value of $W_{kl}^{-1}$ can be interpreted as the "distance" between precision matrices for distributions $k$ and $l$. Intuitively, $\Theta^\star_k$ and $\Theta^\star_l$ are close if their corresponding covariance matrices $\Sigma^\star_k$ and $\Sigma^\star_l$ are close. Therefore, to obtain an estimate of $W$, we first compute the distance between any pair of sample covariance matrices $D_{kl} = \norm{\widehat{\Sigma}_k-\widehat{\Sigma}_l}_{\ell_2}$, and then assign $W_{kl} = 1/(1+D_{kl})$ for every $k\not=l$.\vspace{1mm}

\noindent{\bf Tuning $\pmb{\mu, \gamma}$, and $\pmb{\nu_k}$:} Recall that the parameter $\mu$ controls the sparsity of the estimated precision matrices, whereas $\gamma$ penalizes their differences. Moreover, $\nu_k$ is the threshold for used in the proposed approximate backward mapping. In Theorems~\ref{thm_smooth} and~\ref{thm_sparse}, we provide an explicit value for these parameters that depend on the parameters of the true solution, which are not known \textit{a priori}. Without any prior knowledge on the true solution, these parameters can be selected by minimizing the \textit{extended Bayesian Information Criterion} (BIC) \cite{foygel2010extended}:
\begin{equation}\label{eq_bic}
\begin{aligned}
    (\widehat{\mu},\widehat{\gamma}, \widehat{\nu}) =& \arg\min_{\mu, \gamma, \nu}\ \mathrm{BIC}(\mu, \gamma, \nu),\quad \text{where}\\
    \mathrm{BIC}(\mu,\! \gamma,\! \nu) \!:=&\! \sum_{k=1}^{K}\!n_k[\mathrm{Tr}{(\widehat{\Sigma}_k\widehat{\Theta}_k(\mu,\! \gamma,\! \nu))} \!-\! \log \det \widehat{\Theta}_k(\mu,\! \gamma,\! \nu)] +\log(n_k) \mathrm{df}^{(k)} + 4 \mathrm{df}^{(k)}\log d ,
\end{aligned}
\end{equation}
In the above definition, $\widehat{\Theta}_k(\mu, \gamma, \nu)$ is the optimal solution of~\eqref{problem:lq} with parameters $(\mu, \gamma, \nu)$. Moreover, $\mathrm{df}^{(k)}$ is defined as the number of nonzero elements in $\widehat{\Theta}_{k}(\mu, \gamma, \nu)$. Theorems~\ref{thm_sparse} and~\ref{thm_smooth} suggest that $\gamma = C_1\sqrt{{\log d}/{n_{\min}}}$, $\mu=C_2\sqrt{{\log d}/{n_{\min}}}$, and $\nu_k = C_3\sqrt{{\log d}/{n_k}}$, where $C_1, C_2$, and $C_3$ are constants that depend on the true solution.
Therefore, to pick the parameters $\mu$, $\gamma$, and $\nu_k$, we perform a grid search over the constants $C_1$, $C_2$, and $C_3$, and pick those that minimize $\mathrm{BIC}(\mu,\gamma, \nu)$.
\vspace{1mm}

\noindent{\bf Algorithm:} Next, we explain a general algorithm for solving~\ref{problem:lq}. As mentioned before,~\ref{problem:lq} decomposes over different coordinates $(i,j)$, where each subproblem can be written as~\ref{problem:lq_ij}. This decomposition leads to a parallelizable algorithm, where each thread solves~\ref{problem:lq_ij}, for a subset of the coordinates $(i,j)$. This approach is oulined in Algorithm~\ref{algorithm}.

\begin{algorithm}
		\caption{General algorithm for solving~\ref{problem:lq}}
		\label{algorithm}
		\begin{algorithmic}
			\STATE {\bfseries Input:} Data samples $\{x_i^k\}$, parameters $(\mu, \gamma,\nu_k)$, and weight matrix $W$.;
			\STATE {\bfseries Output:} Solution $\{\widehat{\Theta}_k\}_{k=1}^K$ for~\ref{problem:lq};
			\STATE Compute the sample covariance matrix $\widehat{\Sigma}_{k}$ for every $k=1,...,K$;
			\STATE Compute the approximate backward mapping $\widetilde{F}^{*}(\widehat{\Sigma}_{k}) = \left[\texttt{ST}_{\nu_k}(\widehat{\Sigma}_{k})\right]^{-1}$ for every $k=1,...,K$;
			\FOR{every $(i,j)$}
			\STATE Compute a sub-gradient $D_t\in \partial f_{\ell_1}(U_t)$;
			\STATE Obtain $\widehat{\Theta}_{ij}$ by solving \ref{problem:lq_ij};
			\ENDFOR
		\end{algorithmic}
	\end{algorithm}

Next, we analyze the computational cost of each step of our proposed algorithm. Given $n_k$ number of samples, the sample covariance matrix $\widehat{\Sigma}_k$ can be computed in $\mathcal{O}(n_kd^2)$ time and memory (Line 3). Moreover, given each sample covariance matrix, the approximate backward mapping can be obtained by an element-wise soft-thresholding followed by a matrix inversion, which can be done in $\mathcal{O}(d^3)$ time and memory (Line 4). Finally, for each $(i,j)$ and the choices of $q=1,2$, the subproblem \ref{problem:lq_ij} can be reformulated as a linearly constrained convex quadratic problem. Suppose that $W$ has $\texttt{nnz}$ number of nonzero elements. Then, each subproblem can be solved in $\mathcal{O}(\texttt{nnz}^3)$~\cite{boyd2004convex}. Moreover, assuming that the algorithm is parallelized over $M$ machines, the total complexity of solving all subproblems is $\mathcal{O}((d^2/M)\texttt{nnz}^3)$. In the next section, we show that our proposed algorithm is extremely efficient in practice.

\section{Numerical Experiments}\label{sec:numerical}

In this section, we evaluate the performance of our proposed method on synthetically generated dataset, as well as the Glioblastoma spatial transcriptomics dataset. All experiments are implemented using MATLAB 2021b, and performed with a 3.2 GHz 8-Core AMD Ryzen 7 5800H CPU with 16 GB of RAM. We use the function \texttt{quadprog} in MATLAB to solve each subproblem.

\subsection{Synthetically generated dataset.}

First, we use synthetically generated dataset to compare the statistical performance of our proposed method with two other estimators: the fused graphical lasso (FGL) \cite{danaher2012joint} and FASJEM \cite{pmlr-v54-wang17e}. FGL is an MLE-based approach augmented by a regularizer to promote spatial similarity among different distributions. On the other hand, FASJEM uses the same Elementary Estimator \cite{Yang14} framework as ours while having different regularization term. By comparing the estimated parameters with their true counterparts, we will show that our method outperforms both FGL and FASJEM in recovering the true precision matrices. 

\vspace{1mm}

\noindent{\bf Data generation:} 
Our data generation process is motivated by ideas proposed by \citet{peng2009partial} and \citet{lyu2018condition} to imitate the gene expression profiles from a synthetically-generated co-expression network. Our main goal is to generate the data synthetically from a known distribution, and then evaluate the performance of the estimated parameters by comparing them to the ground truth.

We simulate the true precision matrices for $K$ distinct clusters (populations) with varying level of similarity. Within each cluster, we assume that the graph representing the true precision matrix has a disjoint modular structure, with power law degree distribution for nodes within each module. Specifically, we split $d$ genes into $M$ modules, with $d/M$ genes per module generated based on Barabasi-Albert model~\cite{albert2002statistical}. Within each cluster, the modules are simulated independently and concatenated to produce a block-diagonal matrix, which is treated as the true precision matrix for the corresponding cluster.

In order to simulate the true precision matrices for all clusters, we first generate a random spanning tree over clusters. Starting at the root cluster, we generate $M$ modules, and in each module, we randomly generate a graph with $d/M$ vertices according to the Barabasi-Albert model. Based on the adjacency matrix of this graph, we select the edge weights uniformly from $[-1, -0.4] \cup [0.4, 1]$. To ensure the positive semi-definiteness of the constructed precision matrix, we use 1.1 times the sum of the absolute values of all off-diagonal elements in each row as the value of the diagonal elements in that row. Finally, we construct the precision matrix as a block diagonal matrix with $M$ modules on the diagonal blocks. We then traverse the spanning tree from the root cluster and, at every new cluster, construct the precision matrix by perturbing its parent cluster. We consider two types of perturbations: $(i)$ edge weight perturbation; and $(ii)$ edge reconnection. To perform Type $(i)$ perturbation, we sample a subset of the $M$ modules at the parent cluster, and add a uniform perturbation from the interval $[-0.04,0.04]$ to the non-zero edges. For Type $(ii)$ perturbation, we replace one of the $M$ modules with a newly simulated one following a power-law degree distribution. Thus, at every cluster, the precision matrix is slightly perturbed relative to its parent, and the precision matrix differences accumulate, which means that the number of different edges of two precision matrices increases with their distance.
Figure \ref{fig:generating_adj} illustrates the precision matrices for the two adjacent clusters.
Having simulated the precision matrices, at every cluster $k$, we next collect $n_k$ samples from a zero-mean Gaussian distribution with the constructed precision matrix.

\begin{figure}[!t]
\centering
\subfloat[Parent cluster]{\includegraphics[width=2.8in]{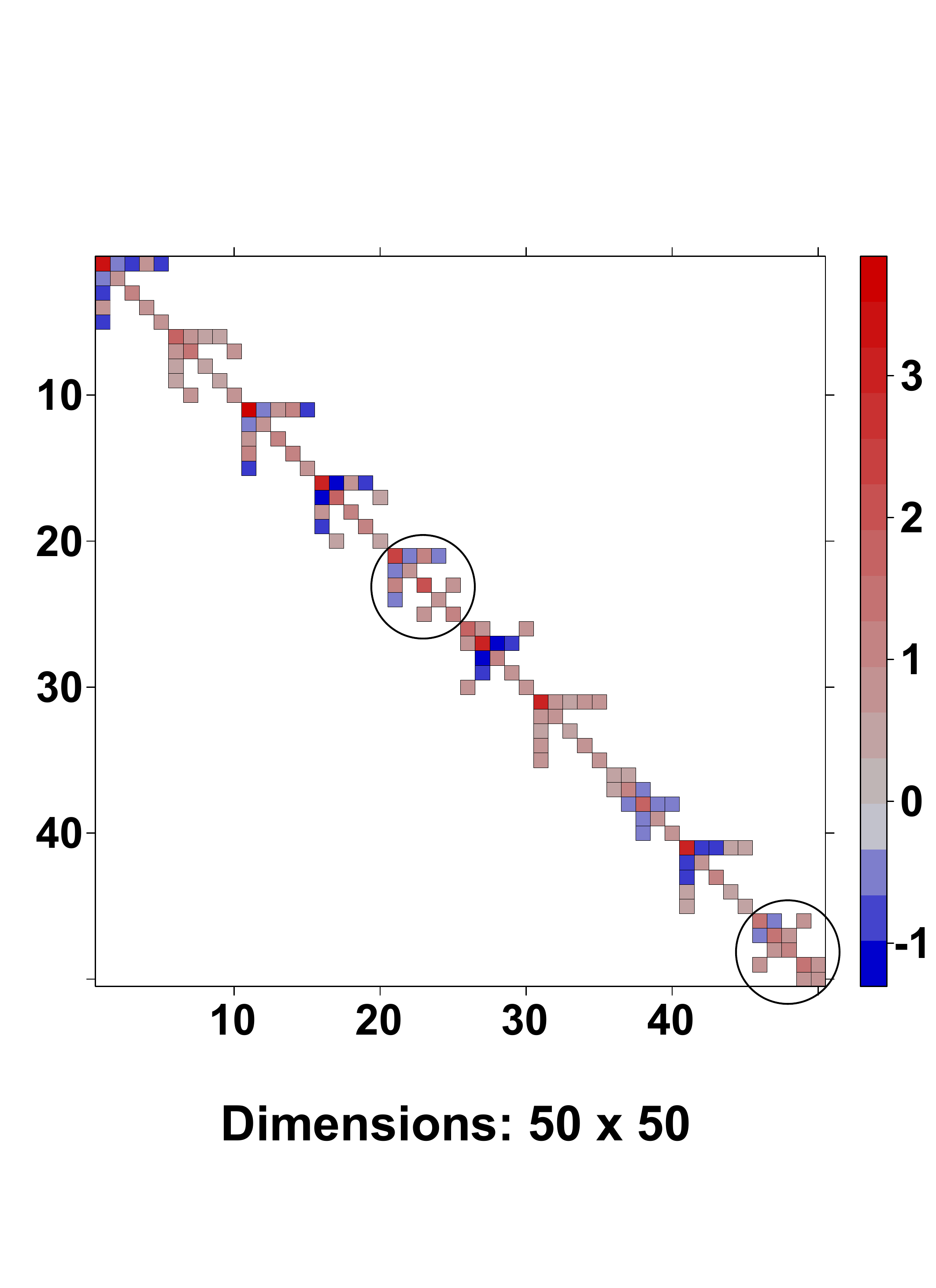}
\label{fig:adj_a}}
\subfloat[Child cluster]{\includegraphics[width=2.8in]{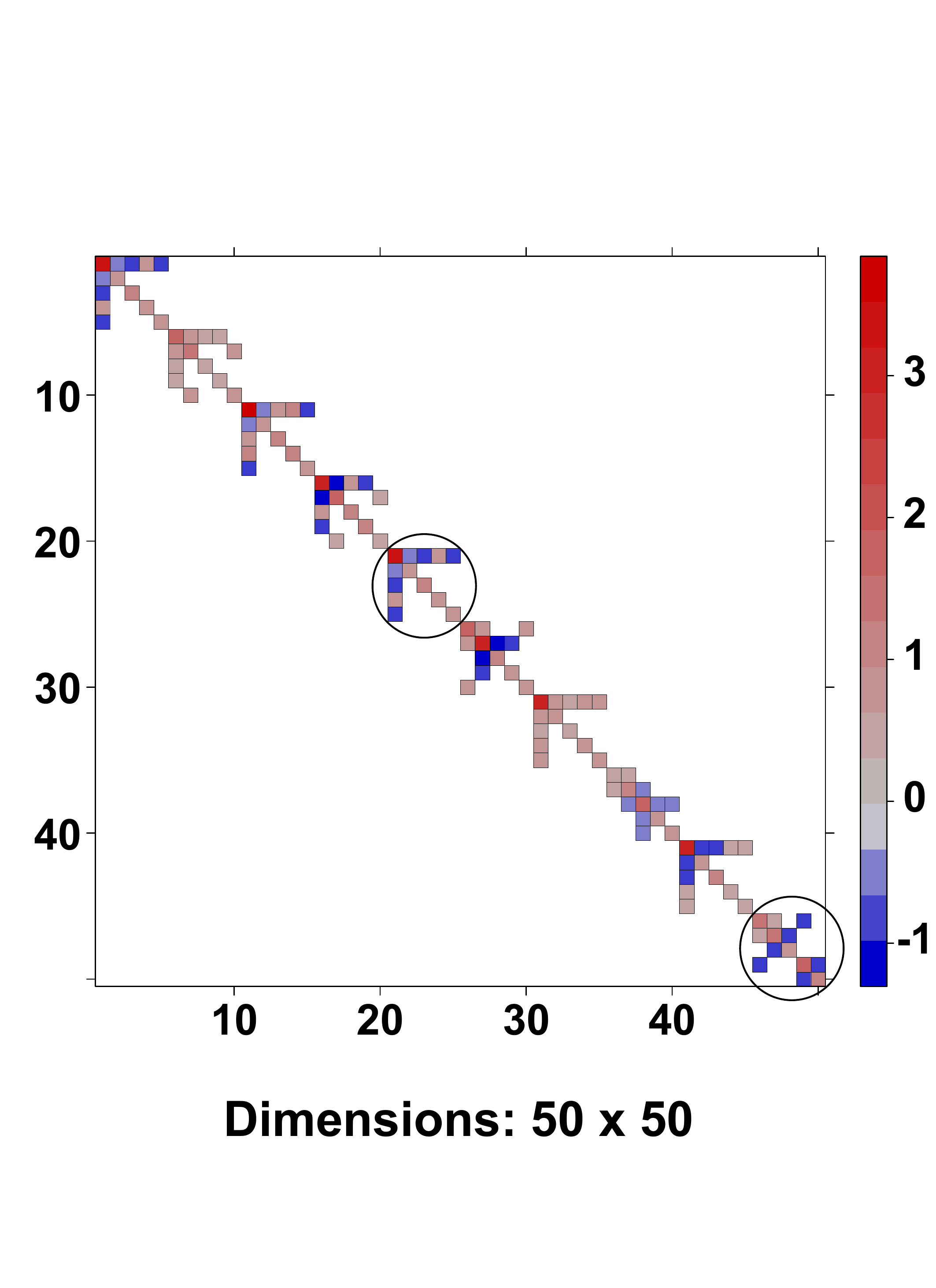}
\label{fig:adj_b}}
\caption{The child cluster is obtained from the parent cluster by regenerating module 5 and perturbing the edge weights of module 10.}
\label{fig:generating_adj}
\end{figure}

\begin{figure*}[!t]
\centering
\subfloat[Precision]{\includegraphics[width=2.2in]{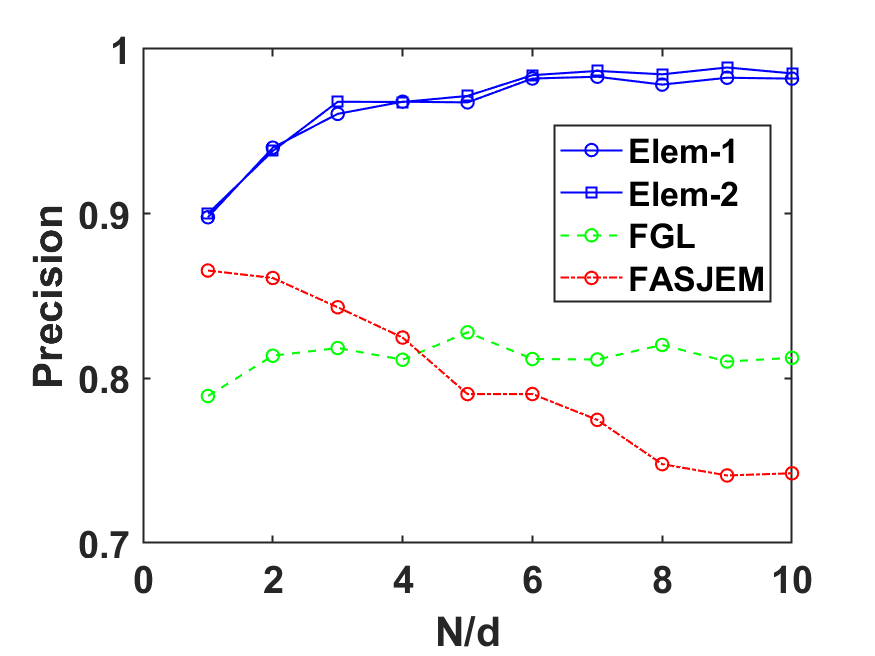}\label{fig:accuracy_a}}
\subfloat[Recall]{\includegraphics[width=2.2in]{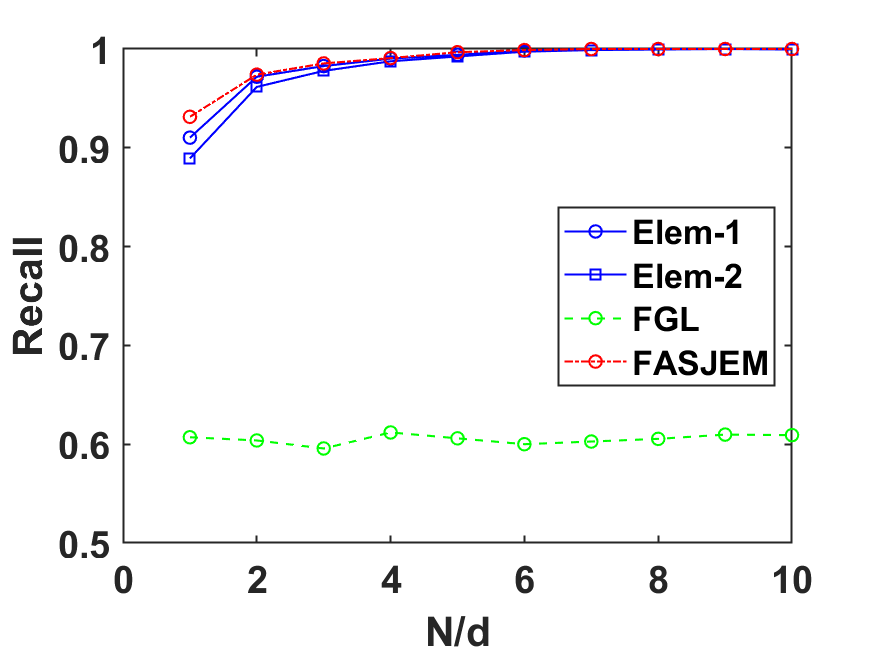}
\label{fig:accuracy_b}}
\subfloat[F1-score]{\includegraphics[width=2.2in]{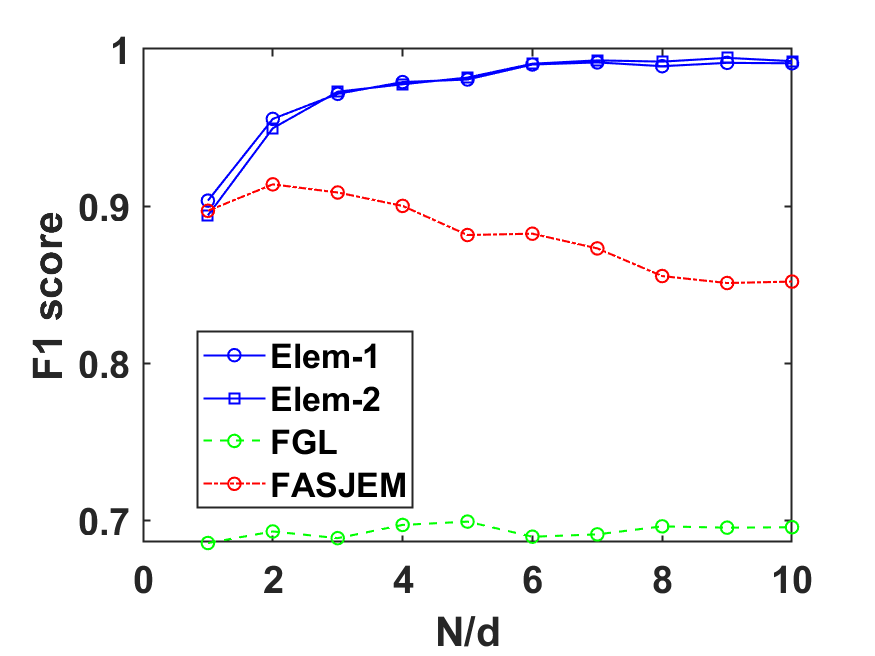}
\label{fig:accuracy_c}}
\caption{Precision, Recall, and F1-score for the estimated precision matrices for different methods with varying sample size. \textit{Elem-1} and \textit{Elem-2} perform similarly, and they both outperform \textit{FGL} and \textit{FASJEM}. The higher value of Recall for \textit{FASJEM} is due to the underestimation of the regularization parameters that promote sparsity, which in turn leads to a large value of TP.}
\label{fig:accuracy}
\end{figure*}

\vspace{1mm}
\noindent{\it \underline{Experiment 1: Varying number of samples.}} In our first experiment, we fix $K = 5, d=250$, and $M=5$, and compare the performance of \ref{problem:lq} with FGL and FASJEM with varying number of samples $n_k$.
We compare the estimation accuracy in terms of $\text{Recall} = {\text{TP}}/({\text{TP}+\text{FN}}),\ \text{Precision} = {\text{TP}}/({\text{TP}+\text{FP}}),$ and $\ \text{F1-score} = 2(\text{Recall}\times \text{Precision})/(\text{Recall}+\text{Precision})$, where TP, FN, and FP correspond to the true positive, false negative, and false positive values, respectively. To fine-tune the weight matrix $W$ and the parameters $(\mu,\gamma,\nu_k)$, we use the distance measure and BIC approach delineated in Section~\ref{sec_alg}. Moreover, we use the same BIC approach to fine-tune the parameters of FGL and FASJEM.

Figure \ref{fig:accuracy} illustrates the performance of different estimation methods. It can be seen that \ref{problem:lq} with $q=1,2$ (denoted as \textit{Elem-$1$} and \textit{Elem-$1$}) perform almost the same, and they both outperform FASJEM and FGL in terms of the Precision and F1 scores. On the other hand, the Recall score for FASJEM is artificially high due to the underestimation of the regularization parameters via BIC, which in turn leads to overly dense estimation of the precision matrices.

\vspace{1mm}
\noindent{\it \underline{Experiment 2: Varying dimension}.}
Next, we analyze the performance of our proposed method for different dimensions $d$. In particular, we consider a high-dimensional regime where $d$ is significantly larger than the number of available samples $n_k$. We fix $K=5$ and set $n_k = d/2$. The parameters $\mu,\gamma,\nu_k$ and the weight matrix $W$ are tuned as before.

Figure \ref{fig:fixL_accuracy} depicts the Precision, Recall, and F1-score, as well as the runtime of our proposed method and FASJEM with respect to $Kp = Kd(d+1)/2$ which ranges from $10^5$ to $2.5\times 10^6$.\footnote{Due to the large scale of these instances, FGL did not converge within 10 minutes even for the smallest instance with $d=200$. Therefore, it is omitted from our subsequent experiments.} It can be seen that the runtime of our proposed method scales almost linearly with $p$, with the largest instance solved in less than 2 minutes. On the contrary, FASJEM has an undesirable dependency on $p$, with a runtime exceeding 10 minutes for medium-scale instances of the problem. The linear time of our algorithm with respect to $p$ is due to its decomposable nature of over different coordinates of the precision matrices. 

\vspace{1mm}

\begin{figure*}[!t]
\centering
\subfloat[Precision]{\includegraphics[width=3in]{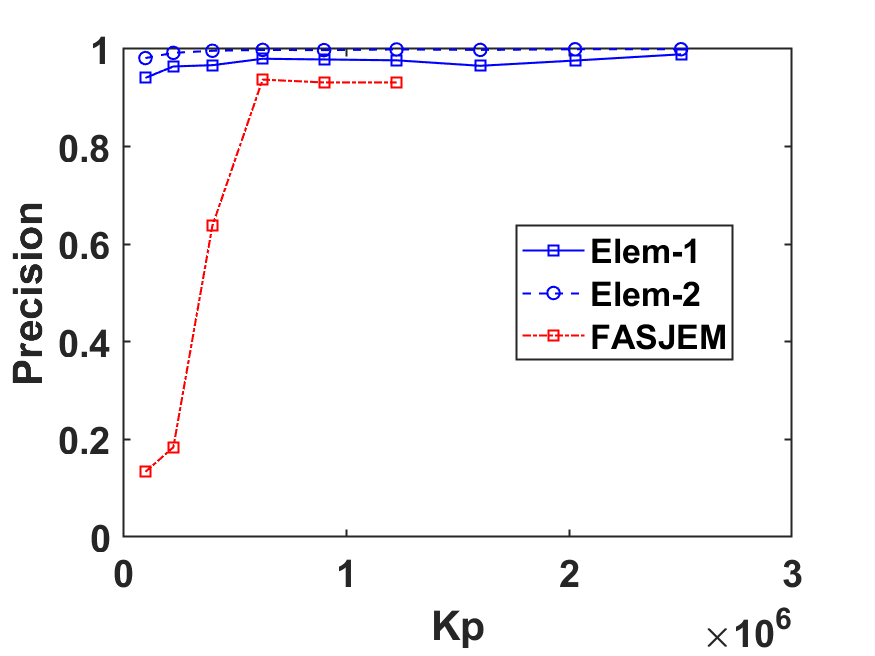}
\label{fig:fixL_a}}
\hspace{-6mm}
\subfloat[Recall]{\includegraphics[width=3in]{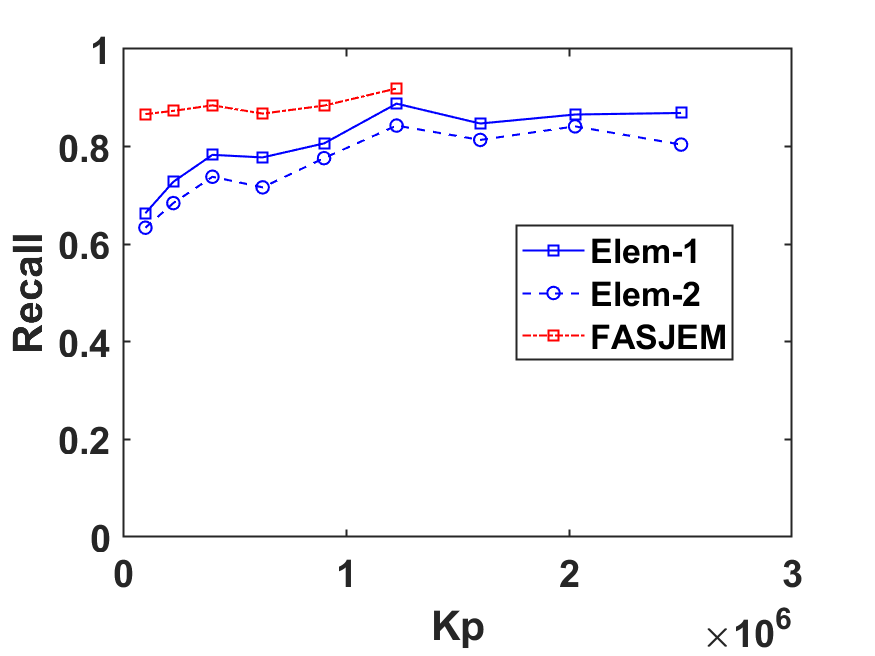}
\label{fig:fixL_b}}
\hspace{-6mm}
\subfloat[F1-score]{\includegraphics[width=3in]{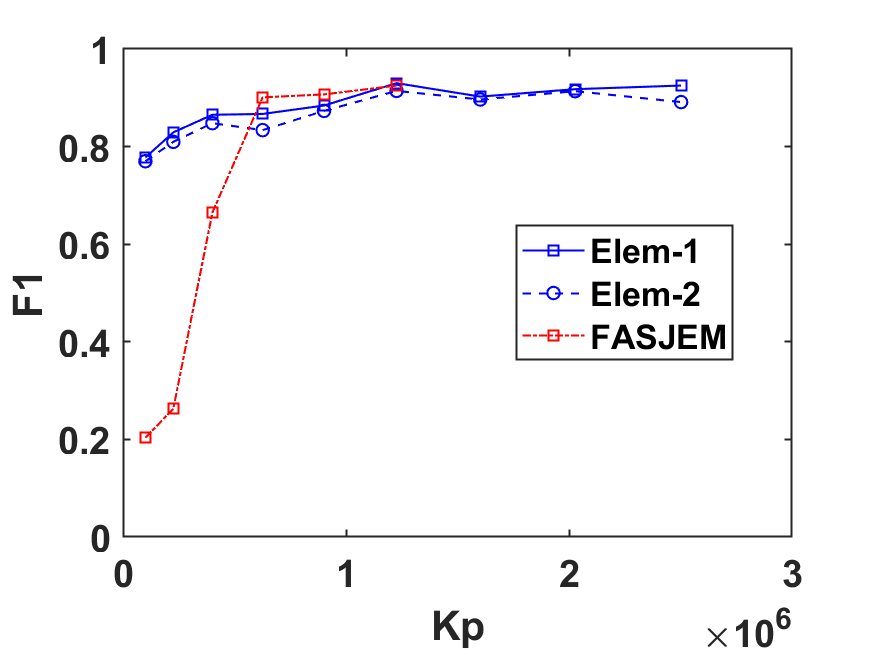}
\label{fig:fixL_c}}
\hspace{-6mm}
\subfloat[Runtime]{\includegraphics[width=3in]{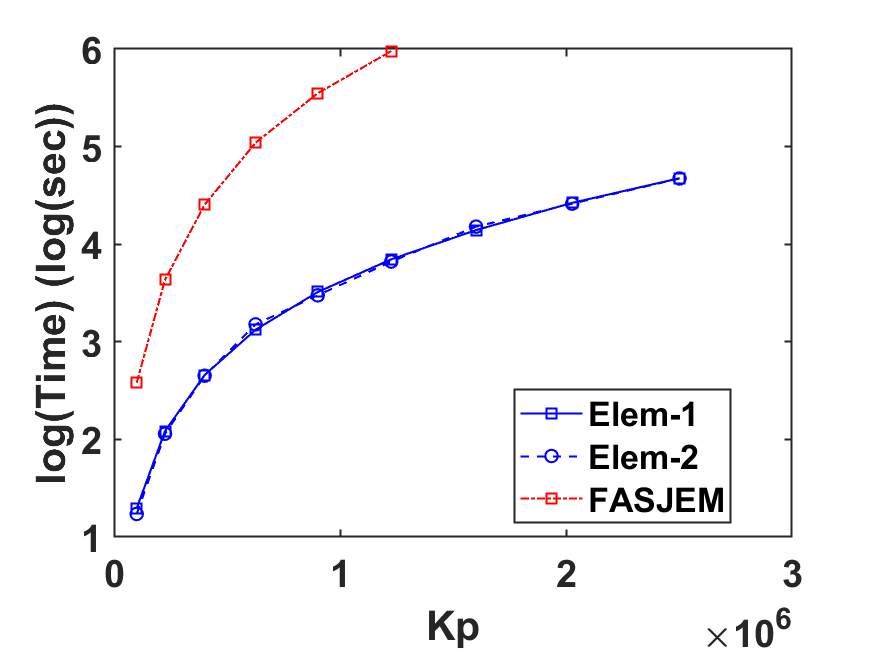}
\label{fig:fixL_d}}
\caption{Precision, Recall, F1-score for the estimated precision matrices, as well as the runtime of our proposed method with varying dimension. \textit{Elem-1} and \textit{Elem-2} outperform \textit{FASJEM} in terms of Precision and F1-score, while \textit{FASJEM} outperforming \textit{Elem-1} and \textit{Elem-2} in terms Recall. Similar to the previous experiment, the higher value of Recall for \textit{FASJEM} is due to the underestimation of the regularization parameters which lead to overly dense precision matrices. Moreover, both \textit{Elem-1} and \textit{Elem-2} are drastically faster than \textit{FASJEM}.}
\label{fig:fixL_accuracy}
\end{figure*}

\begin{figure*}[!t]
\centering
\subfloat[Precision]{\includegraphics[width=3in]{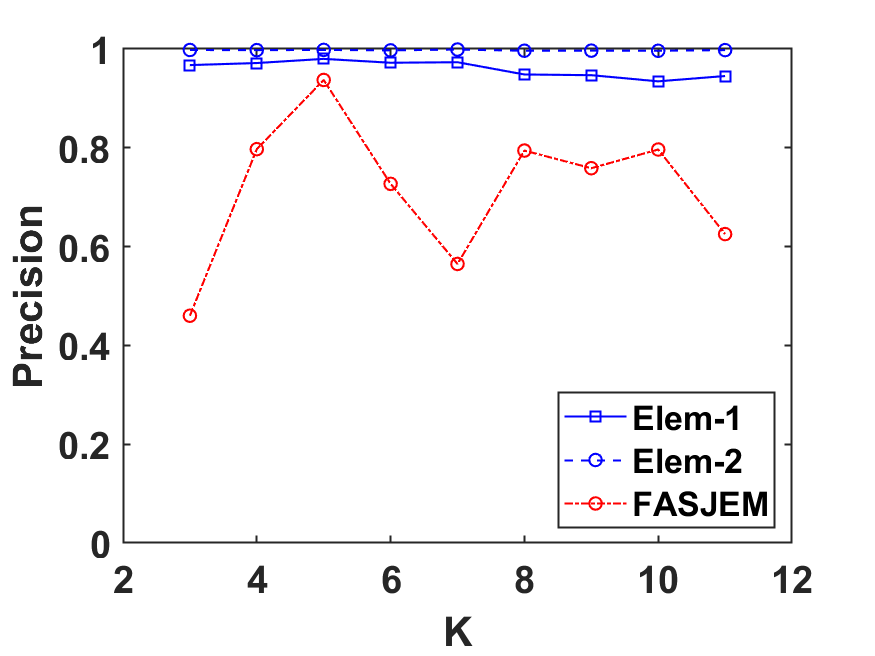}
\label{fig:fixP_a}}
\hspace{-6mm}
\subfloat[Recall]{\includegraphics[width=3in]{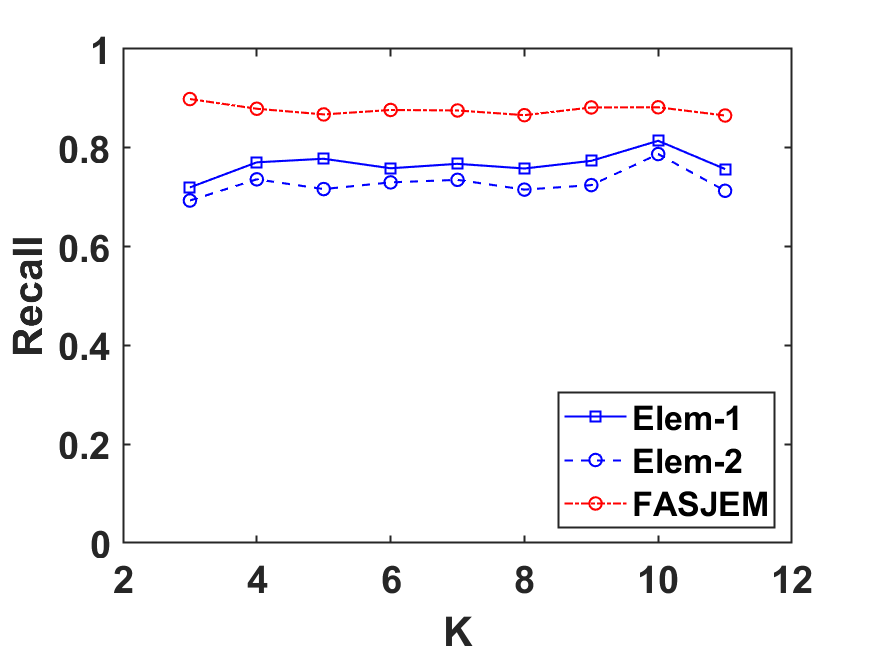}
\label{fig:fixP_b}}
\hspace{-6mm}
\subfloat[F1-score]{\includegraphics[width=3in]{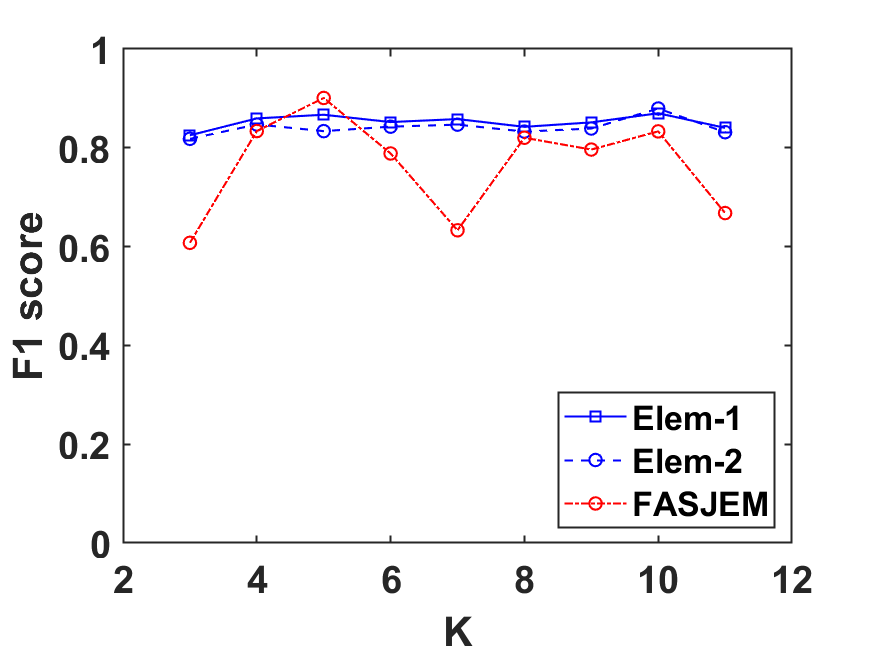}
\label{fig:fixP_c}}
\hspace{-6mm}
\subfloat[Runtime]{\includegraphics[width=3in]{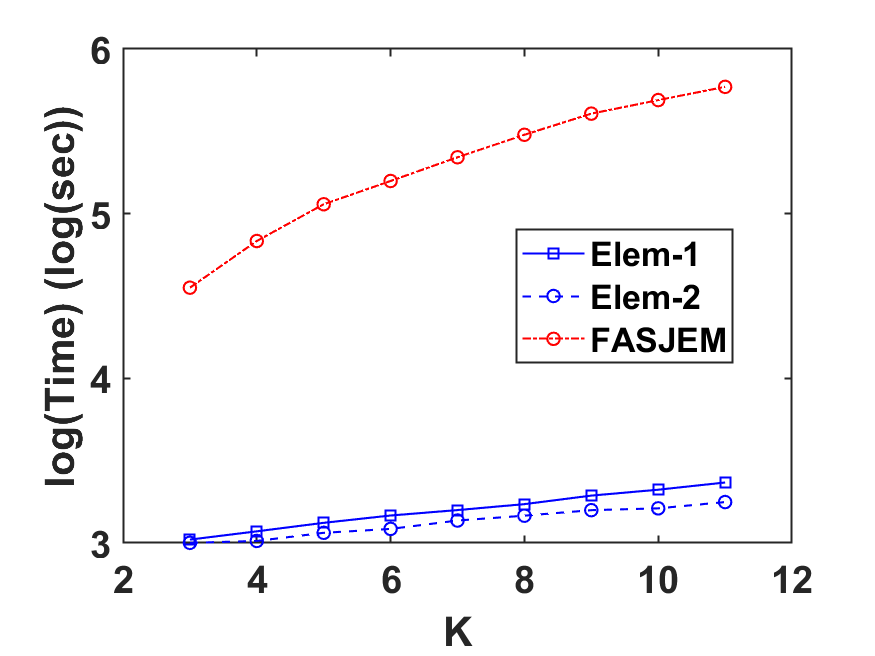}
\label{fig:fixP_b}}
\caption{Precision, Recall, F1-score for the estimated precision matrices, as well as the runtime of our proposed method with varying number of clusters. \textit{Elem-1} and \textit{Elem-2} outperform \textit{FASJEM} in terms of Precision and F1-score, while \textit{FASJEM} outperforming \textit{Elem-1} and \textit{Elem-2} in terms Recall. Similar to the previous experiments, the higher value of Recall for \textit{FASJEM} is due to the underestimation of the regularization parameters which lead to overly dense precision matrices. Moreover, both \textit{Elem-1} and \textit{Elem-2} are drastically faster than \textit{FASJEM}.}
\label{fig:fixP_accuracy}
\end{figure*}
\noindent{\it \underline{Experiment 3: Varying number of clusters.}}
Finally, we evaluate the performance of our method with varying number of clusters $K$. We fix $d=500$, $M=10$ and $n_k = 250$, and use the same tuned parameters in the previous experiment. Figure \ref{fig:fixP_accuracy} shows the Precision, Recall, and F1 score for our proposed method and FASJEM, as well as their runtime with respect to $K$. Similar to the previous experiments, both Elem-1 and Elem-2 outperform FASJEM in terms of the estimation accuracy. Moreover, it can be seen that in practice, the runtime of Elem-1 and Elem-2 scale almost linearly with $K$.

\section{Application to Glioblastoma Spatial transcriptomics dataset}

 We collected gene expression profile using the Visium spatial transcriptomics (ST) platform from a primary GBM patient tumor showing high perfusion signal in diffusion MRI (relative cerebral blood volume parameter derived from dynamic susceptibility contrast MR perfusion). We sampled two adjacent tissue sections, giving us ~6500 spots with transcriptomics data from this region. Since most routine clustering algorithms for spatial transcriptomics datasets are only based on expression-based proximity between cells, and completely ignore the spatial information, we first define a simple clustering algorithm that is also informed by spatial context.

We integrate data from adjacent tissue slices using the reciprocal PCA method in the \texttt{R} package for single cell data analysis (also known as \textit{Seurat}) \cite{hao2021integrated}. We use the dimension reduction algorithm PHATE \cite{moon2019visualizing} to obtain a 3D embedding of the integrated counts data. We then compute pairwise Euclidean distances between spots in the embedded space, and with their spatial coordinates. We perform upper quantile normalization of distance matrices based on their 75th quantile to ensure that both expression and spatial distances are in the same scale, and use their sum to define pairwise distances between spots. This dissimilarity matrix is used as input for PAM clustering. Optimal number of clusters (k = 5) is identified using the Calinski-Harabasz criterion \cite{calinski1974dendrite}, with the resulting clusters shown in  Figure \ref{fig:hp_clusters}. 

\begin{figure}[!t]
    \centering
    \includegraphics[width=6in]{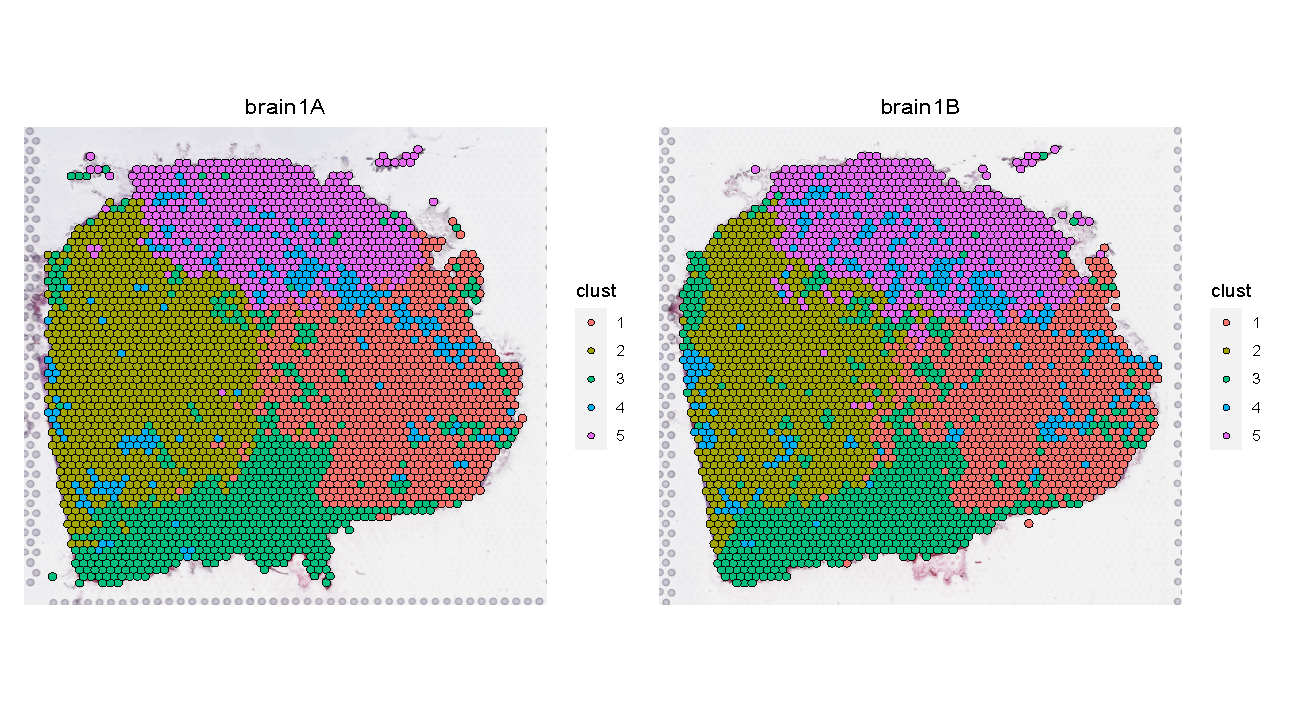}
    \caption{Adjacent tumor sections from a primary GBM patient sample are separated into five distinct clusters informed by their expression similarity and spatial proximity.}
    \label{fig:hp_clusters}
\end{figure}

In order to understand biological characteristics of these clusters, and to aid in downstream interpretation of inferred networks, we performed spot deconvolution using the RCTD algorithm \cite{cable2021robust}. Since spots in the Visium microarray have a resolution of about 60$\mu m$, they could be composed of multiple cell types. We thus used annotated single cell RNASeq dataset from \cite{darmanis2017single} to identify cell type compositional differences between the regions. We visualize in Figure \ref{fig:hp_escore} the proportion of each spot containing each of the major cell types.  We see that the tissue is primarily composed of neoplastic cells, with some vascular niches and Astrocytic populations.  We can see that Cluster 4 corresponds to a distinct vascular niche in the tumor with significant immune infiltration, and Cluster 5 has some non-tumor astrocytic cells. Thus the obtained clusters are biologically meaningful, and we can now seek to understand how gene network interactions vary in different microenvironments  of this tumor.

\begin{figure}[!t]
    \centering
    \includegraphics[width=1\textwidth]{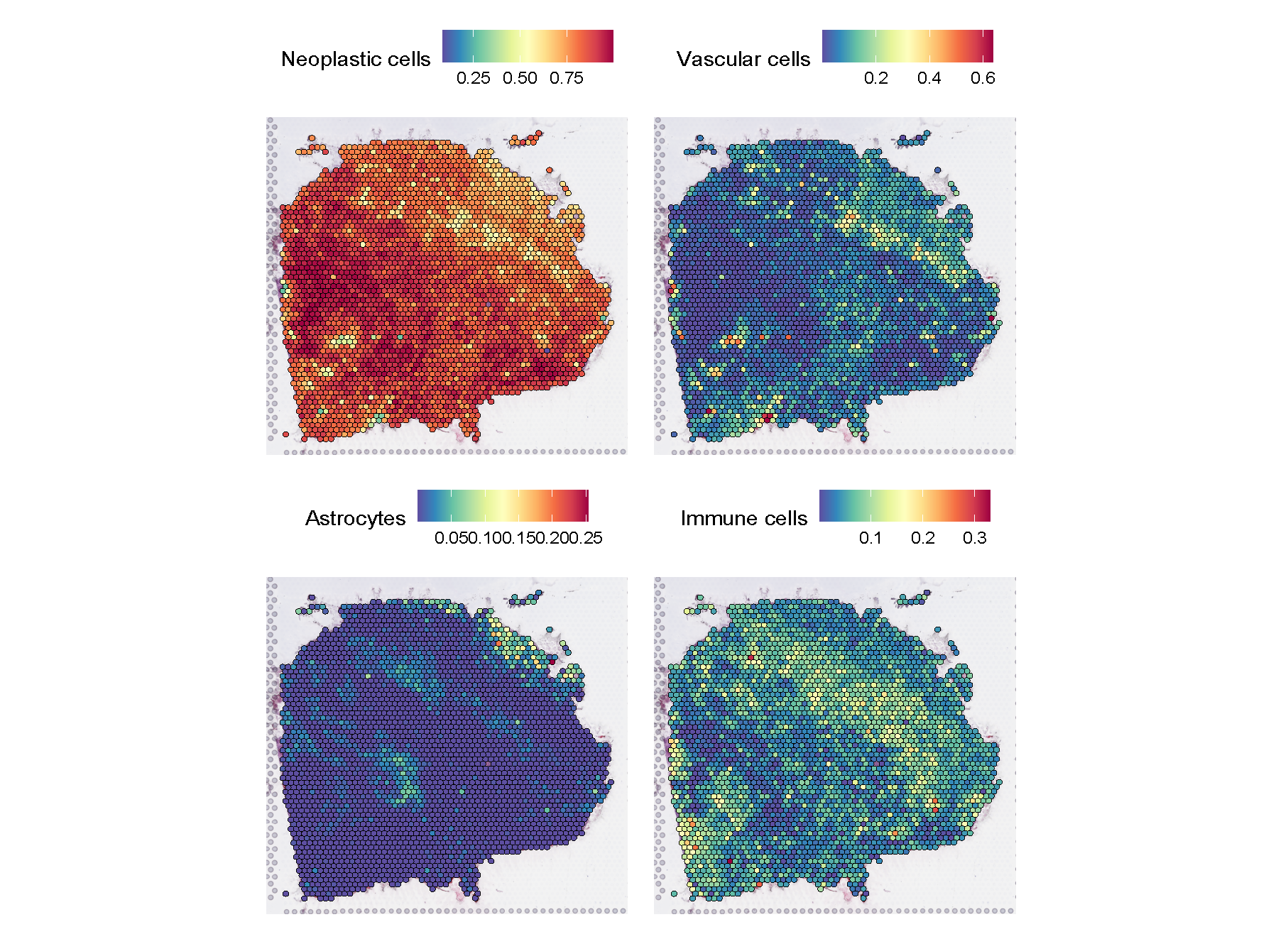}
    \caption{Output of RCTD spot deconvolution algorithm, visualized as fraction of spots composed of major contributing cell types. Cluster 4 is enriched for Vascular and immune cells, and cluster 5 has some nascent astrocytic populations.}
    \label{fig:hp_escore}
\end{figure}

Having defined biologically meaningful subdomains in the tumor section, we identify the top 2500 genes showing significant spatial trends in their expression, as determined using the SparkX algorithm \cite{zhu2021spark}. Since the Visium data is highly sparse, we carry out a non-paranormal transformation of normalized spot-level counts data using the \texttt{huge} package in \texttt{R} \cite{zhao2012huge}. Inter-cluster similarity constraints for network inference are imposed based on pairwise distances between the cluster medoids. Similar to the previous case study, we use the BIC criterion to learn the optimal parameter values.
Using these parameters, we find that of 2500 genes, 1180 have an edge in at least one cluster.

The number of inferred edges per cluster are respectively 4511, 13785, 446, 8400 and 4534. The significant variation in the number of edges in the clusters could be caused by the difference in numbers of detected read counts across the tissue section. This is shown in Figure \ref{fig:nFeature}, where we can see that Cluster 3 has significantly fewer detected genes than the other regions. While performing count imputation could help to some extent, we observed that the variation in the expressed gene counts persisted, showing similar spatial trends. This is therefore a biological and not technical artefact of the data.

\begin{figure}[!t]
    \centering
    \includegraphics[width=6in]{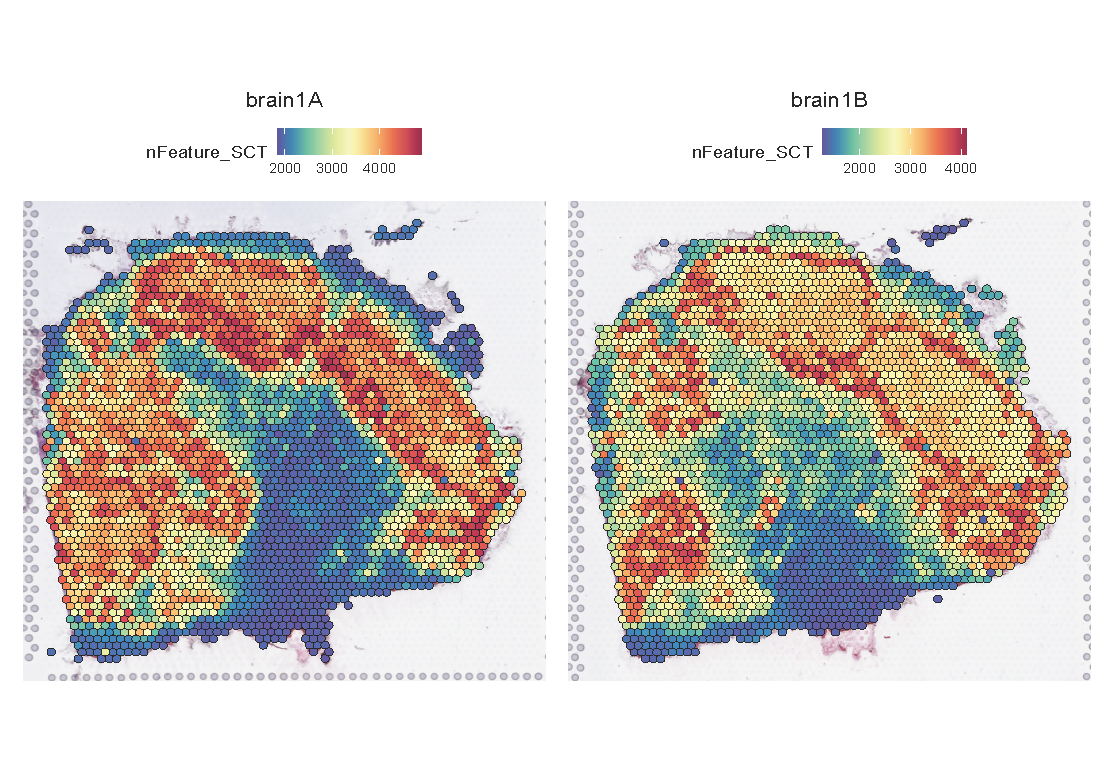}
    \caption{The number of detected reads and genes varies significantly across the tissue section. The regions associated with cluster 3 have the least number of detected genes, and could explain the very low number of edges in the corresponding graph.}
    \label{fig:nFeature}
\end{figure}

To compare the extent of similarity between the inferred networks, we use the DeltaCon algorithm \cite{koutra2013deltacon}, a statistically principled and scalable inter-graph similarity function. The relative similarity values are shown in Figure \ref{fig:deltacon}. We can see that inferred networks from the different clusters are largely distinct, with Clusters 1 and 3 having the maximum pairwise similarity of 0.27. The network in Cluster 4 is maximally different from the other regions. This is as expected, given that this region is compositionally most unique. The gene network in Cluster 2 is also very different from other zones. As we saw in Figure \ref{fig:nFeature}, this region is transcriptionally most active, and the inferred graph has nearly twice as many interactions between genes as in the other regions.

\begin{figure}[!t]
    \centering
    \includegraphics[width=0.6\textwidth]{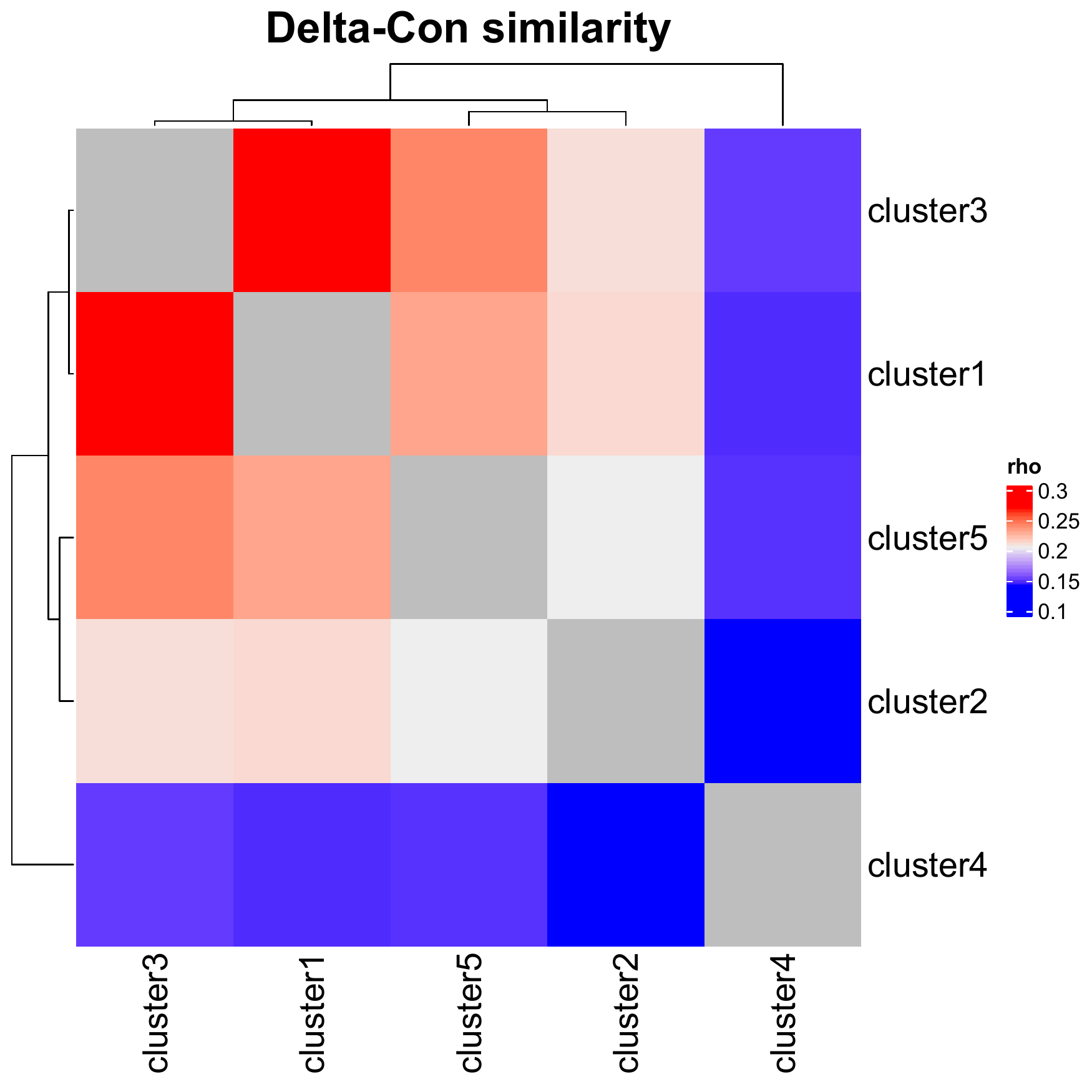}
    \caption{Inter-network similarity computed using the Delta-Con algorithm highlight extent of change in gene regulatory networks between different clusters.}
    \label{fig:deltacon}
\end{figure}

Figure \ref{fig:hp_networks} shows the top 15 strongest edges in each cluster from the inferred network, with node sizes scaled by their respective degree and negative interactions shown as dotted lines. We can immediately see that each cluster has distinct underlying regulatory interactions driving their transcriptional states, even if they appear compositionally homogeneous.

\begin{figure}[!t]
    \centering
    \includegraphics[width=6.5in]{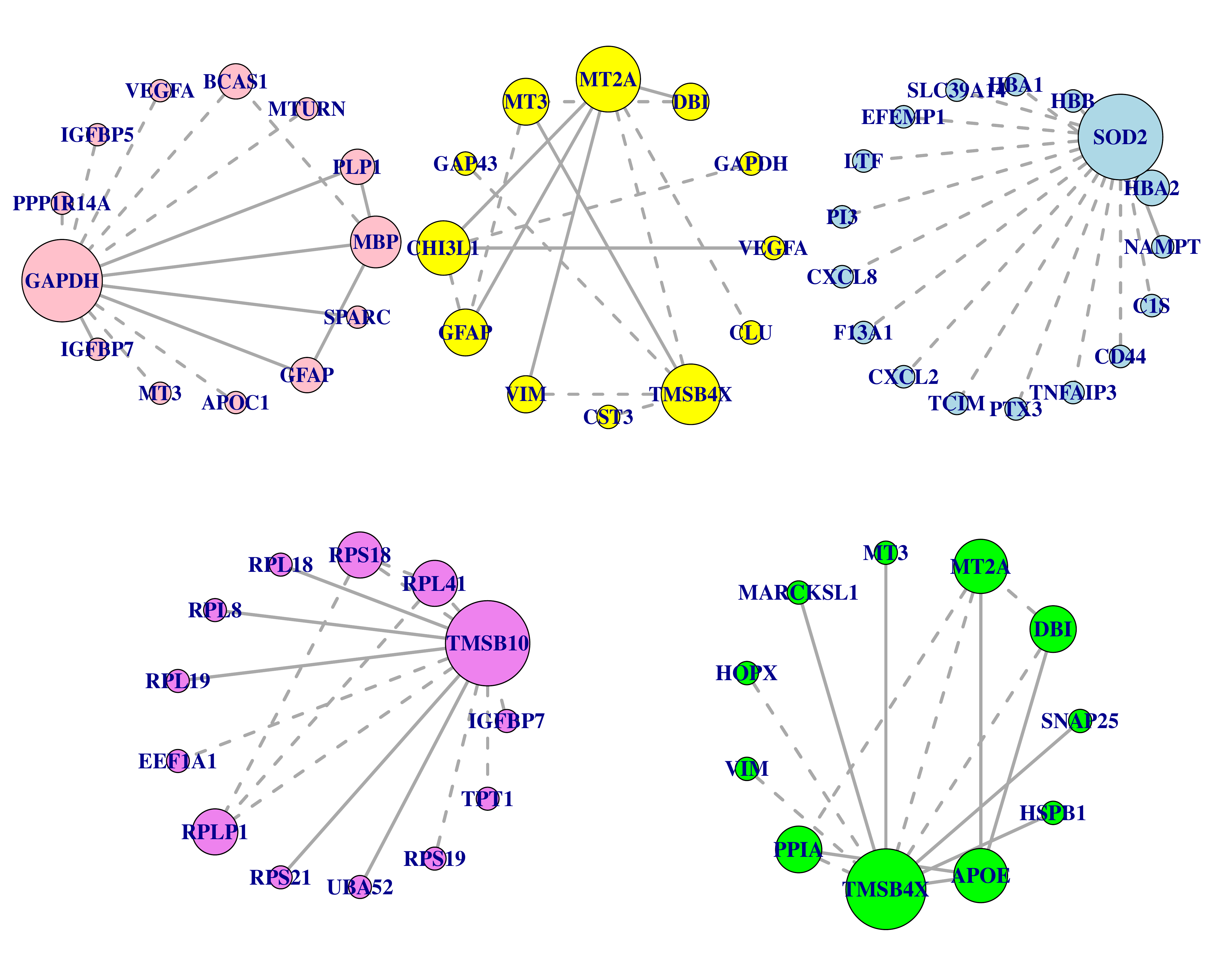}
    \caption{Top 15 strongest edges in each network are shown. Size of the nodes reflects their degree. Dotted lines indicate negative interactions. The nodes are colored by the cluster identity. }
    \label{fig:hp_networks}
\end{figure}

Transcription factors are proteins that play a dominant role in regulating gene expression networks of cells and are particularly important in driving tumor growth and evolution \cite{bushweller2019targeting}. By binding to regulatory regions of target genes, they are responsible for enhancing or suppressing gene expression and thereby controlling cell states. Regulatory interactions involving TFs are therefore of particular interest in understanding gene networks. We highlight these interactions in Figure \ref{fig:TFs}. Since Cluster 2 has an order of magnitude more edges, we highlight only the top 100 edges.

Frequently highlighted TFs active across different regions include the AP1 family TFs FOS and JUN, which are known downstream effectors of the Mesenchymal state in Gliomas \cite{marques2021nf1}, and other master regulators such as CEBPD \cite{wang2021ccaat} and oligodendroglial lineage factors OLIG1 and OLIG2 \cite{kosty2017harnessing}. We also see significant activity of SOX2 in Cluster 1, a known drivers of stemness features and radiation-resistance in Gliomas \cite{stevanovic2021sox}. Cluster 3 shows significant activity of Lactotransferrin (LTF), which encodes an iron-binding protein with known innate immune and tumor-suppresive activity \cite{cutone2020lactoferrin}. Interestingly, this gene has also been characterized as being an upstream master regulator of different GBM subtypes \cite{bozdag2014master}, warranting further exploration of this gene in driving tumorigenesis in GBM. 

The TF network in Cluster 4, which represents the peri-vascular niche, is most different from the other regions, as expected given its unique microenvironment. This region shows prominent activity of HES4, a known downstream effector of the NOTCH signaling pathway that is known to inhibit cell differentiation and helps maintain the stemness features in Gliomas \cite{bazzoni2019role}. HES4 specifically regulates proliferative properties of neural stem cells, and reduces their differentiation. This is a very promising observation, given that the perivascular niche is known to harbor therapy-resistant glioma stem cells whose properties are critically driven through NOTCH signaling \cite{jung2021tumor}. Cluster 5 has dominant activity of NME2, an nucleotide diphosphate kinase enzyme involved in cellular nucleotide metabolism and DNA repair \cite{puts2018nuclear}. The NME2 protein has also been identified to be a highly specific Tumor-associated antigen in IDH mutant Gliomas \cite{dettling2018identification}. By studying the regulatory networks in each cluster separately, we are thus able to infer differential activity of different master regulators in different microenvironmental niches, which informs us of the varying cell states.

\begin{figure}[!t]
    \centering
    \includegraphics[width=6.5in]{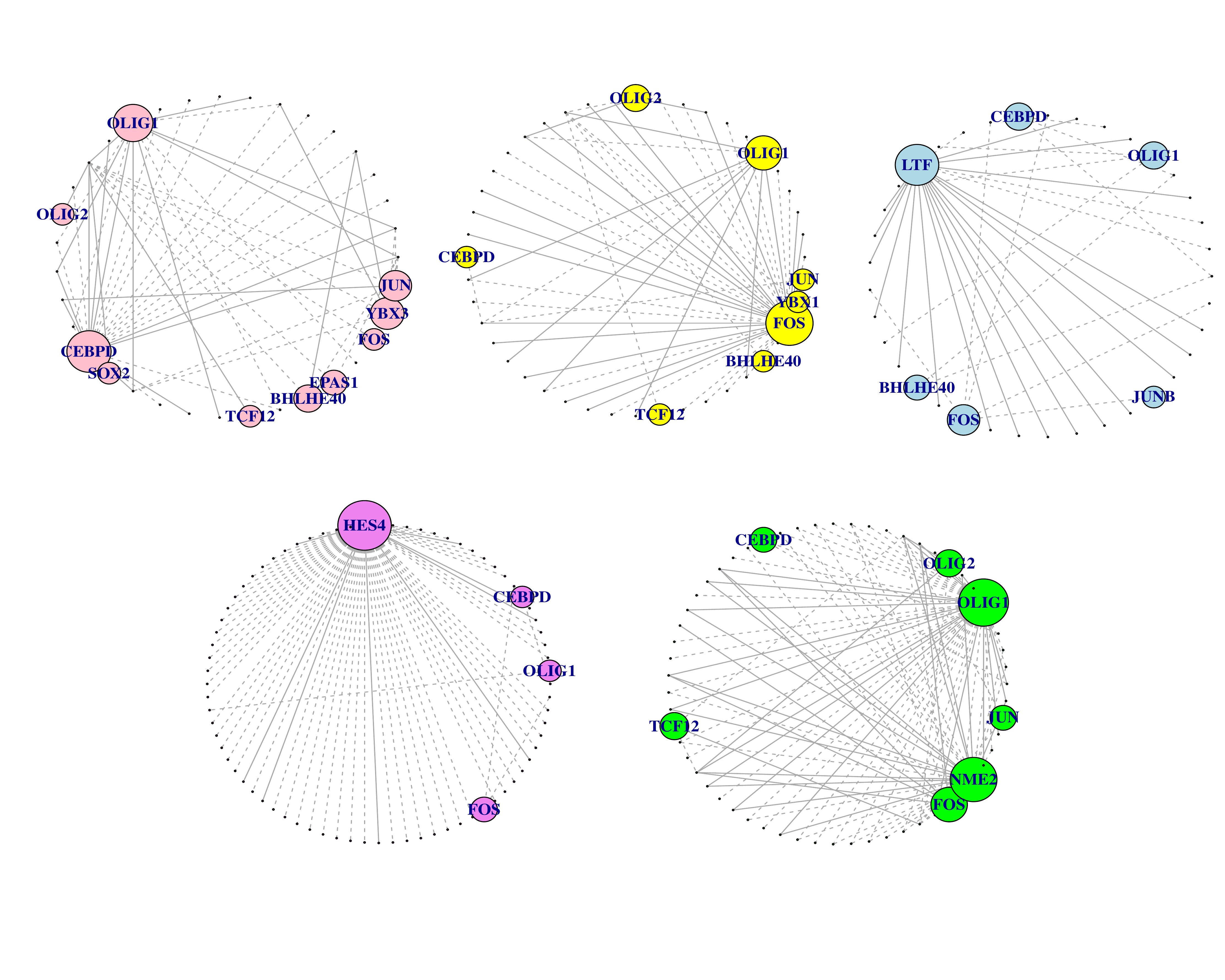}
    \caption{Transcription factor interactions in each cluster are highlighted. For each cluster, we show the top 100 strongest edges involving TFs. The graphs are visualized using the Davidson-Harel layout. Color scheme is same as in Figure \ref{fig:hp_networks}.}
    \label{fig:TFs}
\end{figure}

We observed that the majority of the estimated edges are unique to the respective clusters, with Cluster 4 being most different with $84\%$ unique edges. This network is characterized by a very high degree of connectivity between genes encoding ribosomal proteins. About a third of the nodes (60 / 184) having node degree over 100, and is qualitatively very different from the degree distribution we saw in the other networks. We next characterized the networks using different centrality measures such as their degree, betweenness, closeness, eigen and pagerank, each of which measures a different aspect of importance of nodes \cite{du2019social}. We reduce each network to its set of unique edges, and consider the top $10\%$ of nodes by each centrality measure to be hub genes. We then compare multiplicity of these nodes across the different networks. As expected, majority of hub genes are specific to individual networks. However in spite of removing  shared edges, we observe that the gene TMSB4X is identified as a shared hub gene in Clusters 1, 2 and 5, suggesting its potential importance in Glioma growth. We also note that TMSB4X has the highest degree of all genes in a base network that is shared across clusters. This gene encodes an actin-sequestering polypeptide, and is known to promote stemness and invasive phenotype in Gliomas \cite{cheng2020identification}.

Next we performed a Gene Ontology Enrichment analysis with the cluster-specific hub genes shown in Figure \ref{fig:sp_hubs}. We can clearly see that the networks in each cluster are specific for different tasks. Cluster 1 shows an enrichment for neural differentiation related genes, which are known to be downregulated in GBM, as well as extracellular matrix related processes associated with invasive properties of the tumor. Cluster 2 specific hubs are associated with metabolic and biosynthetic processes. Cluster 3 hubs are associated with innate immune responses, in agreement with our observation that LTF is a major TF in this network. Cluster 4 has a large number of ribosomal genes with high connectivity. High levels of ribosomal protein activity has been shown to be associated with promoting stemness characteristics of Gliomas \cite{shirakawa2020ribosomal}, and we see it to be a defining characteristic of the peri-vascular niche. Cluster 5 shows enrichment for neuronal processes like synaptic transmission, consistent with the presence of normal astrocytes in this region.

In summary, using our scalable framework of inferring gene regulatory networks across multiple spatially informed clusters, we are able to learn and characterize variations in tumor cell states with their microenvironmental niches and identify master regulators that are differentially active in each region. We are able to reinforce the role of known TFs such as CEBPD, FOS-JUN, OLIG and SOX family TFs, as well as identify less known drivers of context-dependent tumor adaptation. Using our method, we can demonstrate the significant strengths and show how it can be used to get a very deep understanding of tumor growth and adaptation from spatial gene expression datasets.

\section{Conclusion}

In this work, we study the inference of spatially-varying Gaussian Markov random fields (SV-GMRFs) and its application in gene regulatory networks in Spatially resolved transcriptomics. The existing methods for inferring GMRFs suffer from the so-called \textit{curse of dimentionality}, which limit their applicability to small-scale and spatially-invariant networks. To address this challenge, we propose a simple and efficient inference framework for inferring SV-GMRFs that comes equipped with strong statistical guarantees. Contrary to the existing MLE-based methods, our proposed method is amenable to parallelization and is based on solving a series of decomposable convex quadratic programs. We show that our proposed method is extremely efficient in practice, and outperforms the existing state-of-the-art techniques---both computationally and statistically. We study the developed framework in the context of inferring gene networks underlying oncogenesis, using Glioblastoma (GBM) as a case study. We uncover the nature of spatial gene relationships across multiple subregions of the tumor. Given that the tumor cell states are dynamically regulated by their spatial context, our discovered context specific master regulators is an important step towards developing targeted therapies in the future.

\begin{figure}[!t]
    \centering
    \includegraphics[width=6in]{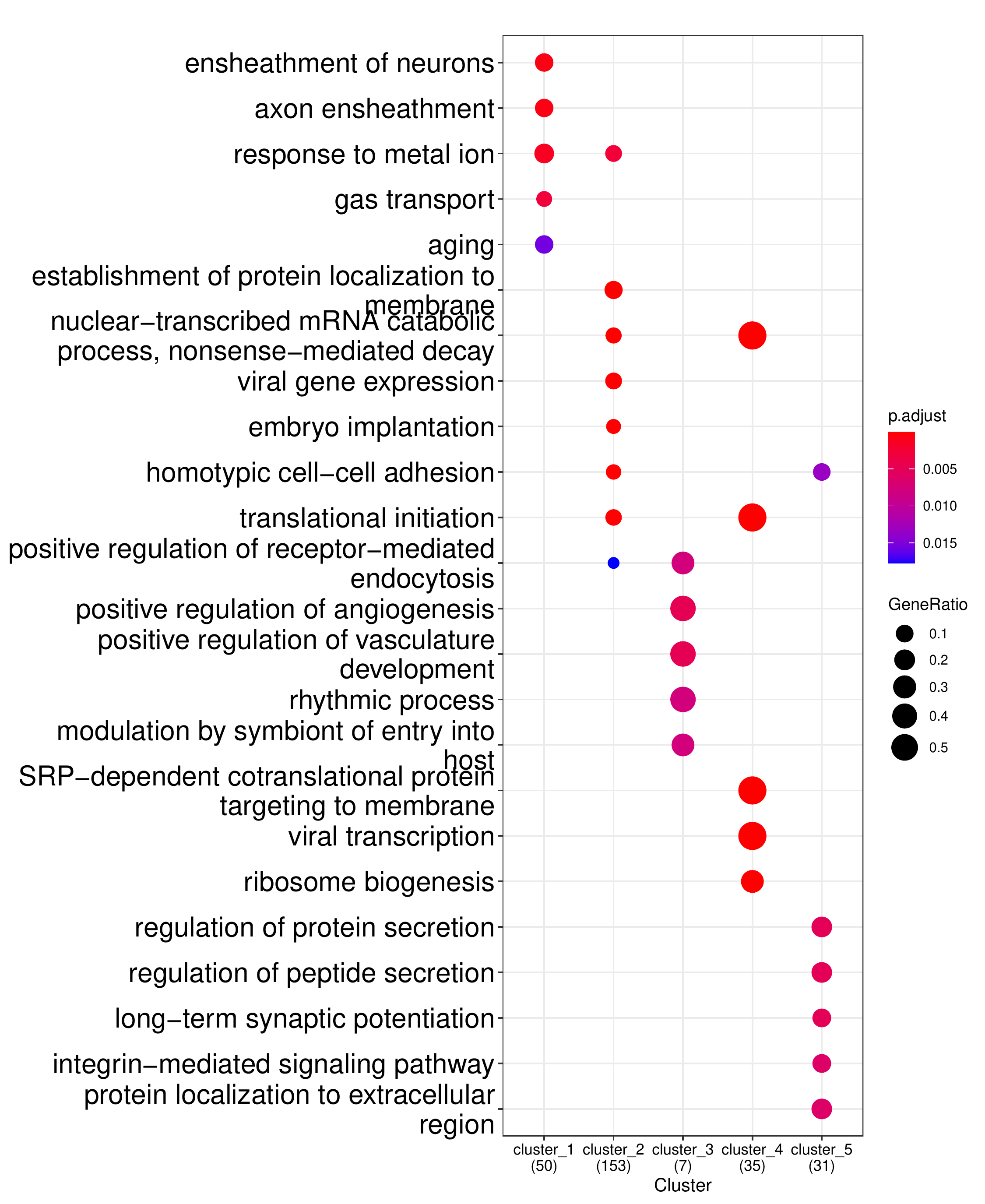}
    \caption{GO enrichment analysis for cluster-specific hub genes shows specific activity of different biological processes across the tumor section. Cluster 4 that represents a perivascular niche shows a high level of activity of ribosomal genes and translational activity.}
    \label{fig:sp_hubs}
\end{figure}

\newpage
\appendix

\section{Proof of Theorem~\ref{thm_smooth}}
The overarching idea behind the proof of Theorem~\ref{thm_smooth} is as follows: we first derive a deterministic guarantee on the estimation error of the proposed method. We then prove our main theorem by extending this result using probabilistic concentration bound.

Our first result provides a set of deterministic conditions for the optimal solution to have small estimation error and correct sparsity pattern.

\begin{prop}[Deterministic guarantee for smoothly-changing SV-MRFs]
	For a given index $(i,j)$, suppose that the regularization parameters satisfy 
	\begin{align}
	    &\gamma< \frac{1}{2K\norm{W}_{\max}},\label{eq_1_smooth}\\
	    &6\|\tilde{F}^*_{ij}\!-\!\Theta^\star_{ij}\|_{\infty}\!+\!6\gamma K\norm{W}_{\max} D\leq \mu< \frac{3\Theta_{\min}}{2}.\label{eq_2_smooth}
	\end{align}
	Then, the following statements hold:\\
	\begin{itemize}
	    \item (Sparsistency) The solution $\widehat{\Theta}_{ij}$ is unique and satisfies $\supp(\widehat{\Theta}_{ij}) = \supp({\Theta}^\star_{ij})$.
	    \item (Estimation error) We have $\|{\widehat{\Theta}_{ij} -\Theta_{ij}^\star}\|_{\infty} \leq { 2\mu}/{3}$
	\end{itemize}
	\label{thm_smooth_det}
\end{prop}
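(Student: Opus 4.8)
The plan is to run a \emph{primal--dual witness} argument tailored to the quadratic-plus-$\ell_1$ structure of~\eqref{problem:lq_ij2} with $q=2$. Write $v=\tilde F^*_{ij}$, $\Theta^\star=\Theta^\star_{ij}$, $S=\supp(\Theta^\star)$, and let $L:=A^\top G^2A$ be the weighted graph Laplacian whose off-diagonal $(k,l)$ entry is $-W_{kl}$ and whose $k$-th diagonal entry is the weighted degree $\sum_{l\neq k}W_{kl}$; with this notation the subproblem reads $\min_\Theta\ \norm{\Theta-v}_2^2+\mu\norm{\Theta}_1+\gamma\,\Theta^\top L\Theta$, whose smooth part has Hessian $2I+2\gamma L\succeq 2I\succ0$. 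Strong convexity therefore delivers uniqueness of the minimizer for free, so the whole burden is to pin down its support and magnitude.

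First I would construct the oracle solution $\widehat\Theta$ by minimizing the objective over the subspace $\{\Theta:\Theta_{S^c}=0\}$ and setting $\widehat\Theta_{S^c}=0$. The restricted stationarity condition is $(I+\gamma L_{SS})\widehat\Theta_S=v_S-\tfrac\mu2 z_S$ with $z_S=\mathrm{sign}(\widehat\Theta_S)$. Subtracting $(I+\gamma L_{SS})\Theta^\star_S$ and using $\Theta^\star_{S^c}=0$ yields
\[
(I+\gamma L_{SS})(\widehat\Theta_S-\Theta^\star_S)=(v_S-\Theta^\star_S)-\gamma (L\Theta^\star)_S-\tfrac\mu2 z_S .
\]
The key structural observation is that $I+\gamma L_{SS}$ is strictly diagonally dominant: its $k$-th row has diagonal $1+\gamma\sum_{l\neq k}W_{kl}$ and off-diagonal mass $\gamma\sum_{l\in S,\,l\neq k}W_{kl}$, so the row excess is $1+\gamma\sum_{l\notin S}W_{kl}\ge 1$, whence $\inftoinf{(I+\gamma L_{SS})^{-1}}\le 1$. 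Bounding the three right-hand terms separately — the first by $\linfinity{v-\Theta^\star}$, the second by $\gamma K\norm{W}_{\max}D$ via Assumption~\ref{asmp_smooth} (each pairwise difference satisfies $|\Theta^\star_{k}-\Theta^\star_{l}|\le D$, so $\linfinity{L\Theta^\star}\le K\norm{W}_{\max}D$), and the third by $\mu/2$ — and invoking the left inequality of~\eqref{eq_2_smooth}, which forces $\linfinity{v-\Theta^\star}+\gamma K\norm{W}_{\max}D\le\mu/6$, I obtain the estimation bound $\linfinity{\widehat\Theta-\Theta^\star}\le \mu/6+\mu/2=2\mu/3$.

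The decisive step is verifying \emph{strict dual feasibility} on the complement: for each $k\in S^c$ the induced subgradient entry is $z_k=\tfrac2\mu\big(v_k-\gamma(L\widehat\Theta)_k\big)$, and I must show $|z_k|<1$. Since $\Theta^\star_k=0$ and $\widehat\Theta_{S^c}=0$, the coupling term is $(L\widehat\Theta)_k=-\sum_{l\in S}W_{kl}\widehat\Theta_l$, which I would split as $-\sum_{l\in S}W_{kl}\Theta^\star_l-\sum_{l\in S}W_{kl}u_l$ with $u=\widehat\Theta-\Theta^\star$. The first piece is controlled by $\gamma K\norm{W}_{\max}D$ again through smoothness (and the fact that $\Theta^\star$ is small near a zero coordinate), while the second is exactly where the upper bound $\gamma<\tfrac{1}{2K\norm{W}_{\max}}$ from~\eqref{eq_1_smooth} is spent, damping the accumulated on-support error $\linfinity{u_S}\le2\mu/3$ spread across up to $K$ coordinates. \emph{This is the main obstacle}: the Laplacian spatial penalty couples the off-support coordinate $k$ to the entire on-support block $\widehat\Theta_S$, so $z_k$ cannot be certified coordinate-by-coordinate as in ordinary Lasso; the requisite slack must be manufactured by trading the factor $K$ from the $K$-fold summation against the bound on $\gamma$, and this is precisely the constant bookkeeping that dictates the numerical constants in conditions~\eqref{eq_1_smooth}--\eqref{eq_2_smooth}.

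Finally I would assemble the two conclusions. Strong convexity already supplies uniqueness; strict dual feasibility certifies that the oracle $\widehat\Theta$ satisfies the full (unrestricted) KKT system and hence equals the unique global minimizer, forcing $\supp(\widehat\Theta)\subseteq S$. The right inequality $\mu<\tfrac32\Theta_{\min}$ of~\eqref{eq_2_smooth} combined with the estimation bound gives $\linfinity{\widehat\Theta-\Theta^\star}\le 2\mu/3<\Theta_{\min}$, so every $k\in S$ obeys $|\widehat\Theta_k|\ge\Theta_{\min}-2\mu/3>0$; therefore $\supp(\widehat\Theta)=S=\supp(\Theta^\star)$, which together with the displayed error bound establishes both claims.
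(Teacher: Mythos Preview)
Your proposal is correct and follows the same primal--dual witness logic as the paper, with one presentational difference worth noting. The paper first performs a Cholesky factorization $C^\top C=I+\gamma A^\top G^\top GA$ to recast~\eqref{problem:lq_ij2} as a standard Lasso $\min\ \ltwonorm{y-C\Theta_{ij}}^2+\mu\lonenorm{\Theta_{ij}}$, then proves a mutual-incoherence lemma for $C$ and invokes the textbook PDW machinery of~\cite{wainwright_2019}; you instead work directly with the KKT system of the original quadratic, exploiting the identical diagonal-dominance of $I+\gamma L_{SS}$ (which is exactly the paper's $C_S^\top C_S$) to bound $\inftoinf{(I+\gamma L_{SS})^{-1}}\le 1$ and to control the off-support coupling. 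Both routes use the same three ingredients---diagonal dominance for the on-support inverse, the smoothness bound $\linfinity{L\Theta^\star}\le K\norm{W}_{\max}D$, and the damping $\gamma K\norm{W}_{\max}<1/2$ for the off-support subgradient---and arrive at the same constants; your direct argument simply avoids the Cholesky detour and the appeal to an external Lasso lemma, which makes it a bit more self-contained.
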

The proof of the above proposition is provided in Appendix~\ref{proof_cor_smooth_det}. Next, we show how this proposition can be used to complete the proof of Theorem~\ref{thm_smooth}. In order to invoke Proposition~\ref{thm_smooth_det}, we first need to prove that its conditions are satisfied. Recall that $\gamma = \sqrt{\log d/(K^2\norm{W}_{\max}^2n_{\min})}$ and $n_{\min}\geq c_1 \left(s(p)^{\frac{2}{1-p}} \log d\right)/\Theta^2_{\min}\geq c_1 \log d$ for some $c_1$. It is easy to see that, for $c_1> 4$, we have $\gamma< {1}/({2K\norm{W}_{\max}})$ and hence, the first condition~\eqref{eq_1_smooth} is satisfied. To show the validity of the second condition~\eqref{eq_2_smooth}, we need the following intermediate lemma borrowed from~\cite{fattahi2021scalable}.

\begin{lem}[Theorem~3 of~\cite{fattahi2021scalable}]\label{lem_backward}
Under the conditions of Theorem~\ref{thm_smooth}, the following inequality holds with probability of at least $1-Kd^{-10}$
\begin{align*}
    \norm{\widetilde{F}^*(\widehat{\Sigma}_k) - \Theta^\star_{k}}_{\max}\leq\frac{128\kappa_1\kappa_3}{\kappa_2}\sqrt{\frac{\log d}{n_k}},\quad \text{for every $k$.}
\end{align*}
\end{lem}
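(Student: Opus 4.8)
\emph{Proof strategy (proposed).} The plan is to fix a single distribution $k$, abbreviate $S_k := \texttt{ST}_{\nu_k}(\widehat{\Sigma}_k)$ for the soft-thresholded sample covariance, and control the max-norm precision error by propagating the estimation error of $S_k$ through the matrix inverse. The starting point is the exact identity
\[
S_k^{-1}-\Theta^\star_k \;=\; S_k^{-1}-(\Sigma^\star_k)^{-1}\;=\;-\,\Theta^\star_k\,(S_k-\Sigma^\star_k)\,S_k^{-1},
\]
together with the elementary submultiplicative estimate $\supnorm{ABC}\leq \inftoinf{A}\,\supnorm{B}\,\onetoone{C}$, which itself follows from $\supnorm{MN}\leq\inftoinf{M}\supnorm{N}$ and $\supnorm{MN}\leq\supnorm{M}\onetoone{N}$. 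This reduces the entire lemma to bounding three scalars: $\inftoinf{\Theta^\star_k}$, the max-norm thresholding error $\supnorm{S_k-\Sigma^\star_k}$, and $\onetoone{S_k^{-1}}$.

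First I would pin down the entrywise concentration of the sample covariance. Since the data are zero-mean Gaussian with $\supnorm{\Sigma^\star_k}\leq\kappa_3$ (Assumption~\ref{asp_bound}), each centered entry $\widehat{\Sigma}_{k;ij}-\Sigma^\star_{k;ij}$ is an average of $n_k$ sub-exponential products, so a Bernstein-type tail bound gives deviations of order $\kappa_3\sqrt{\log d/n_k}$ with polynomially small failure probability. A union bound over the $O(d^2)$ entries and the $K$ distributions then yields $\supnorm{\widehat{\Sigma}_k-\Sigma^\star_k}\lesssim \kappa_3\sqrt{\log d/n_k}$ for all $k$ simultaneously with probability at least $1-Kd^{-10}$; this is exactly where the probability level and the explicit constant $128$ get fixed, since the constant must absorb the sub-exponential norm and the exponent needed to beat $d^2$ in the union bound. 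Because soft-thresholding displaces each entry by at most $\nu_k\asymp\kappa_3\sqrt{\log d/n_k}$, the triangle inequality immediately upgrades this to $\supnorm{S_k-\Sigma^\star_k}\lesssim \kappa_3\sqrt{\log d/n_k}$, which supplies the middle factor.

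Next I would bound the two operator-norm factors. Assumption~\ref{asp_bound} gives $\inftoinf{\Theta^\star_k}\leq\kappa_1$ directly, and its lower-bound condition $\inf_{\|w\|_\infty=1}\|\Sigma^\star_k w\|_\infty\geq\kappa_2$ is precisely the statement $\inftoinf{(\Sigma^\star_k)^{-1}}\leq 1/\kappa_2$. For $\onetoone{S_k^{-1}}=\inftoinf{S_k^{-1}}$ (equal by symmetry of $S_k$), I would run a Neumann-series perturbation argument: writing $S_k=\Sigma^\star_k\bigl(I+\Theta^\star_k(S_k-\Sigma^\star_k)\bigr)$, it suffices that $\inftoinf{\Theta^\star_k(S_k-\Sigma^\star_k)}\leq \tfrac12$, which gives invertibility of $S_k$ (so $\widetilde{F}^*(\widehat{\Sigma}_k)$ is well-defined) and $\inftoinf{S_k^{-1}}\leq 2\,\inftoinf{\Theta^\star_k}\leq 2/\kappa_2$. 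This is the one place where weak sparsity (Assumption~\ref{Assump2}) is indispensable: the standard thresholding analysis, splitting each row into retained and discarded entries and summing against $\sum_j|\Sigma^\star_{k;ij}|^p\leq s(p)$, yields $\inftoinf{S_k-\Sigma^\star_k}\lesssim s(p)\bigl(\kappa_3\sqrt{\log d/n_k}\bigr)^{1-p}$, and the sample-complexity hypothesis $n_k\gtrsim (s(p)/\kappa_2)^{2/(1-p)}\kappa_3^2\log d$ is exactly what forces this quantity below $\kappa_2/2$, verifying the Neumann condition.

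Finally I would assemble the three bounds into
\[
\supnorm{\,\widetilde{F}^*(\widehat{\Sigma}_k)-\Theta^\star_k\,}\;\leq\;\inftoinf{\Theta^\star_k}\,\supnorm{S_k-\Sigma^\star_k}\,\onetoone{S_k^{-1}}\;\lesssim\;\frac{\kappa_1\kappa_3}{\kappa_2}\sqrt{\frac{\log d}{n_k}},
\]
which is the claim up to the absolute constant. The main obstacle is the operator-norm control of the thresholding error under weak sparsity together with the perturbation step guaranteeing that $S_k^{-1}$ stays well-conditioned in the $\infty\to\infty$ norm; the max-norm ingredients are comparatively routine Gaussian concentration, and recovering the precise constant $128$ is merely bookkeeping through the Bernstein tail that I would defer.
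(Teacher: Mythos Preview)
Your proposal is correct and follows the standard route for this type of result: the inverse perturbation identity, entrywise sub-exponential concentration for the sample covariance, the Bickel--Levina thresholding bound under weak sparsity to control the induced $\ell_\infty$-operator norm of $S_k-\Sigma^\star_k$, and a Neumann-series step to transfer the $\kappa_2$ lower bound from $\Sigma^\star_k$ to $S_k$. All ingredients are sound and the sample-size hypothesis is invoked exactly where it is needed.

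As for comparison with the paper: there is essentially nothing to compare. The paper does not prove this lemma at all; it simply declares it a ``direct consequence of the proof of Theorem~3 in~\cite{fattahi2021scalable}'' and omits the details. Your write-up is precisely the argument that underlies that cited theorem, so you are not taking a different route---you are filling in what the paper leaves implicit. The only thing worth flagging is that the exact constant $128$ is inherited verbatim from the cited reference's bookkeeping rather than re-derived here, so if you want to match it you would have to track the constants in the Bernstein bound and union bound exactly as that reference does; your sketch correctly identifies where they enter.
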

\begin{proof}
The proof is a direct consequence of the proof of Theorem~3 in~\cite{fattahi2021scalable}. The details are omitted for brevity.
\end{proof}\vspace{2mm}

\noindent{\it Proof of Theorem~\ref{thm_smooth}.} Based on Lemma~\ref{lem_backward} and our choice of $\gamma$, one can write
\begin{align*}
    6\|\tilde{F}^*_{ij}\!-\!\Theta^\star_{ij}\|_{\infty}\!+\!6\gamma K\norm{W}_{\max} D
    \leq \left(\frac{768\kappa_1\kappa_3}{\kappa_2}+6D\right)\sqrt{\frac{\log d}{n_{\min}}} := \mu
\end{align*}
Now, it only remains to show that the defined $\mu$ indeed satisfies $\mu<3\Theta_{\min}/2$. This can be readily verified by our choice of $n_k$:
\begin{align*}
    n_k\geq c_2 \left(\frac{\kappa_1\kappa_3}{\kappa_2}+D\right)^2\frac{\log d}{\Theta_{\min}^2}
\end{align*}
for sufficiently large constant $c_2$. Therefore, the conditions of Proposition~\ref{thm_smooth_det} are satisfied, and as a result, we have $\supp(\widehat{\Theta}_{ij}) = \supp({\Theta}^\star_{ij})$ for every $(i,j)$ and 
$$
\norm{\widehat{\Theta}_k - \Theta^\star_k}_{\max}\leq 2\mu_3/3\lesssim\left(\frac{\kappa_1\kappa_3}{\kappa_2}+D\right)\sqrt{\frac{\log d}{n_{\min}}}
$$
which completes the proof of Theorem~\ref{thm_smooth}.$\hfill\square$

\section{Proof of Theorem~\ref{thm_sparse}}
Similar to the proof of Theorem~\ref{thm_smooth}, first we provide a deterministic guarantee on the estimation error.

\begin{prop}[Deterministic guarantee on sparsely-changing SV-MRFs]
	For a given $(i,j)$, suppose that the irrepresentability assumption is satisfied. Moreover, suppose that 
	\begin{align*}
	    \frac{8\kappa_{\mathrm{IC}}}{\alpha}\|\tilde{F}^*_{ij}\!-\!\Theta^\star_{ij}\|_{\infty} < \mu < \frac{\min\{2\Theta_{\min}, \Delta\Theta_{\min}\}}{4\sqrt{2\norm{W}_{\max} D_0}+5\sqrt{S_0}},
	\end{align*}
	Then, the following statements hold:\\
	\begin{itemize}
	    \item (Sparsistency) The solution $\widehat{\Theta}_{ij}$ is unique and satisfies $\supp(\widehat{\Theta}_{ij}) = \supp({\Theta}^\star_{ij})$ and $\supp(\widehat\Theta_{k;ij}-\widehat\Theta_{l;ij})=\supp(\Theta^\star_{k;ij}-\Theta^\star_{l;ij})$ for every $l>k$.
	    \item (Estimation error) We have 
	    \begin{align*}
    \left\|\widehat\Theta_{ij}-\Theta_{ij}^{\star}\right\|_{2}
    \leq 2\left(\sqrt{2\norm{W}_{\max}D_0}+1.25\sqrt{S_0}\right)\mu.
\end{align*}
	\end{itemize}
	\label{prop_sparse_det}
\end{prop}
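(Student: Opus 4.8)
The plan is to treat problem~\eqref{lasso_problem_1} as a \emph{generalized Lasso} with identity design, ``noise'' vector $w := \tilde{F}^*_{ij} - \Theta^\star_{ij}$, and penalty matrix $B$, and to establish exact support recovery through the \emph{primal--dual witness} (PDW) construction of~\citet{lee_2013}. I would first record the subgradient optimality condition: $(\widehat\Theta_{ij}, z)$ solves~\eqref{lasso_problem_1} iff $2(\widehat\Theta_{ij}-\tilde{F}^*_{ij}) + \mu B^\top z = 0$ with $z\in\partial\lonenorm{B\widehat\Theta_{ij}}$, so that $z_k = \mathrm{sign}([B\widehat\Theta_{ij}]_k)$ on $\supp(B\widehat\Theta_{ij})$ and $\abs{z_k}\le 1$ elsewhere. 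The target throughout is to show $\supp(B\widehat\Theta_{ij}) = \mathcal{S}_B$: the bottom identity block of $B$ then yields $\supp(\widehat\Theta_{ij}) = \supp(\Theta^\star_{ij})$, while the $GA$ block yields the claimed recovery of the difference supports $\supp(\widehat\Theta_{k;ij}-\widehat\Theta_{l;ij}) = \supp(\Theta^\star_{k;ij}-\Theta^\star_{l;ij})$.

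In the second step I would carry out the PDW. Restrict to the subspace $\{[B\Theta_{ij}]_{\mathcal{S}_B^c}=0\}$, solve the resulting oracle problem for a candidate $\widehat\Theta_{ij}$, and set $z_{\mathcal{S}_B} = \mathrm{sign}((B\Theta^\star_{ij})_{\mathcal{S}_B})$; the remaining dual coordinates $z_{\mathcal{S}_B^c}$ are then pinned down by stationarity through the pseudo-inverse $B_{\mathcal{S}_B:}^\dagger$. The crux of sparsistency is to verify \emph{strict} dual feasibility $\|z_{\mathcal{S}_B^c}\|_\infty < 1$. Expanding $z_{\mathcal{S}_B^c}$, the dominant term is $B_{\mathcal{S}_B^c:}B_{\mathcal{S}_B:}^{\dagger}\,\mathrm{sign}((B\Theta^\star_{ij})_{\mathcal{S}_B})$, which IC (Assumption~\ref{asp_IC}) bounds by $1-\alpha$; the residual term, proportional to $\linfinity{w}/\mu$ and scaled by $\kappa_{\mathrm{IC}}$, is pushed below $\alpha$ precisely by the lower bound $\mu > (8\kappa_{\mathrm{IC}}/\alpha)\linfinity{\tilde{F}^*_{ij}-\Theta^\star_{ij}}$. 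Strict feasibility simultaneously forces uniqueness of the optimum.

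The third step handles the estimation error and the ``no false negatives'' direction. Since the bottom block of $B$ is the identity, $\ltwonorm{\widehat\Theta_{ij}-\Theta^\star_{ij}} \le \ltwonorm{B(\widehat\Theta_{ij}-\Theta^\star_{ij})} = \ltwonorm{[B(\widehat\Theta_{ij}-\Theta^\star_{ij})]_{\mathcal{S}_B}}$, because both $B\widehat\Theta_{ij}$ and $B\Theta^\star_{ij}$ vanish off $\mathcal{S}_B$. Writing the on-support deviation through $B_{\mathcal{S}_B:}^{\dagger}$ and $\kappa_{\mathrm{IC}}$ gives an $\ell_\infty$-bound of order $\mu$ on each coordinate of $B(\widehat\Theta_{ij}-\Theta^\star_{ij})$; summing over the at most $\norm{W}_{\max}D_0 + S_0$ active coordinates (the $\norm{W}_{\max}$ factor arising from the weights in the $GA$ block) produces the stated $\ell_2$ bound $2(\sqrt{2\norm{W}_{\max}D_0}+1.25\sqrt{S_0})\mu$. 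Finally, comparing this coordinatewise $\ell_\infty$ error against the minimum signal strengths $\Theta_{\min}$ and $\Delta\Theta_{\min}$ — which is exactly what the upper bound $\mu < \min\{2\Theta_{\min},\Delta\Theta_{\min}\}/(4\sqrt{2\norm{W}_{\max}D_0}+5\sqrt{S_0})$ guarantees — shows that no true nonzero entry (of either a precision entry or a difference) is estimated as zero, closing the support-recovery argument.

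I expect the main obstacle to be the generalized-Lasso structure: unlike ordinary Lasso, $B$ couples the difference block $\frac{\gamma}{\mu}GA$ with the identity block, so the whole analysis must proceed through the Moore--Penrose pseudo-inverse $B_{\mathcal{S}_B:}^{\dagger}$ rather than an ordinary inverse, and every estimate must be expressed in terms of $\|B_{\mathcal{S}_B^c:}B_{\mathcal{S}_B:}^{\dagger}\|_\infty$ and $\kappa_{\mathrm{IC}}$. The genuinely delicate part is the bookkeeping that converts the dual-feasibility slack and the pseudo-inverse norm into the explicit constants ($8\kappa_{\mathrm{IC}}/\alpha$, $4\sqrt{2}$, $1.25$, $5$) while cleanly separating the $D_0$-dependent (difference) and $S_0$-dependent (sparsity) contributions with the correct $\norm{W}_{\max}$ weighting, and in propagating the per-coordinate $\ell_\infty$ control into both the $\ell_2$ error bound and the sign-preservation check for the individual entries \emph{and} their pairwise differences.
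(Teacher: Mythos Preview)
Your overall framework is right and is exactly what underlies the paper's argument: treat~\eqref{lasso_problem_1} as a generalized Lasso with identity design and noise $w=\tilde F^*_{ij}-\Theta^\star_{ij}$, and run the PDW of \citet{lee_2013}. The paper, however, does not re-derive PDW. It invokes Corollary~4.2 of \citet{lee_2013} as a black-box lemma, which under IC and the lower bound on $\mu$ already delivers uniqueness, the zero pattern $(B\widehat\Theta_{ij})_{\mathcal S_B^c}=0$, and the $\ell_2$ error bound
\[
\|\widehat\Theta_{ij}-\Theta^\star_{ij}\|_2\le 2\Bigl(\kappa_\rho+\tfrac{\alpha}{4}\,\kappa_\varrho/\kappa_{\mathrm{IC}}\Bigr)\mu,
\]
with $\kappa_\rho=\sup\{\|B\theta\|_1:\|\theta\|_2=1,\,(B\theta)_{\mathcal S_B^c}=0\}$ and $\kappa_\varrho=\sup\{\|\theta\|_1:\|\theta\|_2=1,\,(B\theta)_{\mathcal S_B^c}=0\}$. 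The remaining work in the paper is only to bound $\kappa_\rho\le\sqrt{2\|W\|_{\max}D_0}+\sqrt{S_0}$, $\kappa_\varrho\le\sqrt{S_0}$, note $\kappa_{\mathrm{IC}}\ge 1$, and then run the no-false-negatives check using this $\ell_2$ bound (bounding $|\widehat\Theta_{k;ij}-\Theta^\star_{k;ij}|$ and $|\,(\widehat\Theta_{k;ij}-\widehat\Theta_{l;ij})-(\Theta^\star_{k;ij}-\Theta^\star_{l;ij})\,|$ by the $\ell_2$ error).

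Your specific mechanism for the $\ell_2$ bound is the one place where the route diverges and becomes shaky. You propose a coordinatewise $\ell_\infty$ bound on $B(\widehat\Theta_{ij}-\Theta^\star_{ij})$ followed by ``summing over the at most $\|W\|_{\max}D_0+S_0$ active coordinates.'' Two issues: (i) the number of active coordinates is $|\mathcal S_B|\le D_0+S_0$; the factor $\|W\|_{\max}$ does not count coordinates but enters through the $G$-scaling of the amplitudes in the $GA$ block, which is exactly what $\kappa_\rho$ captures; (ii) an $\ell_\infty$ bound of order $\mu$ on $B(\widehat\Theta_{ij}-\Theta^\star_{ij})$ does not fall out of stationarity cheaply here, since $\widehat\Theta_{ij}-\Theta^\star_{ij}=w-\tfrac{\mu}{2}B^\top z$ and $\|BB^\top z\|_\infty$ picks up the column sums of $A$ (order $K$). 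The Lee route via $\kappa_\rho,\kappa_\varrho$ sidesteps this by working directly with $\ell_1/\ell_2$ ratios on the constrained subspace, and that is how the paper obtains the precise constants $\sqrt{2\|W\|_{\max}D_0}$ and $1.25\sqrt{S_0}$. If you insist on re-deriving rather than citing, the cleanest path is to reproduce the $\kappa_\rho,\kappa_\varrho$ argument rather than the $\ell_\infty$-then-sum route.
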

The proof of the above proposition is provided in Appendix~\ref{proof_prop_sparse_det}. Based on the above proposition, we proceed with the proof of Theorem~\ref{thm_sparse}.\vspace{2mm}

\noindent{\it Proof of Theorem~\ref{thm_sparse}.}
In light of Lemma~\ref{lem_backward}, we have
\begin{align*}
    \norm{\widetilde{F}^*(\widehat{\Sigma}_k) - \Theta^\star_{k}}_{\max}\leq\frac{128\kappa_1\kappa_3}{\kappa_2}\sqrt{\frac{\log d}{n_k}},
\end{align*}
for every $k$ and with probability of at least $1-Kd^{-10}$. This implies that 
$$
\frac{8\kappa_{\mathrm{IC}}}{\alpha}\|\tilde{F}^*_{ij}\!-\!\Theta^\star_{ij}\|_{\infty}\leq \frac{1024\kappa_{\mathrm{IC}}\kappa_1\kappa_3}{\kappa_2\alpha}\sqrt{\frac{\log d}{n_{\min}}}:=\mu
$$
Now, in order to use Proposition~\ref{prop_sparse_det}, it suffices to have 
$$
\mu < \frac{\min\{2\Theta_{\min}, \Delta\Theta_{\min}\}}{4\sqrt{2\norm{W}_{\max} D_0}+5\sqrt{S_0}}
$$
which is guaranteed to hold once
$$
n_{\min}\geq c_3\left(\frac{\kappa_{\mathrm{IC}}\kappa_1\kappa_3}{\kappa_2\alpha}\right)\left(\frac{\norm{W}_{\max}D_0+S_0}{\min\{\Theta_{\min},\Delta\Theta_{\min}\}}\right)\cdot \log d
$$
for sufficiently large constant $c_3$. Therefore, Proposition~\ref{prop_sparse_det} holds and we have $\supp(\widehat{\Theta}_{k}) = \supp({\Theta}^\star_{k})$ for every $k$ and $\supp(\widehat\Theta_{k;ij}-\widehat\Theta_{l;ij})=\supp(\Theta^\star_{k;ij}-\Theta^\star_{l;ij})$ for every $i,j$ and $l>k$. Moreover, we have 
\begin{align*}
    \left\|\widehat\Theta_{ij}-\Theta_{ij}^{\star}\right\|_{2}
    &\leq 2\left(\sqrt{2\norm{W}_{\max}D_0}+1.25\sqrt{S_0}\right)\mu\\
    &\lesssim\left(\sqrt{\norm{W}_{\max}D_0}+\sqrt{S_0}\right) \frac{\kappa_{\mathrm{IC}}\kappa_1\kappa_3}{\kappa_2\alpha}\sqrt{\frac{\log d}{n_{\min}}}.
\end{align*}
This completes the proof.$\hfill\square$

\section{Proof of Proposition~\ref{prop_IC}}
Without loss of generality and to streamline the presentation, we assume that $W_{kl} = 1$ for every $k,l$. 
It is easy to observe that $B_{\mathcal{S}_B:}$ has full column rank, thus we have $\left(B_{\mathcal{S}_B:} \right)^{\dagger} = \left(B_{\mathcal{S}_B:}^\top B_{\mathcal{S}_B:}\right)^{-1}B_{\mathcal{S}_B:}^\top$. Let $(B_{\mathcal{S}_B:}^\top B_{\mathcal{S}_B:})^{-1}={\mathrm{det}(B_{\mathcal{S}_B:}^\top B_{\mathcal{S}_B:})}^{-1} F$, where $F$ is the adjugate of $B_{\mathcal{S}_B:}^\top B_{\mathcal{S}_B:}$ defined as $F_{ij} = (-1)^{i+j}M_{ij}$ and $M_{ij}$ is the minor of $B_{\mathcal{S}_B:}^\top B_{\mathcal{S}_B:}$ formed by deleting its $i$-th row and $j$-th column. We split $\mathcal{S}_B$ into two sets $\mathcal{S}_1$ and $\mathcal{S}_2$, where $\mathcal{S}_1$ represents the support set of $GA\Theta_{ij}^\star$ and $\mathcal{S}_2 = \mathcal{S}_B \backslash \mathcal{S}_1$. We also split $\mathcal{S}_B^c$ into two sets $\mathcal{S}_1^c$ and $\mathcal{S}_2^c$ in a similar way. Our first goal is to show that IC holds with $\alpha = \mu/\gamma$ which satisfies $1/2\leq \alpha\leq 1$. To this end, we show that 
\begin{align}
        &B_{\mathcal{S}^c_1}B_{\mathcal{S}_B:}^{\dagger} \operatorname{sign}\left((B\Theta_{ij}^{\star})_{\mathcal{S}_B:}\right) =0,\label{eq_S1}\\
        &\left\|B_{\mathcal{S}^c_2}B_{\mathcal{S}_B:}^{\dagger} \operatorname{sign}\left((B\Theta_{ij}^{\star})_{\mathcal{S}_B:}\right)\right\|_{\infty} \leq 1-\mu/\gamma\label{eq_S2}
\end{align}
To prove~\eqref{eq_S1}, it suffices to show that 
\begin{align*}
    \left[(B_{\mathcal{S}_B:}^\top B_{\mathcal{S}_B:})^{-1} B_{\mathcal{S}_B:}^\top \mathrm{sign}(B_{\mathcal{S}_B:} \Theta_{ij}^\star)\right]_u 
    = \left[(B_{\mathcal{S}_B:}^\top B_{\mathcal{S}_B:})^{-1} B_{\mathcal{S}_B:}^\top \mathrm{sign}(B_{\mathcal{S}_B:} \Theta_{ij}^\star)\right]_v, 
\end{align*}
for every $u\not=v$ such that $\Theta_{u;ij}^\star = \Theta_{v;ij}^\star$. First, we write
\begin{equation}\nonumber
\begin{aligned}
    {} \left[(B_{\mathcal{S}_B:}^\top B_{\mathcal{S}_B:})^{-1} B_{\mathcal{S}_B:}^\top \mathrm{sign}(B_{\mathcal{S}_B:} \Theta_{ij}^\star)\right]_u
    =& \left[\left(B_{\mathcal{S}_B:}^\top B_{\mathcal{S}_B:}\right)^{-1}\right]_{u:}B_{\mathcal{S}_B:}^\top\operatorname{sign}\left(B_{\mathcal{S}_B:}\Theta_{ij}^{\star}\right)\\
    =& \left[\left(B_{\mathcal{S}_B:}^\top B_{\mathcal{S}_B:}\right)^{-1}\right]_{u:}\left[B_{\mathcal{S}_1}^\top \ \  B_{\mathcal{S}_2}^\top \right]\operatorname{sign}\left(B_{\mathcal{S}_B:}\Theta_{ij}^{\star}\right)\\
    =& \frac{\frac{\gamma}{\mu} \sum\limits_{k<l}(F_{uk}-F_{ul}) \operatorname{sign}(\Theta_{k;ij}^\star - \Theta_{l;ij}^\star)+\sum\limits_{k}F_{uk}\operatorname{sign}(\Theta_{k;ij}^\star)}{\mathrm{det}(B_{\mathcal{S}_B:}^\top B_{\mathcal{S}_B:})}
\end{aligned}
\end{equation}
where the last equality is due to the definition of $F$. On the other hand, one can verify that
  \begin{align}
    (B_{\mathcal{S}_B:}^\top B_{\mathcal{S}_B:})_{kl}
    =
    \begin{cases}
      -\frac{\gamma^2}{\mu^2}\mathbb{I}{\{\pi(k,l)\in \mathcal{S}_1\}}, & k\not = l \\
      \mathbb{I}{\{\Theta_{k;ij}^\star \not= 0\}}-\sum_{u\not=k}( B_{\mathcal{S}_B:}^\top B_{\mathcal{S}_B:})_{ku}, & k=l
    \end{cases}\label{eq:BSTBS}
  \end{align}
where $\mathbb{I}(\cdot)$ is the indicator function. Before proceeding, we need the following intermediate lemma.
\begin{lem}\label{lem:FF}
    The following statements hold:
    \begin{itemize}
        \item $F_{uu} = F_{vv}$ for every $u, v$ such that $\Theta_{u;ij}^\star = \Theta_{v;ij}^\star$.
	\item $F_{uk} = F_{vk}$ for every $u,v,k$ such that $\Theta_{u;ij}^\star = \Theta_{v;ij}^\star$, $k \not = u$, and $k \not = v$.
    \end{itemize}
\end{lem}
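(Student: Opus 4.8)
The plan is to exploit the following symmetry: whenever $\Theta^\star_{u;ij} = \Theta^\star_{v;ij}$, the indices $u$ and $v$ are interchangeable in the matrix $N := B_{\mathcal{S}_B:}^\top B_{\mathcal{S}_B:}$, and this interchangeability is inherited by its adjugate $F$. Let $\sigma$ be the transposition that swaps $u$ and $v$ and fixes every other index, and let $P$ be the associated (symmetric, involutive) permutation matrix, so that $(PNP)_{kl} = N_{\sigma(k)\sigma(l)}$. The two bullets of the lemma are precisely the statements $F_{\sigma(u)\sigma(u)} = F_{uu}$ and $F_{\sigma(u)\sigma(k)} = F_{uk}$ for $k\neq u,v$, so it suffices to prove the single invariance $F_{\sigma(k)\sigma(l)} = F_{kl}$ for all $k,l$, i.e. $PFP = F$.

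First I would establish $PNP = N$ directly from the closed form~\eqref{eq:BSTBS}. The crucial observation is that, because $\Theta^\star_{u;ij} = \Theta^\star_{v;ij}$, swapping $u$ and $v$ leaves the underlying value vector unchanged: $\Theta^\star_{\sigma(k);ij} = \Theta^\star_{k;ij}$ for every $k$. Since every entry of $N$ depends on this vector only through the indicators $\mathbb{I}\{\pi(k,l)\in\mathcal{S}_1\} = \mathbb{I}\{\Theta^\star_{k;ij}\neq \Theta^\star_{l;ij}\}$ and $\mathbb{I}\{\Theta^\star_{k;ij}\neq 0\}$, each off-diagonal entry obeys $N_{\sigma(k)\sigma(l)} = N_{kl}$ at once, while for the diagonal entry I would additionally reindex the degree sum $\sum_{m\neq k}\mathbb{I}\{\pi(k,m)\in\mathcal{S}_1\}$ through the bijection $m\mapsto\sigma(m)$ to conclude $N_{\sigma(k)\sigma(k)} = N_{kk}$. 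This yields $PNP = N$.

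Next I would transfer the invariance to $F$. Since $B_{\mathcal{S}_B:}$ has full column rank, $N$ is positive definite and hence invertible, so $F = \mathrm{adj}(N) = \det(N)\,N^{-1}$ (the cofactor matrix of the definition coincides with the adjugate because $N$, and therefore $F$, is symmetric). Inverting both sides of $PNP = N$ and using $P^{-1}=P$ gives $PN^{-1}P = N^{-1}$; multiplying by the scalar $\det(N)$ gives $PFP = F$, that is $F_{\sigma(k)\sigma(l)} = F_{kl}$ for all $k,l$. Setting $k=l=u$ gives $F_{uu} = F_{vv}$, and setting any $k\neq u,v$ (so that $\sigma(k)=k$) gives $F_{uk} = F_{vk}$, which are the two claimed identities.

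The only genuinely delicate step is the first one, namely verifying $N_{\sigma(k)\sigma(k)} = N_{kk}$ for the diagonal, where one must check that the reindexing of the degree sum is a genuine bijection and that the term $\mathbb{I}\{\Theta^\star_{k;ij}\neq 0\}$ is preserved, both of which rest on $\Theta^\star_{\sigma(k);ij} = \Theta^\star_{k;ij}$. Everything after $PNP=N$ is a routine consequence of the inversion behaviour of the adjugate and carries no further obstacle. An equivalent, fully elementary alternative would argue directly on minors: the submatrix defining $M_{vv}$ is obtained from the one defining $M_{uu}$ by the relabeling $\sigma$, so the determinants agree, and similarly for $M_{uk}$ versus $M_{vk}$ when $k\neq u,v$; I would nonetheless prefer the adjugate route as the primary argument, since it treats the diagonal and off-diagonal cases uniformly and tracks the cofactor signs automatically.
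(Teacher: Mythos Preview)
Your proof is correct and takes a genuinely different route from the paper. The paper argues directly on minors: it shows that the submatrix defining $M_{vv}$ is obtained from the one defining $M_{uu}$ by moving row and column $v$ into position $u$, verifies by hand that the resulting column coincides with that of $M_{vv}$ using the explicit form~\eqref{eq:BSTBS} together with $\Theta^\star_{u;ij}=\Theta^\star_{v;ij}$, and then checks that the accumulated sign $(-1)^{2|v-u-1|}$ is $+1$; the second bullet is declared analogous and omitted. Your approach instead establishes the single global invariance $PNP=N$ for the transposition $P$ swapping $u$ and $v$, and pushes it through the adjugate via $F=\det(N)\,N^{-1}$, which handles both bullets at once and tracks the cofactor signs automatically. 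This is cleaner and shorter; the only additional ingredient you invoke is the invertibility of $N$, but the paper has already asserted that $B_{\mathcal{S}_B:}$ has full column rank, so there is no loss. Amusingly, the ``fully elementary alternative'' you sketch at the end is essentially the paper's own argument.
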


The proof of the above lemma can be found in Appendix~\ref{app_lem:FF}.
For $u, v$ such that $\Theta_{u;ij}^\star = \Theta_{v;ij}^\star$, one can write
\begin{equation}\nonumber
\begin{aligned}
    {} \left[(B_{\mathcal{S}_B:}^\top B_{\mathcal{S}_B:})^{-1} B_{\mathcal{S}_B:}^\top \mathrm{sign}(B_{\mathcal{S}_B:} \Theta_{ij}^\star)\right]_u 
    -& \left[(B_{\mathcal{S}_B:}^\top B_{\mathcal{S}_B:})^{-1} B_{\mathcal{S}_B:}^\top \mathrm{sign}(B_{\mathcal{S}_B:} \Theta_{ij}^\star)\right]_v\\
    =& \frac{1}{\mathrm{det}(B_{\mathcal{S}_B:}^\top B_{\mathcal{S}_B:})}\bigg(\frac{\gamma}{\mu} \sum\limits_{k<l}\left((F_{uk}-F_{vk})-(F_{ul}-F_{vl})\right)\\
    &\mathrm{sign}(\Theta_{k;ij}^\star - \Theta_{l;ij}^\star)+\sum\limits_{k}(F_{uk}-F_{vk})\mathrm{sign}(\Theta_{k;ij}^\star)\bigg)
\end{aligned}
\end{equation}
Therefore, it suffices to prove that $\sum\limits_{k<l}\left((F_{uk}-F_{vk})-(F_{ul}-F_{vl})\right)\mathrm{sign}(\Theta_{k;ij}^\star - \Theta_{l;ij}^\star) = 0$ and $\sum\limits_{k}(F_{uk}-F_{vk})\mathrm{sign}(\Theta_{k;ij}^\star) = 0$. Invoking Lemma \ref{lem:FF}, one can write
\begin{equation*}
\begin{aligned}
     {} \sum\limits_{k<l}(F_{uk}-F_{vk})\operatorname{sign}(\Theta_{k;ij}^\star - \Theta_{l;ij}^\star)
     = & \sum\limits_{u<l}(F_{uu}-F_{vu})\operatorname{sign}(\Theta_{u;ij}^\star - \Theta_{l;ij}^\star)\\
     &+  \sum\limits_{v<l}(F_{uv}-F_{vv})\operatorname{sign}(\Theta_{v;ij}^\star - \Theta_{l;ij}^\star)\\
     = & (F_{uu}-F_{uv})\bigg(\sum\limits_{u<l}\operatorname{sign}(\Theta_{u;ij}^\star\\
     &\hspace{2.2cm}- \Theta_{l;ij}^\star) - \sum\limits_{v<l}\operatorname{sign}(\Theta_{u;ij}^\star - \Theta_{l;ij}^\star) \bigg)\\
     = & (F_{uu}-F_{uv})\sum\limits_{l=u+1}^{v}\operatorname{sign}(\Theta_{u;ij}^\star - \Theta_{l;ij}^\star).
\end{aligned}
\end{equation*}
Thus, we have
\begin{equation*}
\begin{aligned}
     {} \sum\limits_{k<l}\left((F_{uk}-F_{vk})-(F_{ul}-F_{vl})\right) \operatorname{sign}(\Theta_{k;ij}^\star - \Theta_{l;ij}^\star)
     = & (F_{uu}-F_{uv})(\sum\limits_{l=u+1}^{v}\operatorname{sign}(\Theta_{u;ij}^\star - \Theta_{l;ij}^\star) \\
     &+ \sum\limits_{k=u+1}^{v}\operatorname{sign}(\Theta_{k;ij}^\star - \Theta_{u;ij}^\star)) = 0.
\end{aligned}
\end{equation*}

On the other hand, one can write
\begin{equation*}
\begin{aligned}
     {} \sum\limits_{k}(F_{uk}-F_{vk})\operatorname{sign}(\Theta_{k;ij}^\star)
     = & (F_{uu}-F_{vu})\operatorname{sign}(\Theta_{u;ij}^\star) + (F_{uv}-F_{vv})\operatorname{sign}(\Theta_{v;ij}^\star)\\
     = & (F_{uu}-F_{uv})\operatorname{sign}(\Theta_{u;ij}^\star) - (F_{uu}-F_{uv})\operatorname{sign}(\Theta_{v;ij}^\star)\\
     = & 0,
\end{aligned}
\end{equation*}
which completes the proof of~\eqref{eq_S1}.

 Now, to prove~\eqref{eq_S2}, we derive the explicit form of $\left[\left(B_{\mathcal{S}_B:}^\top  B_{\mathcal{S}_B:}\right)^{-1} B_{\mathcal{S}_B:}^\top \operatorname{sign}\left(B_{\mathcal{S}_B:}\Theta_{ij}^{\star}\right)\right]_u$, for any $u$ such that $\Theta_{u;ij}^\star = 0$. Recall that the determinant of a matrix remains unchanged after adding multiples of a column to another column.  Adding all the other columns to the column $u$ of $B_{\mathcal{S}_B:}^\top B_{\mathcal{S}_B:}$ changes this column to $[\mathbb{I}_{\{\Theta_{1;ij}^\star \not= 0\}}, \mathbb{I}_{\{\Theta_{2;ij}^\star \not= 0\}}, ..., \mathbb{I}_{\{\Theta_{K;ij}^\star \not= 0\}}]^\top$.
Therefore, for any $\frac{\gamma}{\mu} > 0$, we have
\begin{equation}\nonumber
\begin{aligned}
    &\sum_{k\in S} -\frac{\gamma^2}{\mu^2}F_{ku} + |S|\frac{\gamma^2}{\mu^2}F_{uu} = \sum_{k\in S} F_{ku} = \det(B_{\mathcal{S}_B:}^\top B_{\mathcal{S}_B:})\\
&\implies F_{uu} = \frac{(1+\frac{\gamma^2}{\mu^2})\det(B_{\mathcal{S}_B:}^\top B_{\mathcal{S}_B:})}{|S|\frac{\gamma^2}{\mu^2}}
\end{aligned}
\end{equation}
On the other hand, for every $v\in S_2^c \backslash u$, one can write
\begin{align*}
    \left[B_{\mathcal{S}_B:}^\top B_{\mathcal{S}_B:}\right]_{v:} \left[(B_{\mathcal{S}_B:}^\top B_{\mathcal{S}_B:})^{-1}\right]_{:u}
    =
    \frac{\sum_{k\in S} -\frac{\gamma^2}{\mu^2}F_{ku} + |S|\frac{\gamma^2}{\mu^2}F_{vu}}{ \det(B_{\mathcal{S}_B:}^\top B_{\mathcal{S}_B:})}=0,
\end{align*}
Recalling that $\sum_{k\in S}F_{ku} = \det(B_{\mathcal{S}_B:}^\top B_{\mathcal{S}_B:})$, the above equality implies $F_{uv}=\frac{\det(B_{\mathcal{S}_B:}^\top B_{\mathcal{S}_B:})}{|S|}$.
Similarly, for $k\in S_2$, one can write

\begin{align}
    \left[B_{\mathcal{S}_B:}^\top B_{\mathcal{S}_B:}\right]_{u:} \left[(B_{\mathcal{S}_B:}^\top B_{\mathcal{S}_B:})^{-1}\right]_{: k} 
    =  
    \frac{\sum_{l\in S} -\frac{\gamma^2}{\mu^2}F_{lk} + |S|\frac{\gamma^2}{\mu^2}F_{uk}}{ \det(B_{\mathcal{S}_B:}^\top B_{\mathcal{S}_B:})}=0.
\end{align}
which again implies $F_{uk}=\frac{\det(B_{\mathcal{S}_B:}^\top B_{\mathcal{S}_B:})}{|S|}$.
Therefore, for any $u\in S^c_2$, we have 
  \begin{equation}
    F_{uk}=
    \begin{cases}
      \frac{\det(B_{\mathcal{S}_B:}^\top B_{\mathcal{S}_B:})}{|S|}, & k\not = u \\
      \frac{(1+\frac{\gamma^2}{\mu^2})\det(B_{\mathcal{S}_B:}^\top B_{\mathcal{S}_B:})}{|S|\frac{\gamma^2}{\mu^2}}, & k=u
    \end{cases}\label{eq:F_uk}
  \end{equation}
Now, based on the above formula, we can compute the explicit form of $\left[\left(B_{\mathcal{S}_B:}^\top  B_{\mathcal{S}_B:}\right)^{-1} B_{\mathcal{S}_B:}^\top \operatorname{sign}\left(B_{\mathcal{S}_B:}\Theta_{ij}^{\star}\right)\right]_u$. First, note that
\begin{equation*}
    \begin{aligned}
    {} \sum\limits_{k<l}(F_{uk}-F_{ul}) \operatorname{sign}(\Theta_{k;ij}^\star - \Theta_{l;ij}^\star)
    =&\sum\limits_{k<u}(F_{uk}-F_{uu}) \mathrm{sign}(\Theta_{k;ij}^\star - \Theta_{u;ij}^\star)\\
    &+ \sum\limits_{u<l}(F_{uu}-F_{ul}) \mathrm{sign}(\Theta_{u;ij}^\star - \Theta_{l;ij}^\star)\\
    =&\sum\limits_{k<u}(F_{uk}-F_{uu}) \mathrm{sign}(\Theta_{k;ij}^\star) + \sum\limits_{u<l}(F_{ul}-F_{uu}) \mathrm{sign}(\Theta_{l;ij}^\star)\\
    =& \sum\limits_{k\not = u}(F_{uk}-F_{uu}) \mathrm{sign}(\Theta_{k;ij}^\star)\\
    =& \sum\limits_{k \in S}(F_{uk}-F_{uu}) \mathrm{sign}(\Theta_{k;ij}^\star)
    \end{aligned}
\end{equation*}
Therefore
\begin{equation}
    \begin{aligned}
    {}  \left[\left(B_{\mathcal{S}_B:}^\top B_{\mathcal{S}_B:}\right)^{-1}B_{\mathcal{S}_B:}^\top\operatorname{sign}\left(B_{\mathcal{S}_B:}\Theta_{ij}^{\star}\right)\right]_u 
    &= \frac{\frac{\gamma}{\mu} \sum\limits_{k \in S}(F_{uk}-F_{uu}) \mathrm{sign}(\Theta_{k;ij}^\star) +\sum\limits_{k}F_{uk}\operatorname{sign}(\Theta_{k;ij}^\star)}{\mathrm{det}(B_{\mathcal{S}_B:}^\top B_{\mathcal{S}_B:})}\\
    &=\frac{(\frac{\gamma}{\mu}+1)\sum\limits_{k\in S}F_{uk}\mathrm{sign}(\Theta_{k;ij}^\star) - \frac{\gamma}{\mu}\sum\limits_{k\in S}F_{uu}\mathrm{sign}(\Theta_{k;ij}^\star)}{\det(B_{\mathcal{S}_B:}^\top B_{\mathcal{S}_B:})}\\
    &=\frac{(\frac{\gamma}{\mu}+1)\sum\limits_{k\in S}\mathrm{sign}(\Theta_{k;ij}^\star)}{|S|} - \frac{\gamma}{\mu}\sum\limits_{k\in S}\mathrm{sign}(\Theta_{k;ij}^\star)\frac{1+\frac{\gamma^2}{\mu^2}}{|S|\frac{\gamma^2}{\mu^2}}\\
        &=\frac{\sum\limits_{k\in S}\mathrm{sign}(\Theta_{k;ij}^\star)}{|S|}\left(1-\frac{\mu}{\gamma}\right) \leq 1-\frac{\mu}{\gamma}
    \end{aligned}
    \label{eq:irrepresentability_S2c}
\end{equation}
where in the third equality, we used the explicit form of $F_{uk}$ in~\eqref{eq:F_uk}. Therefore, we have $\left\|B_{\mathcal{S}^c:}\left(B_{\mathcal{S}_B:}^\top B_{\mathcal{S}_B:}\right)^{-1}B_{\mathcal{S}_B:}^\top\mathrm{sign}\left(B_{\mathcal{S}_B:}\Theta_{ij}^{\star}\right)\right\|_{\infty} \leq |1-\mu/\gamma|$, which implies that $\alpha = \mu/\gamma$. 

Next, we provide an upper and lower bound for $\kappa_{\mathrm{IC}}$. Recall that $\kappa_{\mathrm{IC}} := \left\|B_{\mathcal{S}^c:}\left(B_{\mathcal{S}_B:}^\top B_{\mathcal{S}_B:}\right)^{-1}B_{\mathcal{S}_B:}^\top\right\|_{\infty}+1$. Hence, we trivially have $\kappa_{\mathrm{IC}}\geq 1$. Therefore, it suffices to show that $\left\|B_{\mathcal{S}^c:}\left(B_{\mathcal{S}_B:}^\top B_{\mathcal{S}_B:}\right)^{-1}B_{\mathcal{S}_B:}^\top\right\|_{\infty}\leq 4$. To this goal, we show that $\left\|B_{u:}\left(B_{\mathcal{S}_B:}^\top B_{\mathcal{S}_B:}\right)^{-1}B_{\mathcal{S}_B:}^\top\right\|_1\leq 4$, for every $u\in\mathcal{S}_B^c$. We consider three cases:
\vspace{2mm}

\noindent{\it Case 1:} Suppose that $u = \pi(k, l)$ for some $(k,l)$ such that $\Theta_{k;ij}^\star = \Theta_{l;ij}^\star = 0$.
One can write
    \begin{equation}
    \begin{aligned}
        {} B_{u:}\left(B_{\mathcal{S}_B:}^\top B_{\mathcal{S}_B:}\right)^{-1}B_{\mathcal{S}_B:}^\top
        = & \frac{\gamma}{\mu \det(B_{\mathcal{S}_B:}^\top B_{\mathcal{S}_B:})}\left[F_{k1}-F_{l1},...,F_{kK}-F_{lK} \right]B_{\mathcal{S}_B:}^\top\\
        = & \frac{\gamma}{\mu \det(B_{\mathcal{S}_B:}^\top B_{\mathcal{S}_B:})}\\
        &\cdot \!\left[...,0, F_{kk}\!-\!F_{lk},0,...,0,F_{kl}\!-\!F_{ll},0,... \right]\!B_{\mathcal{S}_B:}^\top\\
        = & \frac{\gamma}{\mu}\left[..., 0,\frac{\mu^2}{\gamma^2|S|},0,...,0,\frac{-\mu^2}{\gamma^2|S|},0,... \right]B_{\mathcal{S}_B:}^\top
    \end{aligned}
    \end{equation}
Thus $\| B_{u:}\left(B_{\mathcal{S}_B:}^\top B_{\mathcal{S}_B:}\right)^{-1}B_{\mathcal{S}_B:}^\top\|_{1} = 2\frac{\gamma}{\mu}\sum\limits_{u\in S} \frac{\mu^2}{\gamma^2|S|} \leq  4$.
\vspace{2mm}

\noindent{\it Case 2:} Suppose that $u = \pi(k, l)$ for some $(k,l)$ such that $\Theta_{k;ij}^\star = \Theta_{l;ij}^\star \not= 0$.
Let $\bar{S}= \{r| \Theta_{r;ij}^\star \not= \Theta_{k;ij}^\star, r=1,...,K\}$. It is easy to verify that column $k$ and $l$ of $B_{\mathcal{S}_B:}^\top B_{\mathcal{S}_B:}$ are the same except for the $k$-th and $l$-th element. Subtracting column $k$ of $B_{\mathcal{S}^\top B:}$ from column $l$, one can write 
    \begin{equation}
        (|\bar{S}|\frac{\gamma^2}{\mu^2}+1)F_{kk} - (|\bar{S}|\frac{\gamma^2}{\mu^2}+1)F_{kl} = \det(B_{\mathcal{S}_B:}^\top B_{\mathcal{S}_B:}).
    \end{equation}
    Therefore, 
    \begin{equation}
        F_{kk} - F_{kl} = \frac{\det(B_{\mathcal{S}_B:}^\top B_{\mathcal{S}_B:})}{|\bar{S}|\frac{\gamma^2}{\mu^2}+1}
        \label{eq:Fkk-Fkl}
    \end{equation}
    Combining \eqref{eq:Fkk-Fkl} and Lemma \ref{lem:FF}, one can write
    \begin{equation}
        \begin{aligned}
            {}  B_{u:}\left(B_{\mathcal{S}_B:}^\top B_{\mathcal{S}_B:}\right)^{-1}B_{\mathcal{S}_B:}^\top
            & = \frac{\gamma}{\mu \det(B_{\mathcal{S}_B:}^\top B_{\mathcal{S}_B:})}\left[F_{k1}-F_{l1},...,F_{kK}-F_{lK} \right]B_{\mathcal{S}_B:}^\top\\
            & = \frac{\gamma}{\mu}\left[...,0, \frac{1}{|\bar{S}|\frac{\gamma^2}{\mu^2}+1},0,,...,0,- \frac{1}{|\bar{S}|\frac{\gamma^2}{\mu^2}+1}, 0,... \right]B_{\mathcal{S}_B:}^\top
        \end{aligned}
    \end{equation}
    Therefore, again we have $\| B_{u:}\left(B_{\mathcal{S}_B:}^\top B_{\mathcal{S}_B:}\right)^{-1}B_{\mathcal{S}_B:}^\top\|_{1} \leq  4$.
    \vspace{2mm}
    
    \noindent{\it Case 3:} Suppose that $u = \frac{K(K-1)}{2}+k$ for some $k$ and $\Theta_{k;ij}^\star = 0$. One can write
    \begin{equation}
    \begin{aligned}
        {}  B_{u:}\left(B_{\mathcal{S}_B:}^\top B_{\mathcal{S}_B:}\right)^{-1}B_{\mathcal{S}_B:}^\top
        = & \frac{1}{\det(B_{\mathcal{S}_B:}^\top B_{\mathcal{S}_B:})}\left[F_{k1}, F_{k2},...,F_{kK} \right] B_{\mathcal{S}_B:}^\top\\
        = & \frac{\gamma}{\mu}\left[..., \frac{1}{|S|},\frac{1+\gamma^2/\mu^2}{|S|(\gamma^2/\mu^2)}, \frac{1}{|S|},... \right] B_{\mathcal{S}_B:}^\top
    \end{aligned}
    \end{equation}
    Thus $\lonenorm{ B_{u:}\left(B_{\mathcal{S}_B:}^\top B_{\mathcal{S}_B:}\right)^{-1}B_{\mathcal{S}_B:}^\top} = 1 + \frac{\mu}{\gamma} \leq 3$.
The completes the proof.
$\hfill\square$

\section{Additional Proofs}
\subsection{Proof of Proposition~\ref{thm_smooth_det}}\label{proof_cor_smooth_det}
For the purpose of proof, we rewrite the optimization problem~\eqref{problem:lq_ij2} with $q=2$ as an instance of the Lasso problem~\cite{wainwright_2019}. 
\begin{equation}
	\begin{aligned}
		\ltwonorm{\Theta_{ij} - \tilde{F}^*_{ij}}^2 + \gamma \ltwonorm{GA\Theta_{ij}}^2
		&=(\Theta_{ij}-\tilde{F}^*_{ij})^{\top}(\Theta_{ij}-\tilde{F}^*_{ij}) + \gamma\Theta_{ij}^{\top} A^{\top}G^{\top}GA\Theta_{ij}\\
		&=\Theta_{ij}^{\top}( I+\gamma A^{\top}G^{\top}GA)\Theta_{ij} - 2\Theta_{ij}^{\top}\tilde{F}^*_{ij} + (\tilde{F}^*_{ij})^{\top}\tilde{F}^*_{ij}
	\end{aligned}  
\end{equation}
Since $I+\gamma A^{\top}G^{\top}GA$ is a strictly diagonally dominant symmetric matrix, it has a unique Cholesky decomposition $I+\gamma A^{\top}G^{\top}GA = C^\top C$. Therefore, one can write
\begin{equation}
    \begin{aligned}
     \ltwonorm{\Theta_{ij} - \tilde{F}^*_{ij}}^2 + \gamma \ltwonorm{GA\Theta_{ij}}^2 &=\ltwonorm{(C^\top)^{-1}\tilde{F}^*_{ij} - C\Theta_{ij}}^2
    -(\tilde{F}^*_{ij})^{\top} C^{-1}(C^\top)^{-1}\tilde{F}^*_{ij} +(\tilde{F}^*_{ij})^{\top}\tilde{F}^*_{ij}
    \end{aligned}
\end{equation}
Therefore, problem (\ref{problem:lq_ij}) is equivalent to:
\begin{equation}
	\centering
	\begin{aligned} \min \quad& \ltwonorm{y - C\Theta_{ij}}^2 + \mu \lonenorm{\Theta_{ij}}, \end{aligned}
	\label{lasso_problem_2}
\end{equation}
where $y = (C^\top)^{-1}\tilde{F}^*_{ij}$. Note that~\eqref{lasso_problem_2} is an instance of Lasso with the observation model $y = C\Theta_{ij}^\star + w$, where $w = (C^\top)^{-1}\tilde{F}^*_{ij} - C\Theta^\star_{ij}$ is the noise vector. 

	The above equivalence will allow us to invoke the exact recovery guarantee of Lasso in our setting. For any fixed $(i,j)$, let $S\subset \{1,2,...,K\}$ with $|S| = s$ be the support of $\Theta_{ij}^\star$ satisfying $\Theta_{k;ij}^\star \not = 0$ for all $k\in S$. Moreover, let $S^c = \{1,2,...,K\}\backslash S$. Without loss of generality and to streamline our presentation, we assume that $S = \{1,2,\dots,s\}$. At the crux of the exact recovery guarantees based on Lasso lies the classical notion of \textit{mutual incoherency}~\cite{zhao2006model, candes2007sparsity, wainwright_2019}.
	
	\begin{assumption}[Mutual Incoherency]
	    There exists some $\alpha \in[0,1)$ such that
	\begin{equation*}
		\max _{j \in S^{c}}\left\|\left(C_{S}^{\mathrm{T}} C_{S}\right)^{-1} C_{S}^{\mathrm{T}} C_{j}\right\|_{1} \leq \alpha
	\end{equation*}
	\end{assumption}
	Mutual incoherency entails that the effect of the columns of $C$ corresponding to ``unimportant'' (zero) elements of $\Theta^\star_{ij}$ on the remaining columns is small. Although mutual incoherency cannot be guaranteed for general choices of $C$ and $S$, our next lemma shows that it is indeed satisfied for our problem, provided that $\gamma$ is sufficiently small.
	\begin{lem}
		For $C^\top C = I+\gamma A^{\top}G^{\top}GA$ and any $S\subseteq\{1,2,\dots, K\}$, the mutual incoherency holds with $\alpha=1/2$, provided that $\gamma < 1/(2K\norm{W}_{\max})$.
		\label{lem:multual_incoherence}
	\end{lem}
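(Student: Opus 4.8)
The plan is to recognize $A^{\top}G^{\top}GA$ as a weighted graph Laplacian and to leverage its diagonal-dominance structure. First I would note that for $q=2$ the diagonal matrix $G^{\top}G$ has $k$-th entry $W_{\pi^{-1}(k)}$, so a direct computation gives $L:=A^{\top}G^{\top}GA=\sum_{k<l}W_{kl}(e_k-e_l)(e_k-e_l)^{\top}$, the Laplacian of the complete graph on $\{1,\dots,K\}$ with edge weights $W_{kl}$; explicitly $L_{kk}=\sum_{l\neq k}W_{kl}$ and $L_{kl}=-W_{kl}$ for $k\neq l$. Writing $N:=C^{\top}C=I+\gamma L$, I would then observe that $C_S^{\top}C_S=N_{SS}$ and $C_S^{\top}C_j=N_{Sj}$, so that mutual incoherency reduces to establishing $\lonenorm{(N_{SS})^{-1}N_{Sj}}\leq 1/2$ for every $j\in S^c$. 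Since $j\in S^c$, the identity contributes nothing off the diagonal block, whence $N_{Sj}=\gamma L_{Sj}=-\gamma(W_{kj})_{k\in S}$, a vector with $\lonenorm{N_{Sj}}=\gamma\sum_{k\in S}W_{kj}$.

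The key step is an operator-norm bound on $(N_{SS})^{-1}$ that is uniform in $S$ and $\gamma$. I would show that $N_{SS}=I+\gamma L_{SS}$ is strictly diagonally dominant with dominance margin at least one. Indeed, the diagonal entry on row $k$ is $1+\gamma\sum_{l\neq k}W_{kl}$, where the sum runs over all $l\in\{1,\dots,K\}$ because the restricted block $L_{SS}$ retains the full degree on its diagonal, whereas the off-diagonal absolute row sum inside the block is only $\gamma\sum_{k'\in S\setminus\{k\}}W_{kk'}$. Subtracting, the margin on row $k$ equals $1+\gamma\sum_{l\in S^c}W_{kl}\geq 1$. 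The classical Varah bound for strictly diagonally dominant matrices then yields $\inftoinf{(N_{SS})^{-1}}\leq 1$, and since $N_{SS}$ is symmetric its inverse is symmetric, so $\onetoone{(N_{SS})^{-1}}=\inftoinf{(N_{SS})^{-1}}\leq 1$.

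Combining the two steps, for any $j\in S^c$ I would estimate
\[
\lonenorm{(N_{SS})^{-1}N_{Sj}}\leq \onetoone{(N_{SS})^{-1}}\,\lonenorm{N_{Sj}}\leq \gamma\sum_{k\in S}W_{kj}\leq \gamma(K-1)\norm{W}_{\max}<\gamma K\norm{W}_{\max}<\tfrac12,
\]
where the final inequality uses $\gamma<1/(2K\norm{W}_{\max})$; taking the maximum over $j\in S^c$ gives $\alpha=1/2$. The main obstacle is the uniform bound $\onetoone{(N_{SS})^{-1}}\leq 1$: the crucial and somewhat delicate observation is that passing to the principal submatrix $L_{SS}$ keeps the full diagonal degree while discarding the off-diagonal mass indexed by $S^c$, which is precisely what produces strict dominance with margin at least one and makes the inverse bound independent of the unknown support $S$ and of $\gamma$. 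Converting this $\ell_\infty$-operator bound to the $\ell_1$ side through the symmetry of $N_{SS}$ is the remaining ingredient that makes the vector $\ell_1$ estimate go through.
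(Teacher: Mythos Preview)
Your proof is correct and follows essentially the same route as the paper: both identify $A^\top G^\top G A$ as the weighted Laplacian, use the fact that the principal submatrix $N_{SS}=I+\gamma L_{SS}$ is strictly diagonally dominant with margin at least $1$ to invoke Varah's bound $\inftoinf{(N_{SS})^{-1}}\le 1$, and combine this with the off-diagonal bound on $N_{Sj}$ to reach $K\gamma\|W\|_{\max}<1/2$. The only cosmetic difference is that you pass to the $\ell_1$ operator norm via the symmetry of $N_{SS}$ and use $\lonenorm{N_{Sj}}\le\gamma(K-1)\|W\|_{\max}$, whereas the paper instead applies $\lonenorm{v}\le K\linfinity{v}$ together with $\linfinity{N_{Sj}}\le\gamma\|W\|_{\max}$; the two arrive at the same final estimate.
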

	\begin{proof}
	It is easy to verify that
	\begin{align}
	    A^{\top}G^{\top}G A=\begin{bmatrix}
		\displaystyle\sum_{i\not=1} W_{1,i} & -W_{1,2} &\cdots &-W_{1,K}\\
		-W_{2,1} & \displaystyle\sum_{i\not=2} W_{2,i} &\cdots &-W_{2,K}\\
		&  & & \cdots   &\\
		-W_{K,1} & -W_{K,2}  &\cdots & \displaystyle\sum_{i\not=K} W_{K,i}
	\end{bmatrix}\label{eq_AGGA}
	\end{align}
	Since $S \cap S^c = \emptyset$, the elements of $C^\top_SC_{S^c}$ can only include the off-diagonal elements of $C^\top C$, which in turn correspond to the off-diagonal entries of $A^{\top}G^{\top}G A$ defined as~\eqref{eq_AGGA}. Therefore, we have
	
	\begin{equation}
		\linfinity{C^\top_SC_j} \leq \gamma \norm{W}_{\max}, \quad j \in S^c.
		\label{ieq:inf_off}
	\end{equation}
	On the other hand, since $C^\top_SC_S$ is strictly diagonally dominant, its inverse matrix satisfies
	
	\begin{equation}
		\begin{aligned}
			\quad \linfinity{(C^\top_SC_S)^{-1}}
			&\leq \frac{1}{\min_i \left\{\abs{(C^\top_SC_S)_{i,i}} - \sum _{j\not = i}\abs{(C^\top_SC_S)_{i,j}}\right\}}\\
			& = \frac{1}{\min_i\left\{1+\gamma\sum_{j\not=i}\abs{W_{i,j}}-\gamma\sum_{j\in S\backslash i}\abs{W_{i,j}} \right\}}\\
			& = \frac{1}{\min_i\left\{1+\gamma\sum_{j\in S^c\backslash i}\abs{W_{i,j}} \right\}}
			\leq 1
		\end{aligned}
		\label{ieq:inv_CC}
	\end{equation}
	where the first inequality is due to \cite[Theorem 1]{varah1975lower}. Combined with (\ref{ieq:inf_off}), we have
	\begin{equation}
		\begin{aligned}
			\lonenorm{(C_{S}^{\top} C_{S})^{-1} C_{S}^{\top} C_{j}} &\leq K\linfinity{(C_S^{\top} C_S)^{-1}}\linfinity{C_S^{\top}C_j}\\
			& \leq K\gamma\norm{W}_{\max}\linfinity{(C_S^{\top} C_S)^{-1}}\\
			& \leq K\gamma \norm{W}_{\max} < 1/2,
		\end{aligned}
		\label{ieq:3}
	\end{equation}
	which completes the proof.
	
\end{proof}
	Given the mutual incoherency condition, we follow the so-called \textit{primal-dual witness (PDW)} approach introduced by~\citet{wainwright_2019} to prove Proposition~\ref{thm_smooth_det}. To this goal, first we delineate the optimality conditions for~\eqref{lasso_problem_2}. Given a convex function $f\colon \RR^{K} \rightarrow \RR$, we say that $z \in \RR^{K}$ is a subgradient of $f$ at $\Theta_{ij}$, denoted by $z \in \partial f(\Theta_{ij})$, if we have 
	\[
	f(\Theta_{ij}+\delta) \geq f(\Theta_{ij}) + \langle z,\delta \rangle \text{ for all } \delta \in \RR^{K}.
	\]
	When $f(\Theta_{ij})=\lonenorm{\Theta_{ij}}$, it can be seen that $z \in \partial \lonenorm{\Theta_{ij}}$ if and only if $z_k=\mathrm{sign}(\Theta_{k;ij})$ for all $k=1,2,...,K$. For (\ref{lasso_problem_2}), we say that a pair $(\widehat\Theta_{ij}, \hat{z})$ is \emph{primal-dual optimal} if 
	\begin{equation}
		\hat{z} \in \partial \lonenorm{\Theta_{ij}}, \quad 2C^\top(C\widehat\Theta_{ij}-y) + \mu \hat{z}  = 0
		\label{eq:zero-subg}
	\end{equation}
	Evidently, if $(\widehat\Theta_{ij}, \hat{z})$ is primal-dual optimal, then $\widehat\Theta_{ij}$ is the minimizer of~\eqref{lasso_problem_2}. Given this optimality condition, the {Primal-dual witness} is constructed as follows.
	
	\vspace{2mm}
	\hrule
	\hrule
	\vspace{2mm}
	\noindent\textbf{PDW construction: } PDW construction has the following steps:
	
	\begin{enumerate}
		\item $\operatorname{Set} \widehat\Theta_{S^{c};ij}=0$
		\item Determine $\left(\widehat\Theta_{S;ij}, \widehat{z}_{S}\right) \in \mathbb{R}^{s} \times \mathbb{R}^{s}$ by solving the oracle subproblem
		\begin{equation}
			\widehat\Theta_{S;ij} \in \arg \min _{\Theta_{S;ij} \in \mathbb{R}^{s}}\{\underbrace{\ltwonorm{y-C_S\Theta_{S;ij}}^2}_{=: f\left(\Theta_{S;ij}\right)}+\mu \lonenorm{\Theta_{S;ij}},
			\label{eq:subproblem}
		\end{equation}
		then choose $\widehat{z}_{S} \in \partial\left\|\widehat\Theta_{S;ij}\right\|_{1}$ such that $\left.\nabla f\left(\Theta_{S;ij}\right)\right|_{\Theta_{S;ij}=\widehat\Theta_{S;ij}}+\mu \widehat{z}_{S}=0$
		\item Solve for $\widehat{z}_{S^{c}} \in \mathbb{R}^{K-s}$ via the zero-subgradient equation (\ref{eq:zero-subg}), and check whether or not the strict dual feasibility condition $\left\|\widehat{z}_{S^c}\right\|_{\infty}<1$ holds.
	\end{enumerate}
	\vspace{2mm}
	\hrule
	\hrule
	\vspace{2mm}
	
	Note that the vector $\widehat\Theta_{S^{c};ij}$ is determined in Step 1, whereas the remaining three subvectors $\widehat\Theta_{S;ij}$, $\widehat{z}_S$, and $\widehat{z}_{S^c}$ are determined in Steps 2 and 3. By construction, $\widehat\Theta_{S^{c};ij}=0$ and the subvectors $\widehat\Theta_{S;ij}, \widehat{z}_{S}$ and $\widehat{z}_{S^c}$ satisfy the zero-subgradient condition (\ref{eq:zero-subg}). Therefore, we have
	
	\begin{equation}
	\begin{aligned}
		&\left[\begin{array}{cc}
			C_{S}^\top C_{S} & C_{S}^\top C_{S^{c}} \\
			C_{S^{c}}^\top C_{S} & C_{S^{c}}^\top C_{S^{c}}
		\end{array}\right]\left[\begin{array}{c}
			\widehat\Theta_{S;ij} - \Theta_{S;ij}^\star \\
			0
		\end{array}\right]
		+ \frac{\mu}{2}\left[\begin{array}{c}
			\widehat{z}_{S} \\
			\widehat{z}_{S^{c}}
		\end{array}\right]=\left[\begin{array}{c}
			C^\top_S w \\
			C^\top_{S^c} w
		\end{array}\right]\\
	\end{aligned}
		\label{eq:matrix_zero}
	\end{equation}
	We say that ``PDW construction succeeds'' if the vector $\widehat{z}_{S^c}$ constructed in step 3 satisfies the strict dual feasibility condition. The following result shows that the success of PDW construction implies that $(\widehat\Theta_{S;ij}, 0) \in \RR^{K}$ is the unique solution of the Lasso problem.
	
	\begin{lem}[Lemma 7.23 of \cite{wainwright_2019}]
		The success of the PDW construction implies that the vector $(\widehat\Theta_{S;ij}, 0) \in \RR^{K}$ is the unique optimal solution of the Lasso problem (\ref{lasso_problem_2}).
		\label{lem:successPDW}
	\end{lem}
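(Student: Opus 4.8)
The plan is to verify the two defining properties of the claimed solution separately: first that the pair $(\widehat\Theta_{ij},\widehat z)$ with $\widehat\Theta_{ij}=(\widehat\Theta_{S;ij},0)$ is \emph{a} minimizer, and then that it is the \emph{unique} one. The first part is immediate from the construction: by Steps 1--3 the pair satisfies the zero-subgradient condition~\eqref{eq:zero-subg} with $\widehat z\in\partial\lonenorm{\widehat\Theta_{ij}}$, so $\widehat\Theta_{ij}$ is optimal for the convex program~\eqref{lasso_problem_2}. The entire content of the lemma therefore rests on uniqueness, and this is exactly where the strict dual feasibility $\linfinity{\widehat z_{S^c}}<1$ guaranteed by a successful PDW construction is used.

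For uniqueness I would let $\widetilde\Theta_{ij}$ be an arbitrary optimal solution and show $\widetilde\Theta_{ij}=\widehat\Theta_{ij}$. Writing $G(\Theta_{ij})=\ltwonorm{y-C\Theta_{ij}}^2$ for the smooth part and $F$ for the full objective of~\eqref{lasso_problem_2}, the exact expansion $G(\widetilde\Theta_{ij})=G(\widehat\Theta_{ij})+\langle\nabla G(\widehat\Theta_{ij}),\widetilde\Theta_{ij}-\widehat\Theta_{ij}\rangle+\ltwonorm{C(\widetilde\Theta_{ij}-\widehat\Theta_{ij})}^2$ holds. Substituting the stationarity identity $\nabla G(\widehat\Theta_{ij})=-\mu\widehat z$ from~\eqref{eq:zero-subg}, using $\langle\widehat z,\widehat\Theta_{ij}\rangle=\lonenorm{\widehat\Theta_{ij}}$, and bounding $\langle\widehat z,\widetilde\Theta_{ij}\rangle\le\linfinity{\widehat z}\lonenorm{\widetilde\Theta_{ij}}\le\lonenorm{\widetilde\Theta_{ij}}$, the objective gap collapses to
\begin{align*}
F(\widetilde\Theta_{ij})-F(\widehat\Theta_{ij})\;=\;\ltwonorm{C(\widetilde\Theta_{ij}-\widehat\Theta_{ij})}^2+\mu\big(\lonenorm{\widetilde\Theta_{ij}}-\langle\widehat z,\widetilde\Theta_{ij}\rangle\big)\;\ge\;0,
\end{align*}
which both reconfirms optimality of $\widehat\Theta_{ij}$ and, since $\widetilde\Theta_{ij}$ is optimal, forces every inequality to hold with equality.

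Turning equality into identification of the solution is the crux of the argument. Equality in the quadratic term gives $C\widetilde\Theta_{ij}=C\widehat\Theta_{ij}$. Equality in the subgradient bound gives $\lonenorm{\widetilde\Theta_{ij}}=\langle\widehat z,\widetilde\Theta_{ij}\rangle$; writing this coordinatewise as $\sum_k\big(\abs{\widetilde\Theta_{k;ij}}-\widehat z_k\widetilde\Theta_{k;ij}\big)=0$ with each summand nonnegative forces every term to vanish, and since $\abs{\widehat z_k}<1$ for $k\in S^c$ this yields $\widetilde\Theta_{k;ij}=0$ on $S^c$. Hence any optimal solution is supported on $S$ and satisfies $C_S(\widetilde\Theta_{S;ij}-\widehat\Theta_{S;ij})=0$. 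Finally, because $C^\top C=I+\gamma A^\top G^\top GA$ is strictly diagonally dominant and therefore positive definite, $C$ and thus $C_S$ have full column rank, so $\widetilde\Theta_{S;ij}=\widehat\Theta_{S;ij}$ and $\widetilde\Theta_{ij}=\widehat\Theta_{ij}$.

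The main obstacle, and essentially the only place genuine care is needed, is the step that converts strict feasibility $\linfinity{\widehat z_{S^c}}<1$ into the support containment $\widetilde\Theta_{S^c;ij}=0$: one must track that optimality of $\widetilde\Theta_{ij}$ tightens \emph{both} inequalities in the displayed gap simultaneously, so that the complementary-slackness identity genuinely holds rather than the mere bound. The remaining ingredients---the existence of the subgradient witness and the full column rank of $C_S$---are supplied directly by the PDW construction and by the positive definiteness of $C^\top C$ established earlier, so I would simply invoke those. This reasoning is precisely the content of Lemma~7.23 of~\cite{wainwright_2019}, which I would cite for the formal statement.
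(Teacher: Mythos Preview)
Your proof is correct. The paper itself does not supply a proof of this lemma; it simply cites Lemma~7.23 of \cite{wainwright_2019} and uses the result as a black box in the proof of Proposition~\ref{thm_smooth_det}. Your argument---verifying optimality via the zero-subgradient condition, then establishing uniqueness by expanding the objective gap, forcing equality in both the quadratic term and the dual pairing, and finally using strict dual feasibility on $S^c$ together with injectivity of $C_S$---is exactly the standard proof given in Wainwright's book, so you have faithfully reproduced what the cited reference contains.
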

	
	\noindent{\it Proof of Proposition~\ref{thm_smooth_det}.}
	To apply Lemma~\ref{lem:successPDW}, it suffices to show that the vector $\widehat{z}_{S^c}\in \RR^{K-s}$ constructed in Step 3 satisfies the strict dual feasibility condition. Using the zero-subgradient condition (\ref{eq:matrix_zero}), we have
	
	\begin{equation}
		\widehat{z}_{S^c} = -\frac{2}{\mu}C^\top_{S^c}C_S(\widehat\Theta_{S;ij} - \Theta_{S;ij}^\star) + \frac{2}{\mu}C^\top_{S^c}w.
		\label{eq:solve_z}
	\end{equation}
	On the other hand, (\ref{eq:matrix_zero}) implies that
	\begin{equation}
		\widehat\Theta_{S;ij} - \Theta_{S;ij}^\star= (C^\top_SC_S)^{-1}C^\top_Sw - \frac{\mu}{2}(C^\top_SC_S)^{-1}\widehat{z}_S
		\label{eq:error}
	\end{equation}
	Substituting this expression back into (\ref{eq:solve_z}) yields
	\begin{equation}
		\widehat{z}_{S^c} = C^\top_{S^c}C_S(C^\top_SC_S)^{-1}\widehat{z}_{S} + C^\top_{S^c}[I - C_S(C^\top_SC_S)^{-1}C^\top_S]\frac{2w}{\mu}
	\end{equation}
	Due to Lemma~\ref{lem:multual_incoherence}, we have 
	$\linfinity{C^\top_{S^c}C_S(C^\top_SC_S)^{-1}\widehat{z}_{S}} \leq 1/2$. Therefore, if the following inequalities are satisfied
	\begin{align}
		&\left\|{C^\top_{S^c}[I - C_S(C^\top_SC_S)^{-1}C^\top_S]\frac{2w}{\mu}}\right\|_{\infty}\leq \frac{1}{2}\nonumber\\
		\implies & 4\linfinity{C^\top_{S^c}w-C^\top_{S^c}C_{S}\left(C_{S}^{\top} C_{S}\right)^{-1} C_{S}^{\top}w}\leq \mu,\nonumber
		\label{ieq:strict_f}
	\end{align}
	then we conclude that $\widehat{z}_{S^c}<1$, which establishes the strict dual feasibility condition.
	On the other hand, one can write
	\begin{equation}
		\begin{aligned}
			4\left\|{C^\top_{S^c}w-C^\top_{S^c}C_{S}\left(C_{S}^{\top} C_{S}\right)^{-1} C_{S}^{\top}w}\right\|_\infty
			&\leq 4\linfinity{C^\top_{S^c}w} + 4\left\|{C^\top_{S^c}C_{S}\left(C_{S}^{\top} C_{S}\right)^{-1} C_{S}^{\top}w}\right\|_\infty\\
			&\leq 4\linfinity{C^\top w} + 4K\gamma\norm{W}_{\max} \linfinity{C^{\top}w}\\
			& = 4(1+K\gamma \norm{W}_{\max})\linfinity{C^\top w}\leq 6\linfinity{C^\top w}
		\end{aligned} \nonumber
	\end{equation}
	where the second inequality follows from Lemma \ref{lem:multual_incoherence}. We also have $C^\top w=C^\top y - C^\top C \Theta_{ij}^\star = \tilde{F}^*_{ij} - \Theta_{ij}^\star - \gamma A^\top G^\top G A\Theta_{ij}^\star$. Therefore, upon choosing 
	\begin{align*}
	   \mu &\geq 6\linfinity{\Theta_{ij}^\star - \tilde{F}^*_{ij}} + 6\gamma K\norm{W}_{\max} D\\
	   &\geq 6\linfinity{\tilde{F}^*_{ij} - \Theta_{ij}^\star - \gamma A^\top G^\top G A\Theta_{ij}^\star},
	\end{align*}
	we establish the strict dual feasibility condition. This implies that $\widehat\Theta_{ij} = (\widehat\Theta_{S;ij}, 0)$ is the unique solution of~\eqref{lasso_problem_2}. Therefore, we have $\supp(\widehat\Theta_{ij})\subseteq\supp(\Theta^\star_{ij})$ and
	\begin{equation}
	\begin{aligned}
		\linfinity{\widehat\Theta_{ij} -\Theta_{ij}^\star}&=\linfinity{\widehat\Theta_{S;ij} - \Theta_{S;ij}^\star} \\
		&\leq \linfinity{(C^\top_SC_S)^{-1}C^\top_S w} + \frac{\mu}{2}\linfinity{(C^\top_SC_S)^{-1}}\\
		&\leq \linfinity{(C^\top_SC_S)^{-1}}\left(\linfinity{C^\top w} + \frac{\mu}{2}\right)\\
		&\leq \frac{ 2\mu}{3+3\gamma\min_i\{\sum_{j\in S^c\backslash i}\abs{W_{i,j}} \}}\leq \frac{2\mu}{3}
	\end{aligned}
	\end{equation}
	Now, it suffices to show that $\supp(\Theta^\star_{ij})\subseteq \supp(\widehat\Theta_{ij})$. Suppose that $\Theta^\star_{k;ij}\not=0$ for some $k$. Then, one can write
	\begin{align*}
	    |\widehat{\Theta}_{k;ij}| &\geq |{\Theta^\star_{k;ij}}| - |\widehat{\Theta}_{k;ij}-{\Theta^\star_{k;ij}}|\\
	    &\geq \Theta_{\min;ij} - \frac{2\mu}{3}>0.
	\end{align*}
Therefore, we have $\supp(\Theta^\star_{ij})\subseteq \supp(\widehat\Theta_{ij})$ which shows that $\supp(\widehat\Theta_{ij})=\supp(\Theta^\star_{ij})$. This completes the proof. $\hfill\square$

\subsection{Proof of Proposition~\ref{prop_sparse_det}}\label{proof_prop_sparse_det}
To prove Proposition~\ref{prop_sparse_det}, we first provide the following lemma adapted from \cite{lee_2013}.
\begin{lem}[Corollary 4.2 of \cite{lee_2013}] \label{lem:l1_consistent}
    Suppose that the irrepresentability assumption holds and 
$$
\|\tilde{F}^*_{ij}\!-\!\Theta^\star_{ij}\|_{\infty}<\frac{\alpha}{8\kappa_{\mathrm{IC}}}\mu.
$$
Then, the optimal solution to \eqref{lasso_problem_1} is unique and satisfies the following properties:
\begin{itemize}
    \item {\it (Estimation error):} $\left\|\widehat\Theta_{ij}-\Theta_{ij}^{\star}\right\|_{2} \leq 2\left(\kappa_{\rho}+\frac{\alpha}{4} \frac{\kappa_{\varrho}}{\kappa_{\mathrm{IC}}}\right) \mu$
    \item {\it (Sparsistency)}: $(B\widehat\Theta_{ij})_{\mathcal{S}_B:^c} = 0 $,
\end{itemize}
where 
\begin{equation}
\begin{aligned}
    \kappa_{\rho}&=\sup _{\theta}\left\{\|B \theta\|_{1} \mid \ltwonorm{\theta}=1, (B\theta)_{\mathcal{S}_B:^c}=0\right\} \\
    \kappa_{\varrho}&=\sup _{\theta}\left\{\|\theta\|_{1} \mid \ltwonorm{\theta}=1, (B\theta)_{\mathcal{S}_B:^c}=0\right\}\\
    \kappa_{\mathrm{IC}} &= \sup\limits_{\norm{z}_{\infty}\leq 1} \norm{B_{\mathcal{S}_B:^c}B_{\mathcal{S}_B:}^\dagger z_{\mathcal{S}_B:}}_{\infty} + \norm{z_{\mathcal{S}_B:^c}}_{\infty}\\
\end{aligned}
\end{equation}
\end{lem}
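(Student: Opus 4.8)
The plan is to recognize \eqref{lasso_problem_1} as an \emph{analysis-form (generalized) Lasso with identity design}: writing it as $\min_{\theta}\|y-\theta\|_2^2+\mu\|B\theta\|_1$ with $y=\widetilde F^*_{ij}$ and observation model $y=\Theta^\star_{ij}+w$, where $w:=\widetilde F^*_{ij}-\Theta^\star_{ij}$ is the noise whose $\ell_\infty$-norm is controlled by hypothesis. Since the penalty acts through the analysis operator $B$ rather than on $\theta$ directly, the relevant \emph{model subspace} is $\mathcal{M}=\{\theta:(B\theta)_{\mathcal{S}^c_B}=0\}$, and the goal is to produce a primal--dual pair $(\widehat\Theta_{ij},\widehat z)$ satisfying the zero-subgradient (stationarity) condition relating $\widehat\Theta_{ij}$, $y$, and $B^\top\widehat z$, with $\widehat z\in\partial\|\cdot\|_1(B\widehat\Theta_{ij})$, such that $(B\widehat\Theta_{ij})_{\mathcal{S}^c_B}=0$ and the strict dual feasibility $\|\widehat z_{\mathcal{S}^c_B}\|_\infty<1$ holds. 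This is the generalized-Lasso analogue of the primal--dual witness argument used in the proof of Proposition~\ref{thm_smooth_det}, and is exactly the mechanism behind Corollary 4.2 of~\cite{lee_2013}.

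I would carry out the witness construction in three steps: fix $\widehat z_{\mathcal{S}_B}=\operatorname{sign}\left((B\Theta^\star_{ij})_{\mathcal{S}_B}\right)$; solve the restricted (oracle) problem obtained by enforcing $(B\theta)_{\mathcal{S}^c_B}=0$ to obtain $\widehat\Theta_{ij}$; and then solve the remaining stationarity equations for $\widehat z_{\mathcal{S}^c_B}$. Because $B$ is \emph{not} invertible, this last step is where the Moore--Penrose pseudo-inverse $B_{\mathcal{S}_B:}^{\dagger}$ enters: eliminating $\widehat\Theta_{ij}-\Theta^\star_{ij}$ expresses $\widehat z_{\mathcal{S}^c_B}$ as the sum of a deterministic certificate term $B_{\mathcal{S}^c_B:}B_{\mathcal{S}_B:}^{\dagger}\operatorname{sign}\left((B\Theta^\star_{ij})_{\mathcal{S}_B}\right)$ and a noise term linear in $w$, whose $\ell_\infty$-magnitude is controlled by $\kappa_{\mathrm{IC}}\|w\|_\infty/\mu$ (recall $\kappa_{\mathrm{IC}}$ is precisely the relevant $\infty\!\to\!\infty$ operator norm of the certificate map).

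Next, strict dual feasibility would follow by combining the irrepresentability assumption, which bounds the deterministic term by $1-\alpha$, with the noise hypothesis $\|\widetilde F^*_{ij}-\Theta^\star_{ij}\|_\infty<\tfrac{\alpha}{8\kappa_{\mathrm{IC}}}\mu$, which forces the noise contribution to be a small fraction of $\alpha$; together these yield $\|\widehat z_{\mathcal{S}^c_B}\|_\infty<1$. Standard strict-complementarity reasoning then upgrades this to uniqueness of $\widehat\Theta_{ij}$ and to the sparsistency claim $(B\widehat\Theta_{ij})_{\mathcal{S}^c_B}=0$. For the estimation error, I would use that $\widehat\Theta_{ij}-\Theta^\star_{ij}\in\mathcal{M}$: restricting the stationarity identity to $\mathcal{M}$ and invoking the definitions of $\kappa_\rho$ and $\kappa_\varrho$ (which bound $\|B\theta\|_1$ and $\|\theta\|_1$ by $\|\theta\|_2$ on $\mathcal{M}$) converts the $\mu$-scale dual estimate into the stated bound $\|\widehat\Theta_{ij}-\Theta^\star_{ij}\|_2\leq 2\left(\kappa_\rho+\tfrac{\alpha}{4}\tfrac{\kappa_\varrho}{\kappa_{\mathrm{IC}}}\right)\mu$.

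The main obstacle is the non-invertibility of $B$: unlike the synthesis Lasso in Proposition~\ref{thm_smooth_det}, one cannot simply invert a Gram submatrix, so the dual certificate must be built through $B_{\mathcal{S}_B:}^{\dagger}$ and its feasibility verified via the compatibility-type constants $\kappa_{\mathrm{IC}}$, $\kappa_\rho$, $\kappa_\varrho$ restricted to the subspace $\mathcal{M}$. The delicate part is bookkeeping the constants so that the noise slack $\tfrac{\alpha}{8\kappa_{\mathrm{IC}}}\mu$ is simultaneously enough to preserve strict feasibility and sharp enough to produce the claimed error constant; this is exactly what is packaged in~\cite[Corollary~4.2]{lee_2013}, so in practice I would verify that our $B$, $\mathcal{S}_B$, and noise vector $w$ meet its hypotheses and then invoke it, with the reconstruction above supplying the conceptual content.
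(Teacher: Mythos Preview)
Your proposal is correct and in fact goes beyond what the paper does: the paper offers no proof of this lemma at all but simply quotes it verbatim as Corollary~4.2 of~\cite{lee_2013} and uses it as a black box in the proof of Proposition~\ref{prop_sparse_det}. Your sketch of the generalized-Lasso primal--dual witness argument (identity design, analysis operator $B$, dual certificate built through $B_{\mathcal{S}_B:}^{\dagger}$, strict dual feasibility from IC plus the $\ell_\infty$ noise bound, and the $\kappa_\rho,\kappa_\varrho$ conversion on the model subspace $\mathcal{M}$) is an accurate reconstruction of the mechanism in~\cite{lee_2013}, and your final remark---that one should just check the hypotheses and invoke the cited corollary---is exactly the paper's approach.
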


Based on the above lemma, we proceed with the proof of Proposition~\ref{prop_sparse_det}
\vspace{2mm}

\noindent{\it Proof of Theorem~\ref{prop_sparse_det}.}
Based on our assumptions, Lemma~\ref{lem:l1_consistent} can be invoked to show that $\supp(\widehat\Theta_{ij})\subseteq\supp(\Theta^\star_{ij})$ and $\supp(\widehat\Theta_{k;ij}-\widehat\Theta_{l;ij})\subseteq\supp(\Theta^\star_{k;ij}-\Theta^\star_{l;ij})$. Now, it remains to prove the upper bound on the estimation error, as well as $\supp(\Theta^\star_{ij})\subseteq\supp(\widehat\Theta_{ij})$ and $\supp(\Theta^\star_{k;ij}-\Theta^\star_{l;ij})\subseteq\supp(\widehat\Theta_{k;ij}-\widehat\Theta_{l;ij})$. First, it is easy to verify that 
$\kappa_\rho\leq \sqrt{2\norm{W}_{\max}D_0}+\sqrt{S_0}$ and $\kappa_{\varrho}\leq \sqrt{S_0}$. Moreover, by setting $z_i = 1$ for some $i\in \mathcal{S}^c_{B}$, one can easily verify that $\kappa_{IC}\geq 1$. Therefore, according to Lemma~\ref{lem:l1_consistent}, we have
\begin{align*}
    \left\|\widehat\Theta_{ij}-\Theta_{ij}^{\star}\right\|_{2} &\leq 2\left(\kappa_{\rho}+\frac{\alpha}{4} \frac{\kappa_{\varrho}}{\kappa_{\mathrm{IC}}}\right) \mu\\
    &\leq 2\left(\sqrt{2\norm{W}_{\max}D_0}+\sqrt{S_0}+\frac{\sqrt{S_0}}{4}\right)\mu\\
    &\leq 2\left(\sqrt{2\norm{W}_{\max}D_0}+1.25\sqrt{S_0}\right)\mu,
\end{align*}
where in the second inequality, we used $\kappa_\rho\leq \sqrt{2\norm{W}_{\max}D_0}+\sqrt{S_0}$, $\kappa_{\varrho}\leq \sqrt{S_0}$, $\alpha\leq 1$, and $\kappa_{IC}\geq 1$.

Now, it suffices to show that $\supp(\Theta^\star_{ij})\subseteq\supp(\widehat\Theta_{ij})$ and $\supp(\Theta^\star_{k;ij}-\Theta^\star_{l;ij})\subseteq\supp(\widehat\Theta_{k;ij}-\widehat\Theta_{l;ij})$. Suppose that $\Theta^\star_{k;ij}\not=0$ for some $k$. Then, one can write
\begin{align*}
	    |\widehat{\Theta}_{k;ij}| &\geq |{\Theta^\star_{k;ij}}| \!-\! |\widehat{\Theta}_{k;ij}\!-\!{\Theta^\star_{k;ij}}|\\
	    &\geq \Theta_{\min;ij} \!-\! 2\left(\sqrt{2\norm{W}_{\max}D_0}+1.25\sqrt{S_0}\right)\mu > 0,
	\end{align*}
	where the last inequality follows from the assumption $\mu< \Theta_{\min;ij}/\left(2\left(\sqrt{2\norm{W}_{\max}D_0}+1.25\sqrt{S_0}\right)\right)$. This implies that $\widehat{\Theta}_{k;ij}\not=0$, and hence, $\supp(\Theta^\star_{ij})\subseteq\supp(\widehat\Theta_{ij})$. Similarly, suppose $\Theta^\star_{k;ij} - \Theta^\star_{l;ij}\not=0$ for some $(k,l)$. One can write
	\begin{align*}
	    |\widehat{\Theta}_{k;ij}\!-\!\widehat{\Theta}_{l;ij}| &\!\geq\! |\Theta^\star_{k;ij}\!-\!\Theta^\star_{l;ij}|\!-\!|\widehat\Theta_{k;ij} \!-\! \Theta^\star_{k;ij}| \!-\!|\widehat\Theta_{l;ij} \!-\!\Theta^\star_{l;ij}|\\
	    \geq& \Delta\Theta_{\min; ij} \!-\! 4\left(\sqrt{2\norm{W}_{\max}D_0}\!+\!1.25\sqrt{S_0}\right)\mu\\
	    >& 0,
	\end{align*}
	where the first inequality is due to triangle inequality and the last inequality follows from $\mu< \Delta\Theta_{\min; ij}/\left(4\left(\sqrt{2\norm{W}_{\max}D_0}\!+\!1.25\sqrt{S_0}\right)\right)$. This in turn implies that $\supp(\Theta^\star_{k;ij}-\Theta^\star_{l;ij})\subseteq\supp(\widehat\Theta_{k;ij}-\widehat\Theta_{l;ij})$, which completes the proof.$\hfill\square$

\subsection{Proof of Lemma~\ref{lem:FF}}\label{app_lem:FF}
Using the standard properties of adjucate matrices, one can obtain $F_{vv}$ from $F_{uu}$ after the following steps:
    \begin{enumerate}
        \item Move column $v$ of $M_{uu}$ to position $u$, so that it becomes the $u$-th column of $M_{uu}$. 
        \item Move row $v$ of $M_{uu}$ to position $u$, so that it becomes the $u$-th row of $M_{uu}$.
    \end{enumerate}
    To be more specific, the column and row indices of $M_{uu}$ change from $\{1,2,...,u-1,u+1,...,v-1,v,v+1,...,K\}$ to $\{1,2,...,u-1,v,u+1,...,v-1,v+1,...,K\}$. Since the column/row indices of $M_{vv}$ are $\{1,2,...,u-1,u,u+1,...,v-1,v+1,...,K\}$, we only need to show that $M_{uu}$ and $M_{vv}$ are the same at the $u$-th column and $u$-th row. Moreover, due to symmetry, it suffices to show that $M_{uu}$ and $M_{vv}$ are the same at the $u$-th column. Now, the $u$-th column of $M_{uu}$ is $\left(B_{\mathcal{S}_B:}^\top B_{\mathcal{S}_B:}\right)_{\{1,...,u-1,v,u+1,...,v-1,v+1,...,K\}, v}$ and the $u$-th column of $M_{vv}$ is $\left(B_{\mathcal{S}_B:}^\top B_{\mathcal{S}_B:}\right)_{\{1,...,K\}\backslash v, u}$. From the structure of $B_{\mathcal{S}_B:}^\top B_{\mathcal{S}_B:}$ and the fact that $\Theta_{u;ij}^\star = \Theta_{v;ij}^\star$, we have $\left(B_{\mathcal{S}_B:}^\top B_{\mathcal{S}_B:}\right)_{\{1,...u,...,v,...,K\}, u} = \left(B_{\mathcal{S}_B:}^\top B_{\mathcal{S}_B:}\right)_{\{1,...v,...u,...,K\}, v}$. Thus one can write 
    \begin{equation}
        \begin{aligned}
            {} & \left(B_{\mathcal{S}_B:}^\top B_{\mathcal{S}_B:}\right)_{\{1,...,u-1,v,u+1,...,v-1,v+1,...,K\}, v}
            = \left(B_{\mathcal{S}_B:}^\top B_{\mathcal{S}_B:}\right)_{\{1,...u-1,u,u+1,...,v-1,v+1,...,K\}, u} \\
            &= \left(B_{\mathcal{S}_B:}^\top B_{\mathcal{S}_B:}\right)_{\{1,...,K\}\backslash v, u}
        \end{aligned}
    \end{equation}
    Therefore, $M_{vv}$ can be obtained from $M_{uu}$ using the above two steps.
   This implies that $M_{uu}(-1)^{2|v-u-1|} = M_{vv}$ which in turn leads to $M_{uu} = M_{vv}$. On the other hand, due to the definition of $F_{uu}$ and $F_{vv}$, we have $F_{uu} = (-1)^{2u}M_{uu} = (-1)^{2v}M_{vv} = F_{vv}$. The second statement can be proved in a similar fashion. The details are omitted for brevity.$\hfill\square$

\bibliographystyle{plainnat}
\bibliography{abc}

\end{document}